\documentclass[11pt]{amsart}

\usepackage{amsmath,amssymb,amsthm,amsfonts,dsfont}
\usepackage{bm,graphicx,hyperref,cancel}
\usepackage[capitalise]{cleveref}
\usepackage{mathtools}
\usepackage{multirow}
\usepackage{braket}

\newcommand{\bvec}[1]{\ensuremath{\mathbf{#1}}}

\newcommand{\ZZ}{\mathbb{Z}}

\newcommand{\RR}{\mathbb{R}}
\newcommand{\CC}{\mathbb{C}}

\newcommand{\One}{\mathds{1}}

\newcommand{\fraksl}{{\mathfrak{sl}}}

\newcommand{\vq}{{\mathbf{q}}}

\DeclareMathOperator{\spn}{span}

\renewcommand{\Re}{\operatorname{Re}}
\renewcommand{\Im}{\operatorname{Im}}
\DeclareMathOperator{\diag}{diag}

\usepackage{etoolbox}
% Used to turn on/off smaller margins. To set standard margins, use \toggletrue{stdmargins}
% See paper_preamble.tex for additional details
\newtoggle{stdmargins}

\usepackage{tcolorbox}

\newcommand{\dd}{\,\ensuremath{\textrm{d}}}

\usepackage{tikz}
\usetikzlibrary{arrows,shapes,calc,positioning}

\usepackage[calc]{datetime2}
\DTMnewdatestyle{mydate}{%
}
\DTMsetdatestyle{mydate}

\usepackage{xcolor}
\definecolor{scarlet}{RGB}{255, 36, 0}
\definecolor{purp}{RGB}{160, 32, 240}

\newcommand{\idty}{\mathds{1}}

\usepackage{enumitem}

\newtheorem{assumption}{Assumption}
\newtheorem{theorem}{Theorem}
\newtheorem*{theorem*}{Theorem}
\newtheorem*{conjecture*}{Conjecture}
\newtheorem{lemma}{Lemma}
\newtheorem*{lemma*}{Lemma}
\newtheorem{proposition}{Proposition}
\newtheorem*{proposition*}{Proposition}
\newtheorem{corollary}{Corollary}

\newtheorem{remark}{Remark}

\theoremstyle{definition}
\newtheorem{definition}{Definition}

\usepackage[margin=1in]{geometry}

\crefname{proposition}{Proposition}{Propositions}

\usepackage[boxsize=1em,centertableaux]{ytableau}
\DeclareMathOperator{\Sym}{Sym}
\DeclareMathOperator{\gen}{gen}

\title[The Many-Body Ground State Manifold of the FBI Hamiltonian for MATBG]{The Many-Body Ground State Manifold of Flat Band Interacting Hamiltonian for Magic Angle Twisted Bilayer Graphene}

\author{Kevin~D. Stubbs}
\address{Department of Mathematics, University of California, Berkeley, CA 94720, USA}
\email{kevin.d.stubbs@gmail.com}

\author{Michael Ragone}
\address{Department of Mathematics, University of California, Berkeley, CA 94720, USA}
\email{micragone@berkeley.edu}

\author{Allan H. MacDonald}
\address{Department of Physics, University of Texas at Austin, Austin, Texas 78712, USA}
\email{macdpc@physics.utexas.edu}

\author{Lin Lin}
\address{Department of Mathematics, University of California, Berkeley, CA 94720, USA}
\address{Applied Mathematics and Computational Research Division, Lawrence Berkeley National Laboratory, Berkeley, CA 94720, USA}
\email{linlin@math.berkeley.edu}

\allowdisplaybreaks

\begin{document}

\maketitle

\begin{abstract}
At a magic relative twist angle, magic angle twisted bilayer graphene (MATBG) has an octet of flat bands that can host strong correlation physics when partially filled.
A key theoretical discovery in MATBG is the existence of ferromagnetic Slater determinants as exact ground states of the corresponding flat band interacting (FBI) Hamiltonian. The FBI Hamiltonian describes the behavior of electrons that interact with each other in a high-dimensional space, and is constructed from the band structure of the non-interacting Bistritzer--MacDonald model at the chiral limit.
A key property of the FBI Hamiltonian for MATBG is that it is frustration free and can be written as a sum of non-commuting terms.
In this work, we provide a complete characterization of the ground state manifold  of the FBI Hamiltonian, proving that it is precisely the linear span of such ferromagnetic Slater determinants.

\end{abstract}

\section{Introduction}

Magic angle twisted bilayer graphene (MATBG), as well as other moir\'e materials, provide an exciting new platform towards improved understanding of materials with strong electron-electron interactions \cite{AndreiMacDonald2020,AndreiEfetovJarilloHerreroEtAl2021}.
Recent experiments on MATBG have shown a wealth of strongly correlated behavior such as correlated insulator phases at integer fillings and superconducting phases away from integer fillings \cite{2018Nature,Serlin,CaoFatemiDemir2018,LuStepanovYang2019,YankowitzChenPolshyn2019}.
In a certain simplified limit, called the chiral limit, the non-interacting model for MATBG gains an additional symmetry which allows one to rigorously prove that this model has exactly  flat bands \cite{TarnopolskyKruchkovVishwanath2019, BeckerEmbreeWittstenEtAl2021, becker2023chiral,BeckerHumbertZworski2023,BeckerHumbertZworski2022a, BeckerHumbertZworski2022, WatsonLuskin2021}. 
The Bloch states  can be labeled by chirality, valley, and spin quantum numbers, each of which has two possible values, $\pm 1$, resulting in a flat band octet. Based on these flat bands, previous work has proposed flat band interacting (FBI) model, which is an interacting model for MATBG at the chiral limit~\cite{BultinckKhalafLiuEtAl2020,ChatterjeeBultinckZaletel2020,SoejimaParkerBultinckEtAl2020,XieMacDonald2020,WuSarma2020,DasLuHerzog-Arbeitman2021,BernevigSongRegnaultEtAl2021,LiuKhalafLee2021,SaitoGeRademaker2021,JiangLaiWatanabe2019,PotaszXieMacDonald2021,LiuKhalafLeeEtAl2021,FaulstichStubbsZhuEtAl2023,BeckerLinStubbs2023}.
One of the most remarkable features of the FBI Hamiltonian for MATBG is that the  Hamiltonian includes strong long range electron-electron interactions, yet there exist ferromagnetic type Slater determinant states (also called Hartree--Fock states), which are exact ground states of the  Hamiltonian \cite{BeckerLinStubbs2023,KangVafek2019,BernevigSongRegnaultEtAl2021}.
This result is surprising: when electron-electron interactions are strong one would typically expect the ground states are far from being a Slater determinant, such as in the case of Hubbard models.

The FBI Hamiltonian with 8 flat bands (also called the spinful, valleyful model) is relatively complex to study.
Within the 8 flat bands, there are two 4 dimensional subspaces related to each other by the \(C_{2z} \mathcal{T}\) symmetry, which is a combination of \(180^{\circ}\) in-plane rotation and spinless time reversal symmetry.
The FBI Hamiltonian commutes with rotations within each of these 4 dimensional subspaces so this model has a large \(\mathsf{U}(4) \times \mathsf{U}(4)\) symmetry; often referred to as the ``hidden symmetry'' of MATBG \cite{BultinckKhalafLiuEtAl2020,LianSongRegnaultEtAl2021}.
As a consequence of this symmetry, given any Slater determinant ground state, applying a $\mathsf{U}(4) \times \mathsf{U}(4)$ transformation yields another Slater determinant ground state.
Furthermore, any linear combination of these Slater determinant ground states is also a ground state.

To facilitate analysis, simplified models have also been considered in the literature:

\begin{enumerate}

\item A spinless, valleyless model (also called the single spin, single valley model) that involves only the chiral flavor.

\item A spinless, valleyful model (also called the single spin, valleyful model) that includes both chiral and valley flavors.

\end{enumerate}

 To the best of our knowledge, it remains an open question whether the ground state manifold of the FBI Hamiltonian is fully spanned by linear combinations of Slater determinant ground states, both in the full model and in the simplified models above.

\subsection{Contribution}

The main result of this paper is to provide an affirmative answer to the question above, which was conjectured in \cite{StubbsBeckerLin2024}:

\begin{theorem}[Main Result, informal version of~\cref{thm:single-valley,prop:valleyful-case,prop:valleyful-spinful-case}]
  \label{thm:main-result}
  A many-body state is a ground state of the FBI Hamiltonian for MATBG if and only if it can be written as a linear combination of Slater determinant states, each of which is a ground state of the FBI Hamiltonian.
\end{theorem}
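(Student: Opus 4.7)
The forward implication, that any linear combination of Slater-determinant ground states lies in the ground-state manifold, is immediate from linearity of $H$ and the fact that the ground space is a linear subspace. The content of the theorem is the reverse inclusion, which I would approach in three stages matching the three referenced results of increasing complexity: single-spin single-valley, then single-spin valleyful, then the full spinful valleyful model.

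The first stage is the foundation. The FBI Hamiltonian is frustration free, so it can be written in the manifestly positive form $H = \sum_\alpha T_\alpha^\dagger T_\alpha$ where the $T_\alpha$ are built from the flat-band form factors. Consequently a many-body state $\ket{\Psi}$ is a ground state if and only if $T_\alpha \ket{\Psi} = 0$ for every $\alpha$. The task reduces to characterizing the joint kernel of the family $\{T_\alpha\}$. I would expand $\ket{\Psi}$ in the occupation-number basis indexed by crystal momentum $\vk$ and work out that each equation $T_\alpha \ket{\Psi} = 0$ imposes a homogeneous linear relation among the amplitudes at occupation configurations related by particle-hole moves. The goal is to show that the only solutions are supported on the known ferromagnetic Slater determinants (the vacuum, the fully-filled state, and their orbits under the residual symmetry).

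The subsequent stages would leverage the first. The valleyful and spinful Hamiltonians inherit a decomposition into frustration-free terms which, up to the hidden $\mathsf{U}(4) \times \mathsf{U}(4)$ symmetry, act sector-by-sector. I would first show that a ground state must be annihilated within each sector by the Stage 1 operators, thereby reducing the joint kernel to a tensor-product-like construction, and then use the symmetry group's action to generate all ground states by orbit closure from the basic Slater determinants. The representation theory of $\mathsf{U}(4) \times \mathsf{U}(4)$ on the antisymmetric Fock space is well understood, so this step is largely bookkeeping once Stage 1 is established.

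The main obstacle, and the technically hardest part, is Stage 1. The form-factor operators $T_\alpha$ are not diagonal or block-diagonal in any naive momentum basis and couple configurations differing in many occupation numbers, so the kernel cannot be read off mode by mode. My plan is to proceed by induction on particle number $N$ and, within each $N$-sector, on the Hamming distance of an occupation configuration from a reference ferromagnetic determinant. The frustration-free relations $T_\alpha \ket{\Psi} = 0$ should propagate constraints outward from the reference determinant, forcing all amplitudes of a candidate null state to match those of a Slater determinant up to a global scalar; the $C_{2z}\mathcal{T}$ symmetry may be useful to split $T_\alpha$ into more tractable blocks. A secondary subtlety is that at generic integer filling there are several disjoint $\mathsf{U}(4) \times \mathsf{U}(4)$ orbits of Slater determinant ground states, and one must verify that the kernel contains no additional states 'between' these orbits.
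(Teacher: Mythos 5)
The overall skeleton is right: frustration-freeness reduces the problem to the joint kernel of the density operators $\widehat{\rho}(\bvec{q}')$, and the hidden symmetry plus representation theory handles the valleyful and spinful cases once the single-sector structure is pinned down. But your Stage 1 has a genuine gap. Each equation $\widehat{\rho}(\bvec{q}')\ket{\Psi}=0$ is a \emph{single} linear relation that sums the hoppings $\hat{f}^{\dagger}_{\pm,\bvec{k}}\hat{f}_{\pm,(\bvec{k}+\bvec{q}')}$ over all $\bvec{k}\in\mathcal{K}$, weighted by the form factors $a_{\bvec{k}}(\bvec{q}')$; a Hamming-distance induction from a reference determinant cannot disentangle these summands, because the images of the different hoppings applied to a generic state are not orthogonal. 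The paper resolves this with two ingredients your plan lacks: (i) for $\bvec{q}'\in\Gamma^{*}$ it proves that the infinite matrix $\left(a_{\bvec{k}}(\bvec{G})\right)_{\bvec{G}\in\Gamma^{*},\,\bvec{k}\in\mathcal{K}}$ has full column rank --- a nontrivial linear-independence statement about the densities $\|u_{\bvec{k}}(\bvec{r})\|^{2}$ proved via explicit Jacobi theta-function identities --- which forces every ground state to be a joint eigenvector of $\hat{n}_{+,\bvec{k}}+\hat{n}_{-,(-\bvec{k})}$ with eigenvalue $N_{occ}$; and (ii) for $\bvec{q}'\notin\Gamma^{*}$ it uses the double commutator $[\hat{\mathcal{N}}_{\bvec{k}+\bvec{q}'},[\hat{\mathcal{N}}_{\bvec{k}},\widehat{\rho}(\bvec{q}')]]$ together with an orthogonality-of-eigenspaces argument to isolate each individual chiral transfer operator, and then a chaining argument (relying on $a_{\bvec{k}}(\bvec{q}')\neq 0$ for small $|\bvec{q}'|$, hence a grid-spacing assumption) to reach all pairs $\bvec{k},\bvec{k}'$. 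Without something equivalent to (i) and (ii), the constraints do not ``propagate'' in the way you describe; this is where the real work of the theorem lives.

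Two smaller points. First, in the spinless, valleyless model the ground states are not ``the vacuum, the fully-filled state, and their orbits'': they are the two half-filled, fully chirally polarized determinants $\prod_{\bvec{k}\in\mathcal{K}}\hat{f}^{\dagger}_{\pm,\bvec{k}}\ket{\mathrm{vac}}$, and the vacuum and fully filled state are already excluded by $\widehat{\rho}(\bvec{0})\ket{\Phi}=(\hat{N}-N_{occ}N_{\bvec{k}})\ket{\Phi}=0$. Second, calling the representation-theoretic stage ``bookkeeping'' undersells it: the exact issue you flag --- ruling out states ``between'' the Slater-determinant orbits --- is handled by an inductive Littlewood--Richardson decomposition showing that only the rectangular Young diagram survives the action of the chiral transfer operators, which requires explicit computations on highest weight vectors of each competing irrep.
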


Our results consist of three main components: \cref{thm:single-valley} addresses the ground states of the spinless, valleyless model, \cref{prop:valleyful-case} examines the spinless, valleyful model, and \cref{prop:valleyful-spinful-case} analyzes the full spinful, valleyful model. In all cases, we prove that any ground state must be expressible as a linear combination of Slater determinant states, each of which is itself a ground state of the FBI Hamiltonian. The dimension of the ground state manifold differs in the three settings and we provide an explicit dimension count in all these cases.

It is known that the Slater determinant ground states are \emph{uniformly half-filled}, meaning that the number of electrons is uniform at each momentum point $\bvec{k}$ in the Brillouin zone. Consequently, any linear combination of these Slater determinant states must also be uniformly half-filled. Therefore our first step is to establish that any state in the ground state manifold must be uniformly half-filled in all three settings. A key insight is that any ground state must vanish under the application of chiral transfer operators; this imposes strong constraints for a many-body state to be a ground state of the FBI Hamiltonian for MATBG (\cref{prop:equiv-characterization}). The proof of~\cref{prop:equiv-characterization} makes use of the properties of the Jacobi-$\theta$ functions, which appear in other contexts when studying lowest Landau level physics~\cite{LedwithTarnpolskyKhalaf2020,MeraOzawa2024}.

Following this result, the characterization of the ground state manifold is straightforward in the spinless, valleyless case since there are only two possible Slater determinant ground states. However, the complexity increases significantly when a continuous ``hidden symmetry'' is present: $\mathsf{U}(2) \times \mathsf{U}(2)$ in the spinless, valleyful case and $\mathsf{U}(4) \times \mathsf{U}(4)$ in the spinful, valleyful case. To fully characterize the ground state manifold in these cases, we employ the highest weight theorem (\cref{thm:theorem-of-highest-wt}) from representation theory. Ultimately, we show that the ground state manifold is fully determined by a special irreducible representation of the $\mathsf{U}(2) \times \mathsf{U}(2)$ or $\mathsf{U}(4) \times \mathsf{U}(4)$ group acting on a tensor product space. This irreducible representation can be generated from any vector, such as a Slater determinant ground state. 
The dimension of the ground state manifold then follows directly from the hook length formula~\cite{FultonHarris2004}.

\subsection{Discussion}

Our results significantly extend the previous findings in \cite{BeckerLinStubbs2023,StubbsBeckerLin2024}, which are the only rigorous results on interacting electrons in MATBG to our knowledge, and focus solely on determining whether non-ferromagnetic Slater determinant ground states can exist. Notably, the techniques employed in \cite{BeckerLinStubbs2023,StubbsBeckerLin2024} differ entirely from this work. The proof in \cite{BeckerLinStubbs2023} assumes the absence of translation symmetry breaking and relies on an algebraic condition that must be satisfied by all nontrivial orthogonal projectors. While this condition can be computationally challenging to verify in general, it has potential applicability beyond MATBG. In contrast, \cite{StubbsBeckerLin2024} introduces a more computationally tractable condition specifically for MATBG, leveraging a set of Sylvester equations to explicitly rule out solutions that break translation symmetry. Neither of these approaches is applicable in this work for characterizing the many-body ground state manifold.
It is also worth noting that our theoretical results align with previous numerical studies, which have shown that the ground states of MATBG are well described by Slater determinant states well beyond this idealized setting~\cite{FaulstichStubbsZhuEtAl2023,SoejimaParkerBultinckEtAl2020,ChatterjeeBultinckZaletel2020,KwanWagnerBultinck2021,BultinckKhalafLiuEtAl2020}.

In the spinless, valleyless case, where the only degree of freedom is a chirality pseudospin, we prove that the system is essentially an Ising ferromagnet along the chiral axis.  Although one might expect, on heuristic grounds, that an interaction term with chirality-dependent form factors would select the fully polarized states, it is highly nontrivial to rule out the existence of any other degenerate many-body states. When valley and spin degrees of freedom are included, the situation becomes even richer. We prove that these degrees of freedom do not introduce any additional ``magnetic'' type anisotropy beyond that already present in the chirality sector. The study of the many-body magnetic anisotropy beyond the chiral model is an interesting direction for future research.

Our proof of \cref{prop:equiv-characterization} encounters subtle difficulties at the thermodynamic limit. Specifically, to prove \cref{prop:equiv-characterization}, we show that a certain linear transformation is invertible. However, with our current techniques, this argument breaks down in the thermodynamic limit even in the spinless, valleyless model, suggesting the possibility of new emergent behavior that is not captured by the present theory. We note that the vanishing neutral excitation gap is not solely due to the presence of Goldstone modes, which correspond to zero-energy excitations arising from continuous symmetry \cite{BernevigLianCowsik2022}.

Another important open question is understanding the energy gap between the ground state manifold and the remaining states, both in neutral excitations and in few-particle excitations. Recent studies have suggested the existence of a family of skyrmionic states, whose pairing could give rise to superconducting behavior \cite{ChatterjeeBultinckZaletel2020,KwanWagnerBultinck2021}. While neutral excitations are crucial for assessing the stability of Slater determinant ground states, few-particle excitations can provide deeper insight into the superconductivity observed away from integer fillings.

\subsection{Outline}
The remainder of this paper is organized as follows.
In \cref{sec:non-inter-model}, we recall some basic facts about the non-interacting Bistritzer-MacDonald model for MATBG at the chiral limit.
Additionally, in \cref{sec:valley-spin-degrees}, we discuss the chiral, valley, and spin degrees of freedom.
While these degrees of freedom do not meaningfully change the properties of the non-interacting model, they play an important role in the interacting model.
In \cref{sec:inter-model-chir}, we use the notations introduced in~\cref{sec:non-inter-model} to define the flat-band interacting (FBI) Hamiltonian for chiral MATBG introduced in previous work \cite{BeckerLinStubbs2023}.
We also introduce the \textit{chiral transfer operators} which allow us to consider models with valley and spin on the same footing as those without valley or spin.
Consequently, most of our results will be stated in terms of the chiral transfer operators.

After a brief overview describing the nature of the many-body ground states in \cref{sec:nature-many-body}, we prove the first main technical result, \cref{prop:equiv-characterization}, in~\cref{sec:equiv-characterization}.
Then in \cref{sec:repr-theory-ground}, we begin by discussing the implications of~\cref{prop:equiv-characterization}.
In particular, in \cref{sec:occup-vect-repr}, introduce the \textit{occupation vector representation} of many-body state which is used to directly apply the representation theory to many-body states.
The occupation vector representation, together with tools from the representation theory In~\cref{sec:tools-from-repr}, are then used in \cref{sec:valleyful-case,sec:vall-spinf-case} to prove the main result for the valleyful case and the valleyful and spinful case respectively.

This paper also contains three appendixes.
In~\cref{sec:lemma-linear-indep,sec:proof-gs-prop-2a}, we prove two key lemmas which are used as part of~\cref{sec:equiv-characterization}. Additional background materials on representation theory are provided in \cref{sec:tools-from-repr-appendix}.

\subsection*{Acknowledgments}
This work was supported by the Simons Targeted Grants in Mathematics and Physical Sciences on Moir\'e Materials Magic (K.D.S., A.H.M., L.L.). L.L. is a Simons Investigator in Mathematics. We thank Simon Becker and Eslam Khalaf for helpful discussions.

\section{Non-Interacting Model}
\label{sec:non-inter-model}

In this section, we introduce a few necessary facts about the non-interacting Bistritzer-MacDonald model for MATBG at the chiral limit, referred to as  chiral MATBG for short. For a more complete discussion we refer the readers to \cite{BeckerEmbreeWittsten2022} and \cite[Section 2]{BeckerLinStubbs2023}.
This model is described by a differential operator $H(\alpha)$, for some \(\alpha \in \mathbb{R}\), densely defined on $H^{1}(\RR^{2}; \CC^{4})$ with the following form:
\begin{equation}
  \label{eq:h-bm}
  H(\alpha) =
  \begin{bmatrix}
    0 & D(\alpha)^{\dagger} \\
    D(\alpha) & 0 
  \end{bmatrix}
  \quad
  \text{where}
  \quad
  D(\alpha) =
  \begin{bmatrix}
    i (\partial_{x_{1}} + i \partial_{x_{2}}) & \alpha e^{i x_{2}} U(\bvec{r}) \\[.5ex]
    \alpha e^{- i x_{2}} U(-\bvec{r}) & i (\partial_{x_{1}} + i \partial_{x_{2}}) + i
  \end{bmatrix}
\end{equation}
This Hamiltonian is associated with a lattice $\Gamma \subseteq \mathbb{R}^2$, called the moir{\'e} lattice, with generating vectors $\bvec{a}_{1} = \begin{bmatrix} 1 & 0 \end{bmatrix}^{\top}$ and $\bvec{a}_{2}  = \begin{bmatrix} -1/2 & \sqrt{3}/2\end{bmatrix}^{\top}$.
The parameter $\alpha$ is inversely proportional to the twist angle and \(U(\bvec{r})\) is a real analytic potential satisfying the following properties (where $\omega := e^{2 \pi i / 3}$ and $R_{3}$ denotes clockwise rotation by $\frac{2 \pi}{3}$)
\begin{equation}
  \label{eq:u-bm-properties}
  \begin{split}
    U(\bvec{r} + m \bvec{a}_{1} + n \bvec{a}_{2}) & = \omega^{m + n} U(\bvec{r}) \\
    U(R_{3} \bvec{r}) & = \omega U(\bvec{r}) \\
    U(x_{1}, x_{2}) & = \overline{U(x_1, -x_2)}
  \end{split}
\end{equation}
Using the properties of $U$, it can be verified that $H(\alpha)$ is periodic with respect to the lattice $\Gamma$.
Let $\Gamma^{*}$ denote the dual lattice to $\Gamma$ and let $\Omega^{*} := \RR^{2} / \Gamma^{*}$ denote the unit cell of the dual lattice.

Since $H(\alpha)$ is periodic with respect to $\Gamma$, we can define the Bloch-Floquet transform of $H_{\bvec{k}}(\alpha) := e^{-i \bvec{k} \cdot \bvec{r}} H(\alpha) e^{i \bvec{k} \cdot \bvec{r}}$ which is densely defined on $H^{1}(\RR^{2}; \CC^{4})$ and has compact resolvent for all $\bvec{k} \in \Omega^{*}$.
By Bloch's theorem, for each $\bvec{k} \in \Omega^{*}$ there exists a discrete set $\mathcal{I} \subseteq \ZZ$ and a set of functions $\{ u_{n \bvec{k}}(\bvec{r}) : \bvec{k} \in \Omega^{*}, n \in \mathcal{I} \}$ so that 
\begin{equation}
  \label{eq:h-bm-bloch}
  \begin{split}
   & \hspace{2em} H_{\bvec{k}}(\alpha) u_{n \bvec{k}}(\bvec{r}) = \epsilon_{n \bvec{k}} u_{n \bvec{k}}(\bvec{r}) \\
   &\hspace{2em} u_{n \bvec{k}}(\bvec{r} + \bvec{a}) = u_{n \bvec{k}}(\bvec{r}) \qquad \forall \bvec{a} \in \Gamma^{*} \\
   & \int_{\RR^{2}} \braket{u_{m \bvec{k}}(\bvec{r}), u_{n \bvec{k}}(\bvec{r})}_{\CC^{4}} \dd{\bvec{r}} = \delta_{mn}
  \end{split}
\end{equation}
where $\braket{\cdot, \cdot}_{\CC^{4}}$ denotes the inner product between vectors in $\CC^{4}$ and for each $\bvec{k} \in \Omega^{*}$ the Bloch eigenvalues $\epsilon_{n \bvec{k}}$ satisfy $\epsilon_{(n-1) \mathbf{k}} \leq \epsilon_{n \mathbf{k}} \leq \epsilon_{(n+1) \mathbf{k}}$ for all $n \in \mathcal{I}$ (i.e. they are ordered in non-decreasing order).
A key fact about about the Hamiltonian defined in~\cref{eq:h-bm} is that when $U$ satisfies the properties~\cref{eq:u-bm-properties}, then the Hamiltonian $H(\alpha)$ can exhibit exactly flat bands:
\begin{theorem}[\cite{BeckerEmbreeWittsten2022,TarnopolskyKruchkovVishwanath2019,WatsonLuskin2021}]
  Let $H(\alpha)$ be as in~\cref{eq:h-bm} where $U$ is a real analytic function satisfying the symmetry conditions~\cref{eq:u-bm-properties} and let $\epsilon_{n \bvec{k}}$ be the corresponding Bloch eigenvalues for $H(\alpha)$.
  There exists a non-empty discrete set $\mathcal{A} \subseteq \CC$ so that $\alpha \in \mathcal{A}$ implies the set
  \begin{equation}
    \mathcal{N} = \left\{ n \in \mathcal{I} : \epsilon_{n \bvec{k}} = 0, ~\forall \bvec{k} \in \Omega^{*} \right\}
  \end{equation}
  is non-empty.
\end{theorem}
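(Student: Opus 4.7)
The plan is to exploit the chiral block structure of $H(\alpha)$ and reduce the flat band question to the existence of a protected family of zero modes of $D(\alpha)$ indexed by $\bvec{k} \in \Omega^{*}$. First, I would observe that because $H(\alpha)$ is off-diagonal with blocks $D(\alpha)$ and $D(\alpha)^{\dagger}$, a state $u$ satisfies $H_{\bvec{k}}(\alpha) u = 0$ if and only if $u$ lies in $\ker D_{\bvec{k}}(\alpha) \oplus \ker D_{\bvec{k}}(\alpha)^{\dagger}$, where $D_{\bvec{k}}(\alpha) := e^{-i \bvec{k} \cdot \bvec{r}} D(\alpha) e^{i \bvec{k} \cdot \bvec{r}}$. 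Thus to produce a flat band it suffices to exhibit a nonzero element of $\ker D_{\bvec{k}}(\alpha)$ (say) for every $\bvec{k}$, depending continuously on $\bvec{k}$.

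Next, I would work at the distinguished corner $\bvec{k} = K$ of the Brillouin zone, where the rotational symmetry forces one automatic zero mode $u_{K}(\bvec{r})$ of $D(\alpha)$ for every $\alpha$: this follows from the $C_{3}$ and $C_{2}\mathcal{T}$ symmetries encoded in \eqref{eq:u-bm-properties} together with the anti-unitary particle-hole-like symmetry of the chiral model. The heart of the argument is then to transport this single zero mode at $K$ to a zero mode at every other $\bvec{k} \in \Omega^{*}$. The key lemma here is the holomorphic trivialization used by Tarnopolsky--Kruchkov--Vishwanath and made rigorous in \cite{BeckerEmbreeWittsten2022,WatsonLuskin2021}: if $u_{K}$ has an isolated simple zero at some point $\bvec{r}_{0}$ in the moir\'e cell, then for every $\bvec{k}$ one can define
\begin{equation}
  u_{n\bvec{k}}(\bvec{r}) \;=\; f_{\bvec{k}}(\bvec{r}) \, u_{K}(\bvec{r}),
\end{equation}
where $f_{\bvec{k}}$ is built from a ratio of Jacobi $\theta$-functions chosen so that $f_{\bvec{k}}$ has a simple pole at $\bvec{r}_{0}$ cancelling the zero of $u_{K}$ and transforms correctly under $\Gamma$-translations to yield a Bloch state at quasimomentum $\bvec{k}$. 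Because $D(\alpha)$ acts as $2i\bar{\partial}$ in the chiral gauge (up to the potential piece, which annihilates $u_{K}$), multiplying by the meromorphic $f_{\bvec{k}}$ preserves the zero mode property, and one obtains a continuous family $\{u_{\bvec{k}}\}_{\bvec{k}\in\Omega^{*}}$ of normalized zero modes, producing a flat band at energy $0$.

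Finally, I would address the existence of $\alpha$ for which $u_{K}$ has the required isolated zero (equivalently, for which the kernel of $D_{K}(\alpha)$ is large enough that the chosen zero mode actually vanishes somewhere). Using the Fourier expansion of $U$ and the symmetry constraints \eqref{eq:u-bm-properties}, one reduces the zero-mode equation at $K$ to an equation of the form $(\mathcal{T}(\alpha) \psi)(\bvec{r}) = 0$ for a compact operator $\mathcal{T}(\alpha)$ depending analytically on $\alpha$. The analytic Fredholm theorem then shows that the set
\begin{equation}
  \mathcal{A} \;=\; \{ \alpha \in \mathbb{C} : \dim \ker D_{K}(\alpha) > \dim \ker D_{K}(0) \}
\end{equation}
is either all of $\mathbb{C}$ or a discrete subset; a direct perturbative or numerical check at a single value (the smallest magic angle) rules out the first possibility, yielding that $\mathcal{A}$ is discrete and non-empty.

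\textbf{Expected main obstacle.} The conceptually routine part is the block decomposition and the symmetry-forced zero mode at $K$. The substantive step is the theta-function transport argument together with verifying that the generated Bloch family is smooth and $L^{2}$-normalized on $\Omega^{*}$, and in particular that the zeros of $u_{K}$ are simple and located at the high-symmetry point dictated by the $C_{3}$-representation --- without this, the $f_{\bvec{k}}$ construction produces either singular states or the zero state. Making the analytic Fredholm step discrete (as opposed to merely closed) is what ultimately forces $\mathcal{A}$ to be a genuine discrete set of magic angles, and this is where I would expect to spend the most care.
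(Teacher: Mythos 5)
This statement is not proved in the paper at all: it is quoted as a known result from \cite{BeckerEmbreeWittsten2022,TarnopolskyKruchkovVishwanath2019,WatsonLuskin2021}, so there is no in-paper argument to compare against. Your sketch is, however, a faithful outline of the strategy in those references: the chiral block decomposition reducing $H_{\bvec{k}}u=0$ to $\ker D_{\bvec{k}}\oplus\ker D_{\bvec{k}}^{\dagger}$, the symmetry-protected zero mode at the Dirac momentum, the Jacobi $\theta$-function transport that turns a single zero of $u_{K}$ at the stacking point into a Bloch family of zero modes, and a Fredholm/holomorphy argument for discreteness of the magic set. These are exactly the ingredients of Tarnopolsky--Kruchkov--Vishwanath and of the rigorous treatments by Becker--Embree--Wittsten and Watson--Luskin.

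Two places in your write-up are looser than what those references actually do, and would need tightening in a full proof. First, the analytic Fredholm step is not directly applied to the condition $\dim\ker D_{K}(\alpha)>\dim\ker D_{K}(0)$, since that is not an invertibility condition; one must first mod out the $\alpha$-independent protected kernel via a Grushin reduction or a Birman--Schwinger transform (writing $D_{\bvec{k}}(\alpha)=D_{\bvec{k}}(0)+\alpha V$ and characterizing magic $\alpha$ by $1/\alpha\in\operatorname{Spec}(T_{\bvec{k}})$ for a compact, $\alpha$-independent operator $T_{\bvec{k}}$), and in \cite{BeckerEmbreeWittsten2022} this is done at a generic $\bvec{k}\notin\{\Gamma,K,K'\}$, where the kernel is trivial off the magic set, rather than at $K$. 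Second, discreteness and non-emptiness are separate issues: ruling out that the exceptional set is all of $\CC$ (e.g., by observing the operator is invertible at $\alpha=0$) gives discreteness, but non-emptiness requires an additional argument --- in the rigorous treatments this is shown via the spectral characterization, e.g. by computing $\operatorname{tr}T_{\bvec{k}}^{2}\neq 0$. Your appeal to a ``perturbative or numerical check at a single value'' would suffice for non-emptiness if it certified an actual magic angle, but it does not also establish discreteness on its own, so the two roles should not be conflated.
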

As shown in \cite[Theorem 3]{BeckerHumbertZworski2023}, Hamiltonians of the form $H(\alpha)$ can have either two or four flat bands depending on the choice of $U(\mathbf{r})$.
In this work, we assume that $H(\alpha)$ has two flat bands which is consistent with the number of bands found in experiment \cite{2018Nature}:
\begin{assumption}
  \label{assume:bands}
  We assume that the potential $U$ and parameter $\alpha$ are chosen so that $H(\alpha)$ has two flat bands (i.e. $\# | \mathcal{N}|  = 2$).
\end{assumption}

\subsection{Chiral, Valley, and Spin Degrees of Freedom}
\label{sec:valley-spin-degrees}
The Hamiltonian in~\cref{eq:h-bm} satisfies 
  \begin{equation}
    \begin{bmatrix}
      I_{2 \times 2} & \\
      & - I_{2 \times 2}
    \end{bmatrix}
    H(\alpha)
    \begin{bmatrix}
      I_{2 \times 2} & \\
      & - I_{2 \times 2}
    \end{bmatrix}
    =
    - H(\alpha).
  \end{equation}
  Thus we can choose the eigenstates corresponding to  zero eigenvalues of \(H(\alpha)\)  so that they are also eigenstates of \(\diag{(1,1,-1,-1)}\).
  In particular, the eigenfunctions of \(H(\alpha)\) can be written as \(\begin{bmatrix} w(\bvec{r}) & 0 \end{bmatrix}^{\top}\) and \(\begin{bmatrix} 0 & v(\bvec{r}) \end{bmatrix}^{\top}\) for some functions \(w(\bvec{r}), v(\bvec{r})\) with values in \(\mathbb{C}^2\).
  This partitions the zero energy subspace of \(H(\alpha)\) into two classes referred to as the two ``chiral'' sectors of MATBG.
  By~\cref{assume:bands}, \(H(\alpha)\) has two flat bands so we can subdivide this two dimensional subspace into two 1 dimensional subspaces using chiral symmetry.
  Furthermore, \(H(\alpha)\) commutes with an additional symmetry \(C_{2z} \mathcal{T}\), which a combination of \(180^{\circ}\) in-plane rotation and spinless time reversal symmetry. Algebraically, the \(C_{2z} \mathcal{T}\) symmetry acts on $\CC^4$ valued functions as follows:
  \begin{equation}
      C_{2z} \mathcal{T} 
      \begin{bmatrix}
      w(\bvec{r}) \\[.5ex]
      v(\bvec{r})
      \end{bmatrix}
      =
      \begin{bmatrix}
      \overline{v(-\bvec{r})} \\[.5ex]
      \overline{w(-\bvec{r})} 
      \end{bmatrix}.
  \end{equation}
  In particular,  the \(C_{2z} \mathcal{T}\) symmetry maps the two chiral sectors onto each another.

A physically complete description of MATBG has two additional degrees of freedom which are not included in the definition of $H(\alpha)$: valley and spin.
To understand the valley degree of freedom, recall that \cref{eq:h-bm} is derived by performing a low energy expansion near one of the Dirac points of the monolayer graphene and setting some of the hopping coefficients to zero~\cite{TarnopolskyKruchkovVishwanath2019}.
However, monolayer graphene has two Dirac points so we could have equally well performed an expansion about the other Dirac point.
We refer to the expansion about the first Dirac point as the first valley and the second Dirac point as the second valley.
Since the two valleys are related by time reversal symmetry, the corresponding Hamiltonian which expands about both valleys takes the form:
\begin{equation}
  H^{(v)}(\alpha) =
  \begin{bmatrix}
    H(\alpha) & \\
    & \overline{H(\alpha)}
  \end{bmatrix}.
\end{equation}
Note that the spectral properties of $H^{(v)}(\alpha)$ are completely determined by $H(\alpha)$ and that under~\cref{assume:bands}, $H^{(v)}(\alpha)$ has four flat bands which are further subdivided into two 2 dimensional chiral subspaces related by \(C_{2z} \mathcal{T}\).

To understand the spin degree of freedom, we recall that electrons are spin-1/2 particles and have two spin states $\uparrow$ or $\downarrow$.
Experiments have shown that spin-orbit coupling in graphene is extremely small so we may neglect the dependence on spin \cite{HuertasHernandoGuineaBrataas2006,AndreiMacDonald2020}.
Hence, the non-interacting Hamiltonian including both valley and spin takes the form:
\begin{equation}
  H^{(v,s)}(\alpha) =
  \begin{bmatrix}
    H(\alpha) & \\
    & \overline{H(\alpha)}
  \end{bmatrix} \otimes I_{2 \times 2}
\end{equation}
where the additional tensor product with identity is due to the inclusion of spin.
Similar to $H^{(v)}(\alpha)$, the spectral properties $H^{(v,s)}(\alpha)$ are completely determined by $H(\alpha)$ and under~\cref{assume:bands}, $H^{(v,s)}(\alpha)$ has eight flat bands which are further subdivided into two 4 dimensional chiral subspaces which are related by \(C_{2z} \mathcal{T}\).

In the interacting model, valley and spin degrees of freedom significantly enhance the ground state degeneracy.
In the following sections, we will analyze three important cases: the single valley case, the valleyful case, and the valleyful and spinful case.
These cases correspond to $H(\alpha)$, $H^{(v)}(\alpha)$, $H^{(v,s)}(\alpha)$ respectively. 

\section{Flat band interacting model}
\label{sec:inter-model-chir}

The flat band interacting model (FBI) Hamiltonian models the chiral MATBG with electron-electron interactions, which can be defined using second quantization in momentum space.
In this section, we first introduce the basic setup of second quantized Hamiltonians momentum space and then specialize to chiral MATBG in~\cref{sec:form-factor-chiral,sec:form-factor-chiral-valley-spin}.
For a more complete discussion, we refer the readers to \cite[Section 3]{BeckerLinStubbs2023}.

Suppose that $H$ is a Hamiltonian with a non-empty set of flat bands $\mathcal{N}$ and flat band Bloch eigenfunctions $\{ u_{n \bvec{k}}(\bvec{r}) : n \in \mathcal{N}, \bvec{k} \in \Omega^{*} \}$ where $u_{n \bvec{k}} : \Omega \to \CC^{L}$ for some positive integer $L$. 
For each $u_{n \bvec{k}}$ in this set, we associate the creation and annihilation operators $\hat{f}_{n \bvec{k}}^{\dagger}$, $\hat{f}_{n \bvec{k}}$ which creates or annihilates the state\footnote{The additional factor of $e^{i \bvec{k} \cdot \bvec{r}}$ is so that the created or annihilated state is an eigenfunction of the original Hamiltonian, $H$, and not the Bloch transformed Hamiltonian $e^{- i \bvec{k} \cdot \bvec{r}} H e^{i \bvec{k} \cdot \bvec{r}}$.} $e^{i \bvec{k} \cdot \bvec{r}} u_{n \bvec{k}}(\bvec{r})$.
These operators satisfy the boundary conditions
\begin{equation}
  \hat{f}_{n (\bvec{k} + \bvec{G})}^{\dagger} = \hat{f}_{n \bvec{k}}^{\dagger}, \qquad \hat{f}_{n (\bvec{k} + \bvec{G})} = \hat{f}_{n \bvec{k}},
\end{equation}
as well as the canonical anticommutation relations:
\begin{equation}
  \{ \hat{f}_{n \bvec{k}}, \hat{f}_{m \bvec{k}'} \} = 0 \qquad \{ \hat{f}_{n \bvec{k}}^{\dagger}, \hat{f}_{m \bvec{k}'} \} = 0 \qquad \{ \hat{f}_{n \bvec{k}}^{\dagger}, \hat{f}_{m \bvec{k}'} \} = \delta_{mn}\, \delta_{\bvec{k} - \bvec{k}' \in \Gamma^{*}}.
\end{equation}

The FBI model is defined in terms of the form factor, $\Lambda_{\bvec{k}}(\bvec{q} + \bvec{G})$, which is a $\# | \mathcal{N} | \times \# | \mathcal{N} |$ matrix defined as follows:
\begin{equation}
  [\Lambda_{\bvec{k}}(\bvec{q} + \bvec{G})]_{mn}
  = \frac{1}{|\Omega|} \int_{\RR^{2}} e^{i \bvec{G} \cdot \bvec{r}} \braket{ u_{m \bvec{k}}(\bvec{r}), u_{n (\bvec{k} + \bvec{q})}(\bvec{r}) }_{\CC^{L}} \dd{\bvec{r}} 
\end{equation}
where $\bvec{k}, \bvec{q} \in \Omega^{*}$, $\bvec{G} \in \Gamma^{*}$, and $\braket{\cdot, \cdot}_{\CC^{L}}$ denotes the inner product of vectors in $\CC^{L}$.

Our next step towards defining an FBI is to pick a finite set of momenta $\mathcal{K} \subset \Omega^{*}$.
In particular, we choose basis vectors $\bvec{b}_{1}, \bvec{b}_{2}$ for the dual lattice $\Gamma^{*}$, fix integers $n_{\bvec{k}_{x}}, n_{\bvec{k}_{y}} > 0$, and define $\mathcal{K}$ as:
\begin{equation}
  \label{eq:mp-grid}
  \mathcal{K} := \left\{ \frac{i}{n_{k_{x}}} \bvec{b}_1 + \frac{j}{n_{k_{y}}} \bvec{b}_2 : i \in \{0, 1, \cdots, n_{k_{x}} - 1\}, j \in \{0, 1, \cdots, n_{k_{y}} - 1 \} \right\} \subseteq \Omega^*.
\end{equation}
We also define $N_{\bvec{k}} := \# |\mathcal{K}| = n_{\bvec{k}_{x}} n_{\bvec{k}_{y}}$, the number of momentum points.
Note that the set of points $\mathcal{K} + \Gamma^{*}$ form a uniform sampling of $\RR^{2}$ which is closed under inversion (i.e. $\bvec{k} \in \mathcal{K} + \Gamma^{*}$ implies $- \bvec{k} \in \mathcal{K} + \Gamma^{*}$.
For our main result we will require that both $n_{\bvec{k}_{x}}$ and $n_{\bvec{k}_{y}}$ are sufficiently large and finite (see~\cref{assume:grid-spacing} for the precise assumption).

Given these definitions, the corresponding FBI Hamiltonian is
\begin{align}
  \hat{H}_{FBI}
  & = \frac{1}{N_{\bvec{k}} |\Omega|} \sum_{\bvec{q}' \in \mathcal{K} + \Gamma^{*}}^{} \hat{V}(\bvec{q}') \widehat{\rho}(\bvec{q}') \widehat{\rho}(-\bvec{q}'), \label{eq:h-fbi} \\
  \widehat{\rho}(\bvec{q}')
  & = \sum_{\bvec{k} \in \mathcal{K}}^{} \sum_{m,n \in \mathcal{N}}^{} [\Lambda_{\bvec{k}}(\bvec{q}')]_{mn} \left( \hat{f}_{m \bvec{k}}^{\dagger} \hat{f}_{n (\bvec{k} + \bvec{q}')} - \frac{1}{2} \delta_{\bvec{q}' \in \Gamma^{*}} \delta_{mn} \right) \label{eq:rho-q}
\end{align}
where $\hat{V}(\bvec{q}')$ is any radial, continuous function so that $\hat{V}(\bvec{q}') > 0$ for all $\bvec{q}' \in \RR^{2}$.
For MATBG, a common choice is the double gate-screened Coulomb interaction 
\begin{equation}
  \hat{V}(\vq') = \frac{2\pi}{\epsilon} \frac{\tanh(|\bvec{q}'| d/2)}{|\bvec{q}'|}.
\end{equation}
Here $\epsilon, d > 0$ parameterize the strength and length of the screened Coulomb interaction, respectively (see e.g., \cite[Appendix C]{BernevigSongRegnaultEtAl2021}).

Before proceeding, we recall the number operator
$\hat{N} := \sum_{\bvec{k} \in \mathcal{K}} \sum_{m \in \mathcal{N}} \hat{f}_{m \bvec{k}}^{\dagger} \hat{f}_{m \bvec{k}}$.
It can be easily checked that $[\hat{H}_{FBI}, \hat{N} ] = 0$ so any eigenstate of $\hat{H}_{FBI}$ may be written as a linear combination of eigenstates of $\hat{N}$.
Given a choice of momentum grid $\mathcal{K}$, we say a many-body state is \textit{half-filled}, if it satisfies $\hat{N} \ket{\Phi} = \frac{1}{2} (\# | \mathcal{N} |) N_{\bvec{k}} \ket{\Phi}$ since the number of electrons in $\ket{\Phi}$ is exactly half the number of the possible creation operators.

\subsection{Form Factor of the Spinless, Valleyless Model}
\label{sec:form-factor-chiral}
While the FBI Hamiltonian is defined with an orthogonal basis $\{ u_{n \bvec{k}}(\bvec{r}) : n \in \mathcal{N}, \bvec{k} \in \mathcal{K} \}$, the choice of this orthogonal basis is arbitrary so we could also define this FBI Hamiltonian using a different orthogonal basis.
This observation amounts to rotating each creation and annihilation operator by a family of unitaries $\{ U(\bvec{k}) \}_{\bvec{k} \in \mathcal{K}} \subset \mathsf{U}(\# | \mathcal{N} |)$ as follows:
\begin{equation}
  \hat{f}_{m \bvec{k}}^{\dagger} \mapsto \sum_{}^{} \hat{f}_{n \bvec{k}}^{\dagger} [ U(\bvec{k}) ]_{n m} \qquad 
  \hat{f}_{m \bvec{k}} \mapsto \sum_{}^{} \hat{f}_{n \bvec{k}} [ \overline{U(\bvec{k})} ]_{n m}.
\end{equation}
Under this basis change the form factor transforms as $\Lambda_{\bvec{k}}(\bvec{q}') \mapsto U(\bvec{k}) \Lambda_{\bvec{k}}(\bvec{q}') U(\bvec{k} + \bvec{q})$ and one can verify that $\hat{H}_{FBI}$ remains unchanged.
In the physics context, this freedom in the basis choice is known as a ``gauge freedom'' and a choice of rotation is known as a ``choice of gauge''.

As proven in \cite[Lemma 4.4]{BeckerLinStubbs2023}, using the symmetries of the Bistritzer-MacDonald Hamiltonian, there exists a choice of gauge so that the form factor is diagonal for all $\bvec{k}, \bvec{q}'$ and satisfies a symmetry relation:
\begin{proposition}
  \label{prop:form-factor-gauge-fixing}
  Let $H(\alpha)$ be as in~\cref{eq:h-bm} and suppose~\cref{assume:bands} holds.
  There exists a choice of gauge so that for all $\bvec{k} \in \Omega^{*}$ and all $\bvec{q}' \in \RR^{2}$ the form factor can be written as:
  \begin{equation}
    \label{eq:form-factor}
    \Lambda_{\bvec{k}}(\bvec{q}') =
    \begin{bmatrix}
      a_{\bvec{k}}(\bvec{q}') & \\
                              & \overline{a_{\bvec{k}}(\bvec{q}')}
    \end{bmatrix}
    \quad \text{where} \quad a_{\bvec{k}}(\bvec{q}') \in \CC.
  \end{equation}
  Additionally, in this choice of gauge for all $\bvec{G} \in \Gamma^{*}$
  \begin{equation}
    \label{eq:form-factor-particle-hole}
    \overline{a_{\bvec{k}}(\bvec{G})} = a_{-\bvec{k}}(\bvec{G}).
  \end{equation}
\end{proposition}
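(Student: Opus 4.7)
The plan is to produce, by combining three symmetries of $H(\alpha)$, a Bloch basis in which both claims become manifest: the chiral symmetry $\Gamma = \diag(I_{2}, -I_{2})$ (which anti-commutes with $H(\alpha)$), the antiunitary $C_{2z}\mathcal{T}$ discussed in \cref{sec:valley-spin-degrees}, and a particle-hole-type symmetry sending $\bvec{k} \mapsto -\bvec{k}$. The first will give the block-diagonal form of $\Lambda_{\bvec{k}}(\bvec{q}')$, the second will identify its two diagonal entries up to complex conjugation, and the third will yield \cref{eq:form-factor-particle-hole}.

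To obtain the block-diagonal structure, I use that $\{H(\alpha), \Gamma\} = 0$, so $\Gamma$ preserves the kernel of $H(\alpha)$ and acts as a $\ZZ_{2}$ grading on the flat bands. Together with \cref{assume:bands} (exactly two flat bands), this lets me choose a chiral basis in which $u_{1\bvec{k}}(\bvec{r}) = (w_{\bvec{k}}(\bvec{r}), 0)^{\top}$ and $u_{2\bvec{k}}(\bvec{r}) = (0, v_{\bvec{k}}(\bvec{r}))^{\top}$ for $\CC^{2}$-valued $w_{\bvec{k}}, v_{\bvec{k}}$. The off-diagonal entries of $\Lambda_{\bvec{k}}(\bvec{q}')$ then vanish identically because the $\CC^{4}$ inner product of $u_{1\bvec{k}}$ with $u_{2(\bvec{k}+\bvec{q})}$ is pointwise zero.

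Next, I use $C_{2z}\mathcal{T}$ to relate the two diagonal entries. Since $C_{2z}\mathcal{T}$ preserves $\bvec{k}$ and swaps chiralities, and each chiral sector is one-dimensional, the residual $\mathsf{U}(1)$ gauge in $v_{\bvec{k}}$ can be fixed so that $v_{\bvec{k}}(\bvec{r}) = \overline{w_{\bvec{k}}(-\bvec{r})}$ uniformly in $\bvec{k}$. Plugging this into the lower-right entry of $\Lambda_{\bvec{k}}(\bvec{q}')$, using $\langle \bar{a}, \bar{b}\rangle_{\CC^{2}} = \overline{\langle a, b\rangle_{\CC^{2}}}$, and performing the change of variable $\bvec{r} \mapsto -\bvec{r}$ yields $[\Lambda_{\bvec{k}}(\bvec{q}')]_{22} = \overline{a_{\bvec{k}}(\bvec{q}')}$, which gives \cref{eq:form-factor}. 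For \cref{eq:form-factor-particle-hole}, I then invoke a particle-hole-type symmetry of the chiral BM Hamiltonian (built from the complex-conjugation condition in the third line of \cref{eq:u-bm-properties}, composed with the chiral and $C_{2z}\mathcal{T}$ operations) that produces an antiunitary sending $\bvec{k} \mapsto -\bvec{k}$ while preserving the first chiral sector. Combined with the gauge choice above, this forces $|w_{-\bvec{k}}(\bvec{r})|^{2} = |w_{\bvec{k}}(-\bvec{r})|^{2}$; substituting into the definition of $a_{-\bvec{k}}(\bvec{G})$ and changing variables $\bvec{r} \mapsto -\bvec{r}$ then gives $a_{-\bvec{k}}(\bvec{G}) = \overline{a_{\bvec{k}}(\bvec{G})}$.

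The main obstacle I expect is showing that the gauge constructed in these symmetry-fixing steps is \emph{globally consistent} on the Brillouin zone, not merely well-defined at each fixed $\bvec{k}$. Distinct antiunitary symmetries can overdetermine the phase of $w_{\bvec{k}}$ at momenta fixed by their actions, so I would need to verify an algebraic compatibility between $C_{2z}\mathcal{T}$ and the particle-hole symmetry (that they commute up to a controllable phase on the first chiral sector) and to rule out topological obstructions to a smooth global gauge on the flat-band bundle. For chiral MATBG this bundle is known to be trivial, which is what should make the construction close.
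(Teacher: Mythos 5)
You are proving a statement that the paper itself does not prove: the text states the result and attributes it to \cite[Lemma 4.4]{BeckerLinStubbs2023}, so there is no in-paper proof to compare against. That said, your decomposition into three symmetry-fixing steps (chirality $\Rightarrow$ diagonal form factor; $C_{2z}\mathcal{T}$ $\Rightarrow$ complex-conjugate diagonal entries; a $\bvec{k}\mapsto-\bvec{k}$ particle-hole symmetry $\Rightarrow$ \cref{eq:form-factor-particle-hole}) is the standard route used in the MATBG literature and is in essence what the cited lemma does, so the architecture is right.

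However, there is a concrete error in the way you close the argument. You write that ``for chiral MATBG this bundle is known to be trivial.'' It is not: in the sublattice/chiral gauge each flat band carries Chern number $\pm 1$ (this is a central and well-known feature of the chiral limit, cf.\ \cite{LedwithTarnpolskyKhalaf2020}, which the paper itself cites precisely because the flat bands behave like lowest Landau level states). So there is a genuine topological obstruction to a smooth \emph{periodic} section of the single-chirality sub-bundle over the torus, and your proposed escape hatch does not exist. The resolution is instead that \cref{prop:form-factor-gauge-fixing} only requires a pointwise choice of basis at each $\bvec{k}$, not a globally smooth, periodic section; the concrete realization (and the one the paper leans on in \cref{sec:lemma-linear-indep}, via the Jacobi theta function parameterization of $u_{\bvec{k}}$) is a gauge that is smooth in $\bvec{k}$ but acquires a controlled phase when $\bvec{k}\mapsto\bvec{k}+\bvec{G}$, which is enough for the form factor identities. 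You should replace the triviality claim with an explicit verification of the algebraic compatibility between the two antiunitary symmetry-fixing conditions (e.g.\ that $(C_{2z}\mathcal{T})^2$ and the square of your particle-hole operator act by controllable phases on each chiral sector), and then either construct the gauge explicitly via the theta-function parameterization or carefully argue pointwise consistency at the momenta $\bvec{k}$ and $-\bvec{k}$ that both conditions touch.

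Two smaller points. First, you invoke a ``particle-hole-type symmetry of the chiral BM Hamiltonian'' without pinning it down; in this model the relevant operator is built from the third line of \cref{eq:u-bm-properties} (the reality condition $U(x_1,x_2)=\overline{U(x_1,-x_2)}$) together with a layer swap, and you need to say whether it commutes or anticommutes with $H(\alpha)$ and whether it preserves or exchanges the chiral sectors --- this determines whether it gives a relation within $a_{\bvec{k}}$ or between $a_{\bvec{k}}$ and $\overline{a_{\bvec{k}}}$. Second, the restriction of \cref{eq:form-factor-particle-hole} to $\bvec{G}\in\Gamma^{*}$ (rather than all $\bvec{q}'$) should fall out naturally once the particle-hole step is made precise, since the $\bvec{k}\mapsto-\bvec{k}$ relabeling of Bloch functions only closes on the form factor when $\bvec{q}'$ is a reciprocal lattice vector; make sure your change-of-variables step respects this.
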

Under the choice of gauge from \cref{prop:form-factor-gauge-fixing}, we label the two flat bands as $\mathcal{N} = \{ +, - \}$ which we refer to the $+$ band and the $-$ band.
In this choice of gauge, $\widehat{\rho}(\bvec{q}')$ (\cref{eq:rho-q}) can be written in a particularly simple form:
\begin{equation}
  \widehat{\rho}(\bvec{q}') = \sum_{\bvec{k} \in \mathcal{K}}^{} a_{\bvec{k}}(\bvec{q}') \left( \hat{f}_{+ \bvec{k}}^{\dagger}\hat{f}_{+ (\bvec{k} + \bvec{q})} - \frac{1}{2} \delta_{\bvec{q} \in \Gamma^{*}} \right) + \overline{a_{\bvec{k}}(\bvec{q}')} \left( \hat{f}_{- \bvec{k}}^{\dagger}\hat{f}_{- (\bvec{k} + \bvec{q})} - \frac{1}{2} \delta_{\bvec{q} \in \Gamma^{*}} \right). 
\end{equation}
The choice of gauge in \cref{prop:form-factor-gauge-fixing} is also referred to as the sublattice gauge in the literature because the two Bloch eigenfunctions \(u_{+ \bvec{k}}(\bvec{r})\) and \(u_{- \bvec{k}}(\bvec{r})\) are supported on the A and B sublattices respectively.

\subsection{Form Factor for with Valley and Spin Degrees of Freedom}
\label{sec:form-factor-chiral-valley-spin}
Following the setup~\cite{BultinckKhalafLiuEtAl2020}, the form factor also can be written in diagonal form when we include the valley and spin degrees of freedom.
In particular, since the two valleys are related by time reversal symmetry, when we include the valley degree of freedom the form factor takes the form
\begin{equation}
  \label{eq:form-factor-valley}
  \Lambda_{\bvec{k}}^{(v)}(\bvec{q}') =
  \begin{bmatrix}
    a_{\bvec{k}}(\bvec{q}') &&& \\
                            & \overline{a_{\bvec{k}}(\bvec{q}')} && \\
                            && \overline{a_{\bvec{k}}(\bvec{q}')} \\
                            &&& a_{\bvec{k}}(\bvec{q}') \\
  \end{bmatrix}
\end{equation}
where the first term on the diagonal corresponds to the $+$ band in the $K$ valley, the second term corresponds to the $-$ band in the $K$ valley, the third $(+, K')$, and the fourth $(-, K')$.

Identifying $K = +1$ and $K' = -1$, we can generate all many body states applying the following creation operators on the vacuum state:
\begin{equation}
  \Big\{ \hat{f}_{(m, \tau), \bvec{k}}^{\dagger} : m \in \{ +, - \},~ \tau \in \{ +, - \},~ \bvec{k} \in \mathcal{K} \Big\}.
\end{equation}

When we include both the valley and spin degrees of freedom, the form factor can be written
\begin{equation}
  \label{eq:form-factor-valley-spin}
  \Lambda_{\bvec{k}}^{(v,s)}(\bvec{q}') =
  \begin{bmatrix}
    a_{\bvec{k}}(\bvec{q}') &&& \\
                            & \overline{a_{\bvec{k}}(\bvec{q}')} && \\
                            && \overline{a_{\bvec{k}}(\bvec{q}')} \\
                            &&& a_{\bvec{k}}(\bvec{q}') \\
  \end{bmatrix} \otimes I_{2 \times 2}.
\end{equation}
The corresponding set of generating creation operators is given by
\begin{equation}
  \Big\{ \hat{f}_{(m, \tau, s), \bvec{k}}^{\dagger} : m \in \{ +, - \},~ \tau \in \{ +, - \},~s \in \{ \uparrow, \downarrow\},~ \bvec{k} \in \mathcal{K} \Big\}
\end{equation}
where $m$ indexes the bands, $\tau$ indexes the valley, and $s$ indexes spin.

In all cases, the form factor $\Lambda_{\bvec{k}}(\bvec{q}')$ is diagonal whose non-zero entries are either $a_{\bvec{k}}(\bvec{q}')$ or $\overline{a_{\bvec{k}}(\bvec{q}')}$.
We refer to the terms multiplied by $a_{\bvec{k}}(\bvec{q}')$ as the ``$+$ chiral sector'' and the terms multiplied by $\overline{a_{\bvec{k}}(\bvec{q}')}$ as the ``$-$ chiral sector''.
Note that when including the valley degree of freedom, the creation operators for the $+$ chiral sector have $m \tau = +1$ and the creation operators for the $-$ chiral sector have $m \tau = -1$.

To unify the FBI model in three cases, we define the \emph{chiral transfer operator} $\hat{C}_{\pm, \bvec{k}, \bvec{k}'}$ for any $\bvec{k}, \bvec{k}' \in \Omega^{*}$ as follows:
  \begin{equation}
    \label{eq:chiral-transfer-operators}
    \begin{split}
      \hat{C}_{\pm, \bvec{k}, \bvec{k}'} & = \hat{f}_{\pm, \bvec{k}}^{\dagger} \hat{f}_{\pm, \bvec{k}'}, \\[1ex]
      \hat{C}_{\pm, \bvec{k}, \bvec{k}'}^{(v)} & = \sum_{m \tau = \pm}^{} \hat{f}_{(m, \tau), \bvec{k}}^{\dagger} \hat{f}_{(m, \tau), \bvec{k}'}, \\
      \hat{C}_{\pm, \bvec{k}, \bvec{k}'}^{(v,s)} & = \sum_{s \in \{ \uparrow, \downarrow \}}^{} \sum_{m\tau = \pm}^{} \hat{f}_{(m, \tau, s), \bvec{k}}^{\dagger} \hat{f}_{(m, \tau, s), \bvec{k}'}.
    \end{split}
  \end{equation}
  The special case $\bvec{k} = \bvec{k}'$ is particularly important so we define the special notation $\hat{n}_{\pm, \bvec{k}} := \hat{C}_{\pm, \bvec{k}, \bvec{k}}$ which we refer to as the \textit{chiral number operators}.
  The operators $\hat{n}_{\pm, \bvec{k}}^{(v)}$ and $\hat{n}_{\pm, \bvec{k}}^{(v,s)}$ are defined analogously.

For reasons which will be made clear in~\cref{coro:uniform-filling}, we define the integer $N_{occ}$ where $N_{occ} = 1$ for the spinless, valleyless model, $N_{occ} = 2$ for the spinless, valleyful model, and $N_{occ} = 4$ for the full model.
With these definitions and suppressing the superscripts on the chiral transfer operators, $\widehat{\rho}(\bvec{q}')$ in all cases simplifies to
\begin{equation}
  \label{eq:rho-q-tbg}
  \widehat{\rho}(\bvec{q}')
  = \sum_{\bvec{k} \in \mathcal{K}}^{} a_{\bvec{k}}(\bvec{q}') \left( \hat{C}_{+, \bvec{k}, (\bvec{k} + \bvec{q}')} - \frac{1}{2} N_{occ}\, \delta_{\bvec{q}' \in \Gamma^{*}} \right) + \overline{a_{\bvec{k}}(\bvec{q}')} \left( \hat{C}_{-, \bvec{k}, (\bvec{k} + \bvec{q}')} - \frac{1}{2} N_{occ}\, \delta_{\bvec{q}' \in \Gamma^{*}} \right).
\end{equation}
This unified form of $\widehat{\rho}(\bvec{q}')$ allows us to analyze the ground states of $\hat{H}_{FBI}$ in all three cases simultaneously.

In summary, the FBI Hamiltonian for the chiral MATBG in \cref{eq:h-fbi} with $\widehat{\rho}(\bvec{q}')$ defined in \cref{eq:rho-q-tbg} is called

\begin{enumerate}

\item Spinless, valleyless if $\hat{C}_{\pm, \bvec{k}, \bvec{k}'} = \hat{f}_{\pm, \bvec{k}}^{\dagger} \hat{f}_{\pm, \bvec{k}'}$ and $N_{occ} = 1$;

\item Spinless, valleyful if $\hat{C}_{\pm, \bvec{k}, \bvec{k}'} = \sum_{m \tau = \pm}^{} \hat{f}_{(m, \tau), \bvec{k}}^{\dagger} \hat{f}_{(m, \tau), \bvec{k}'}$  and $N_{occ} = 2$;

\item Spinful, valleyful if $\hat{C}_{\pm, \bvec{k}, \bvec{k}'} = \sum_{s \in \{ \uparrow, \downarrow \}}^{} \sum_{m\tau = \pm}^{} \hat{f}_{(m, \tau, s), \bvec{k}}^{\dagger} \hat{f}_{(m, \tau, s), \bvec{k}'}$   and $N_{occ} = 4$.

\end{enumerate}

\section{Properties of the Many-Body Ground States}
\label{sec:nature-many-body}
Before proving the main result in~\cref{sec:equiv-characterization,sec:repr-theory-ground}, let us discuss a few properties of the many-body ground states of $\hat{H}_{FBI}$ shared in all of the three cases.
The Hamiltonian \(\hat{H}_{FBI}\)  \eqref{eq:h-fbi} has the following three properties:
\begin{enumerate}
\item \(\hat{H}_{FBI}\) can be written as a sum of  terms $\hat{H}_{FBI}=\sum_{\bvec{q}'} H(\vq')$. The terms $\{H(\vq')\}$ do not commute with each other and cannot be simultaneously diagonalized.
\item Each term $H(\vq')$ is a positive semidefinite operator. Therefore \(\hat{H}_{FBI}\) is also positive semidefinite.
\item There exist many-body states, \(\ket{\Phi}\), such that $H(\vq') \ket{\Phi} =0$ for all $\vq'$. This implies  \(\hat{H}_{FBI} \ket{\Phi} = 0\).
\end{enumerate}
A Hamiltonian that satisfies properties (2) and (3) is called \emph{frustration-free}. Combined with (1), this means that \(\hat{H}_{FBI}\) is a non-commuting, frustration-free Hamiltonian.

The structure of the FBI Hamiltonian \eqref{eq:h-fbi} further implies that \(\ket{\Phi}\) is a ground state if and only if \(\widehat{\rho}(\bvec{q}') \ket{\Phi} = 0\) for all \(\bvec{q}' \in \mathcal{K} + \Gamma^{*}\) since
\begin{equation}
  \braket{\Phi | \hat{H}_{FBI} | \Phi } = \sum_{\bvec{q}'}^{} \hat{V}(\bvec{q}') \| \widehat{\rho}(\bvec{q}') \ket{\Phi} \|^{2}
\end{equation}
and \(V(\bvec{q}') > 0\) for all \(\bvec{q}' \in \mathbb{R}^{2}\).
Frustration-freeness plays a central role in the characterization of the ground states.

As a first consequence of frustration-freeness, we recall the per-$\bvec{k}$ number operator, $\hat{N}_{\bvec{k}}$, and the total number operator, $\hat{N}$, which can be defined in terms of the chiral number operators, $\hat{n}_{\pm, \bvec{k}}$ as follows
\begin{equation}
    \hat{N}_{\bvec{k}} = \hat{n}_{+,\bvec{k}} + \hat{n}_{-, \bvec{k}}, \quad
    \hat{N} = \sum_{\bvec{k} \in \mathcal{K}}^{} \hat{N}_{\bvec{k}}.
\end{equation}
For $\bvec{q}' = \bvec{0}$, it's easily verified that
\begin{equation}
  \widehat{\rho}(\bvec{0}) = \left( \hat{N} - N_{occ} N_{\bvec{k}} \right).
\end{equation}
Therefore, the frustration-freeness of \(\hat{H}_{FBI}\) immediately implies the following proposition:
\begin{proposition}
  \label{prop:total-number}
  A many-body state $\ket{\Phi}$ is a ground state of $\hat{H}_{FBI}$ only if $\hat{N} \ket{\Phi} = N_{occ} N_{\bvec{k}} \ket{\Phi}$.
\end{proposition}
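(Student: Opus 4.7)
The plan is to read this off directly from the frustration-free characterization of the ground states that was just recorded in the section. Since $\hat{V}(\bvec{q}') > 0$ on $\mathbb{R}^2$, the displayed identity
\[
  \braket{\Phi | \hat{H}_{FBI} | \Phi} = \sum_{\bvec{q}'} \hat{V}(\bvec{q}') \,\|\widehat{\rho}(\bvec{q}')\ket{\Phi}\|^2
\]
forces $\widehat{\rho}(\bvec{q}')\ket{\Phi}=0$ for every $\bvec{q}' \in \mathcal{K} + \Gamma^*$ whenever $\ket{\Phi}$ is a ground state. The whole proof is then to extract the $\bvec{q}'=\bvec{0}$ term from this family of constraints and observe that it coincides with the stated filling condition.

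Concretely, I would first verify the displayed identity $\widehat{\rho}(\bvec{0}) = \hat{N} - N_{occ} N_{\bvec{k}}$ by substituting $\bvec{q}' = \bvec{0}$ into \cref{eq:rho-q-tbg}. At $\bvec{q}' = \bvec{0}$, the form factor is the identity matrix (this is immediate from its definition via the orthonormality of the $u_{n\bvec{k}}$), so $a_{\bvec{k}}(\bvec{0}) = 1$ for every $\bvec{k}$, and the expression collapses to
\[
  \widehat{\rho}(\bvec{0}) = \sum_{\bvec{k} \in \mathcal{K}} \bigl(\hat{n}_{+,\bvec{k}} + \hat{n}_{-,\bvec{k}} - N_{occ}\bigr) = \hat{N} - N_{occ} N_{\bvec{k}},
\]
using the definition $\hat{N}_{\bvec{k}} = \hat{n}_{+,\bvec{k}} + \hat{n}_{-,\bvec{k}}$ and $\hat{N} = \sum_{\bvec{k}} \hat{N}_{\bvec{k}}$. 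The same computation works in all three of the cases (spinless-valleyless, spinless-valleyful, spinful-valleyful), since the only difference between them is absorbed into the value of $N_{occ}$ via the definition of the chiral transfer operators.

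Applying the frustration-free characterization to the $\bvec{q}'=\bvec{0}$ constraint then yields
\[
  0 = \widehat{\rho}(\bvec{0})\ket{\Phi} = \bigl(\hat{N} - N_{occ} N_{\bvec{k}}\bigr)\ket{\Phi},
\]
which is exactly $\hat{N}\ket{\Phi} = N_{occ} N_{\bvec{k}} \ket{\Phi}$ and completes the proof. There is no real obstacle here: the work is all in the preceding observation that $\hat{H}_{FBI}$ is frustration-free and in the explicit form of $\widehat{\rho}(\bvec{q}')$ from \cref{eq:rho-q-tbg}; the proposition is simply the $\bvec{q}'=\bvec{0}$ consequence of that structural fact, and it is the reason why later arguments (for example the uniform-filling statement in \cref{coro:uniform-filling}) may restrict attention to many-body states with the correct total particle number.
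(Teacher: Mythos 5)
Your proof is correct and follows exactly the paper's route: both arguments read the proposition off as the $\bvec{q}'=\bvec{0}$ instance of the frustration-free constraint $\widehat{\rho}(\bvec{q}')\ket{\Phi}=0$, using $\widehat{\rho}(\bvec{0}) = \hat{N} - N_{occ} N_{\bvec{k}}$. The paper leaves this identity to the reader (``it's easily verified''); you have simply filled in the short computation from $a_{\bvec{k}}(\bvec{0})=1$ and $\hat{C}_{\pm,\bvec{k},\bvec{k}}=\hat{n}_{\pm,\bvec{k}}$.
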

While~\cref{prop:total-number} holds for all frustration-free FBI Hamiltonians, the specific properties of form factors for chiral MATBG allows us to prove something significantly stronger:
\begin{proposition}
  \label{prop:uniform-chiral-filling}
  If $\ket{\Phi}$ is a many-body ground state of \(\hat{H}_{FBI}\) then $\ket{\Phi}$ may be decomposed as
  \begin{equation}
    \label{eq:mbgs-chiral-decomp}
    \ket{\Phi} = \sum_{\ell}^{} c_{\ell} \ket{\Phi^{(\ell)}},
  \end{equation}
  where the states $\ket{\Phi^{(\ell)}}$ are joint eigenfunctions of $\{ \hat{n}_{\pm, \bvec{k}} : \bvec{k} \in \mathcal{K} \}$ satisfying the following for all $\bvec{k} \in \mathcal{K}$
  \begin{equation}
    \begin{split}
      \hat{n}_{+, \bvec{k}} \ket{\Phi^{(\ell)}} & = \lambda_{+}^{(\ell)} \ket{\Phi^{(\ell)}} \\[.5ex]
      \hat{n}_{-, \bvec{k}} \ket{\Phi^{(\ell)}} & = \lambda_{-}^{(\ell)} \ket{\Phi^{(\ell)}}.
    \end{split}
  \end{equation}
\end{proposition}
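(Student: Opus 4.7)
The plan is to reduce the claim of uniform chiral filling to the chiral transfer operator annihilation property of ground states proved later in \cref{prop:equiv-characterization}, by means of a joint eigendecomposition and a simple commutator identity.

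First, I would write down the joint eigendecomposition. Since $\{\hat{n}_{+, \bvec{k}}, \hat{n}_{-, \bvec{k}} : \bvec{k} \in \mathcal{K}\}$ is a commuting family of self-adjoint operators whose common eigenbasis is the occupation number basis, every many-body state $\ket{\Phi}$ admits an orthogonal decomposition $\ket{\Phi} = \sum_{\vec{\lambda}} \ket{\Phi^{\vec{\lambda}}}$ indexed by joint eigenvalues $\vec{\lambda} = (\lambda_{+, \bvec{k}}, \lambda_{-, \bvec{k}})_{\bvec{k} \in \mathcal{K}}$. The proposition then amounts to showing that, when $\ket{\Phi}$ is a ground state, only those $\vec{\lambda}$ with $\lambda_{+, \bvec{k}}$ and $\lambda_{-, \bvec{k}}$ each independent of $\bvec{k}$ can contribute a nonzero component.

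Second, I would verify by direct computation from the canonical anticommutation relations that, for any $\bvec{k}, \bvec{k}' \in \mathcal{K}$,
\begin{equation*}
[\hat{C}_{+, \bvec{k}, \bvec{k}'},\, \hat{C}_{+, \bvec{k}', \bvec{k}}] = \hat{n}_{+, \bvec{k}} - \hat{n}_{+, \bvec{k}'},
\end{equation*}
with an analogous identity in the $-$ chiral sector as well as for the valleyful and spinful variants in \eqref{eq:chiral-transfer-operators} (the cross-flavor terms cancel in pairs by fermionic anticommutation between distinct modes). Noting that $\hat{C}_{\pm, \bvec{k}', \bvec{k}} = \hat{C}_{\pm, \bvec{k}, \bvec{k}'}^{\dagger}$, these operators act as raising and lowering operators for the differences $\hat{n}_{\pm, \bvec{k}} - \hat{n}_{\pm, \bvec{k}'}$.

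Third, invoking \cref{prop:equiv-characterization}, every ground state $\ket{\Phi}$ satisfies $\hat{C}_{\pm, \bvec{k}, \bvec{k}'} \ket{\Phi} = 0$ for all $\bvec{k} \neq \bvec{k}'$ in $\mathcal{K}$. Substituting this into the commutator identity immediately yields $(\hat{n}_{\pm, \bvec{k}} - \hat{n}_{\pm, \bvec{k}'}) \ket{\Phi} = 0$ for all such pairs in both chiral sectors. Applied to the joint eigendecomposition above, orthogonality of distinct joint eigenspaces forces $\lambda_{+, \bvec{k}}$ and $\lambda_{-, \bvec{k}}$ to be constant in $\bvec{k}$ on every surviving $\ket{\Phi^{\vec{\lambda}}}$. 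Relabeling these components by their common values $(\lambda_+^{(\ell)}, \lambda_-^{(\ell)})$ gives the representation stated in the proposition.

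The hard part, and the reason this argument is not fully self-contained in \cref{sec:nature-many-body}, is the input from \cref{prop:equiv-characterization}: proving that every chiral transfer operator annihilates every ground state is the central technical contribution, requiring the gauge-fixed form factor structure of \cref{prop:form-factor-gauge-fixing} together with the Jacobi-$\theta$ function properties of the functions $a_{\bvec{k}}(\bvec{q}')$ to reduce the claim to the invertibility of a certain linear map. Once that ingredient is granted, \cref{prop:uniform-chiral-filling} follows from the elementary commutator and joint-diagonalization argument sketched above.
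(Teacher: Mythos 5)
Your argument is correct and follows the paper's proof essentially verbatim: the paper also reduces the claim to \cref{prop:equiv-characterization} via the commutator identity $[\hat{C}_{\pm,\bvec{k},\bvec{k}'},\hat{C}_{\pm,\bvec{k}',\bvec{k}}]=\hat{n}_{\pm,\bvec{k}}-\hat{n}_{\pm,\bvec{k}'}$ (this is \cref{lem:uniform-filling}), then invokes the joint diagonalizability of the commuting self-adjoint family $\{\hat{n}_{\pm,\bvec{k}}\}$. Your closing observation that the heavy lifting lives in \cref{prop:equiv-characterization} also matches the paper's logical structure, which explicitly defers that argument to \cref{sec:equiv-characterization}.
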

The important content of~\cref{prop:uniform-chiral-filling} is not only that the ground state can be decomposed as joint eigenfunctions of $\hat{n}_{\pm, \bvec{k}}$, but that the eigenvalues $\lambda_{\pm}^{(\ell)}$ are \textit{independent} of $\bvec{k}$.
An immediate corollary of this result is that any many-body ground state $\ket{\Phi}$ is uniformly filled:
\begin{corollary}
  \label{coro:uniform-filling}
  If $\ket{\Phi}$ is a many-body ground state of \(\hat{H}_{FBI}\) then $\hat{N}_{\bvec{k}} \ket{\Phi} = N_{occ} \ket{\Phi}$ for all $\bvec{k} \in \mathcal{K}$.
Hence $N_{occ}$ is the number of electrons which are occupied at each momentum $\bvec{k} \in \mathcal{K}$.
\end{corollary}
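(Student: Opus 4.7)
The plan is to derive \cref{coro:uniform-filling} as an essentially one-line deduction from \cref{prop:uniform-chiral-filling} together with \cref{prop:total-number}. The key observation is that once $\ket{\Phi}$ is decomposed into joint eigenfunctions of all the chiral number operators $\hat{n}_{\pm, \bvec{k}}$ with eigenvalues \emph{independent of} $\bvec{k}$, each component automatically has a definite eigenvalue under $\hat{N}$, and \cref{prop:total-number} then pins down that eigenvalue uniquely.

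First I would invoke \cref{prop:uniform-chiral-filling} to write $\ket{\Phi} = \sum_{\ell} c_\ell \ket{\Phi^{(\ell)}}$ with $\ket{\Phi^{(\ell)}}$ linearly independent and satisfying $\hat{n}_{\pm, \bvec{k}} \ket{\Phi^{(\ell)}} = \lambda_{\pm}^{(\ell)} \ket{\Phi^{(\ell)}}$ for every $\bvec{k} \in \mathcal{K}$. Summing over $\bvec{k}$ and using $\hat{N} = \sum_{\bvec{k}} (\hat{n}_{+,\bvec{k}} + \hat{n}_{-,\bvec{k}})$, each component is an eigenvector of $\hat{N}$ with eigenvalue $N_{\bvec{k}}(\lambda_+^{(\ell)} + \lambda_-^{(\ell)})$, depending only on $\ell$.

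Next, because distinct $\hat{N}$-eigenspaces are orthogonal and $\ket{\Phi}$ has $\hat{N}$-eigenvalue $N_{occ} N_{\bvec{k}}$ by \cref{prop:total-number}, projecting the decomposition of $\ket{\Phi}$ onto the $N_{occ} N_{\bvec{k}}$ eigenspace of $\hat{N}$ shows that only those $\ell$ with $\lambda_+^{(\ell)} + \lambda_-^{(\ell)} = N_{occ}$ can appear with $c_\ell \neq 0$. For every such component one then has $\hat{N}_{\bvec{k}} \ket{\Phi^{(\ell)}} = (\lambda_+^{(\ell)} + \lambda_-^{(\ell)}) \ket{\Phi^{(\ell)}} = N_{occ} \ket{\Phi^{(\ell)}}$ at every single $\bvec{k} \in \mathcal{K}$, and by linearity $\hat{N}_{\bvec{k}} \ket{\Phi} = N_{occ} \ket{\Phi}$ as claimed.

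There is essentially no obstacle at this step: the corollary is a direct bookkeeping consequence of \cref{prop:uniform-chiral-filling}, relying only on the trivial observation that joint eigenvectors of $\hat{n}_{\pm,\bvec{k}}$ are automatically $\hat{N}$-eigenvectors. All the substantive analytic work is hidden in \cref{prop:uniform-chiral-filling} itself, where one must exploit the diagonal form factor of \cref{prop:form-factor-gauge-fixing} together with the frustration-free constraints $\widehat{\rho}(\bvec{q}') \ket{\Phi} = 0$ — and, as flagged in the introduction, the properties of $a_{\bvec{k}}(\bvec{q}')$ through the Jacobi-$\theta$ functions — to show that the per-$\bvec{k}$ chiral occupations are forced to coincide across all of $\mathcal{K}$.
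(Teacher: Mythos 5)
Your proof is correct and follows essentially the same route as the paper: decompose $\ket{\Phi}$ via \cref{prop:uniform-chiral-filling}, observe each component is an $\hat{N}$-eigenvector with eigenvalue $N_{\bvec{k}}(\lambda_+^{(\ell)}+\lambda_-^{(\ell)})$, and use \cref{prop:total-number} together with independence of the components to force $\lambda_+^{(\ell)}+\lambda_-^{(\ell)}=N_{occ}$ for every surviving $\ell$. The only cosmetic difference is that the paper takes the $\ket{\Phi^{(\ell)}}$ mutually orthogonal WLOG, while you appeal to linear independence combined with orthogonality of distinct $\hat{N}$-eigenspaces; both are valid.
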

\begin{proof}
  Suppose that $\ket{\Phi}$ is a ground state and has a decomposition as in~\cref{eq:mbgs-chiral-decomp}.
  We may assume without loss of generality that this decomposition is chosen so that each term is mutually orthogonal (i.e. $\braket{\Phi^{(\ell)}, \Phi^{(\ell')}} = \delta_{\ell, \ell'}$).
  For each $\bvec{k} \in \mathcal{K}$, we have that
\begin{equation}
  \hat{N}_{\bvec{k}} \ket{\Phi^{(\ell)}} = \left( \hat{n}_{+, \bvec{k}} + \hat{n}_{-, \bvec{k}} \right) \ket{\Phi^{(\ell)}} = (\lambda_{+}^{(\ell)} + \lambda_{-}^{(\ell)}) \ket{\Phi^{(\ell)}} .
\end{equation}
Therefore
\begin{equation}
 \hat{N} \ket{\Phi} = \left(\sum_{\bvec{k}}^{} \hat{N}_{\bvec{k}} \right) \left( \sum_{\ell}^{} c_{\ell} \ket{\Phi^{(\ell)}}\right) = \sum_{\ell}^{} c_{\ell} \left( \sum_{\bvec{k} \in \mathcal{K}}^{} (\lambda_{+}^{(\ell)} + \lambda_{-}^{(\ell)}) \ket{\Phi^{(\ell)}} \right).
\end{equation}
Since $\ket{\Phi^{(\ell)}}$ are orthogonal and $\hat{N} \ket{\Phi} = N_{occ} N_{\bvec{k}} \ket{\Phi}$, it follows that $\lambda_{+}^{(\ell)} + \lambda_{-}^{(\ell)} = N_{occ}$ for all $\ell$ proving the result.
\end{proof}
The decomposition in \cref{prop:uniform-chiral-filling} is a key part of the characterization of the many-body ground states.
Since the states $\ket{\Phi^{(\ell)}}$ can be chosen to have different $\hat{n}_{\pm, \bvec{k}}$ eigenvalues, we can divide the analysis into studying the different eigenspaces of $\hat{n}_{\pm, \bvec{k}}$ independently.

The key observation which simplifies analyzing these eigenspaces is that both $\hat{C}_{\pm, \bvec{k}, \bvec{k}'}$ commute with a $\mathsf{U}(N_{occ})$
action within the $+$ chiral sector and a $\mathsf{U}(N_{occ})$ action within the $-$ chiral sector.
Explicitly, in the valleyful case, for any $U \in \mathsf{U}(N_{occ})$ if we rotate the creation operators by $U$, that is we define 
\begin{equation}
  \hat{g}_{(m, \tau), \bvec{k}}^{\dagger} = \sum_{m' \tau' = +}^{} \hat{f}_{(m', \tau'), \bvec{k}}^{\dagger} [ U ]_{(m', \tau'),(m, \tau)}
\end{equation}
then we have
\begin{equation}
  \begin{split}
    \hat{C}_{+, \bvec{k}, \bvec{k}'}
    & = \sum_{m \tau = +}^{} \hat{f}_{(m, \tau), \bvec{k}}^{\dagger} \hat{f}_{(m,\tau), \bvec{k}'} \\
    & = \sum_{m \tau = +}^{} \left( \sum_{m' \tau' = +}^{} \hat{g}_{(m', \tau'), \bvec{k}}^{\dagger} [\overline{U}]_{(m', \tau'), (m, \tau)}  \right) \left( \sum_{m'' \tau'' = +}^{} \hat{g}_{(m'', \tau''), \bvec{k}} [U]_{(m'', \tau''), (m, \tau)}  \right) \\
    & = \sum_{m \tau = +}^{} \hat{g}_{(m, \tau), \bvec{k}}^{\dagger} \hat{g}_{(m, \tau), \bvec{k}} 
  \end{split}
\end{equation}
where in the last line we have used that when $U$ is a unitary $\sum_{\alpha}^{} [U]_{\beta \alpha} [\overline{U}]_{\gamma \alpha} = \delta_{\beta \gamma}$.
Similar calculations show that in all cases the chiral transfer operators, $\hat{C}_{\pm, \bvec{k}, \bvec{k}'}$, commute with basis transformation within each chiral sector from $\mathsf{U}(N_{occ}) \times \mathsf{U}(N_{occ})$.

With this observation, one can easily check that \(\widehat{\rho}(\bvec{q}')\) also commutes with this \(\mathsf{U}(N_{occ}) \times \mathsf{U}(N_{occ})\) action and therefore the ground state manifold is an irreducible representation of the group $\mathsf{U}(N_{occ}) \times \mathsf{U}(N_{occ})$.
This $\mathsf{U}(N_{occ}) \times \mathsf{U}(N_{occ})$ freedom in the ground state is known as the ``hidden symmetry'' of chiral MATBG in the physics literature \cite{BultinckKhalafLiuEtAl2020}.
Once we restrict to a specific eigenspace $\lambda_{+}$, we can use the representation theory of $\mathsf{U}(N_{occ}) \times \mathsf{U}(N_{occ})$ to show that that all ground states lie in the irreducible representation generated by the Hartree-Fock ground states proving~\cref{thm:main-result}.

As a first step, we will derive a necessary and sufficient condition for a many-body state $\ket{\Phi}$ to be a ground state in~\cref{sec:equiv-characterization}.
This condition will almost immediately imply~\cref{prop:uniform-chiral-filling}.
Using this condition, we will then study the representation theory of the ground state manifold in~\cref{sec:repr-theory-ground}.

\section{An Equivalent Characterization of the Ground States}
\label{sec:equiv-characterization}
The main result of this section is to prove the following equivalent condition for a state $\ket{\Phi}$ to be a ground state:
\begin{proposition}
  \label{prop:equiv-characterization}
  For the spinless, valleyless FBI model of MATBG, a many-body state $\ket{\Phi}$ satisfies $\hat{H}_{FBI} \ket{\Phi} = 0$ if and only if $\ket{\Phi}$ is half filled and for all $\bvec{k}, \bvec{k}' \in \mathcal{K}$ 
  \begin{equation}
    \hat{C}_{\pm, \bvec{k}, \bvec{k}'} \ket{\Phi} = 0 \quad \text{whenever} \quad \bvec{k} - \bvec{k}' \not\in \Gamma^{*}. \label{eq:ferromagnetism}
  \end{equation}
  where $\hat{C}_{\pm, \bvec{k}, \bvec{k}'}$ is as defined in~\cref{eq:chiral-transfer-operators}.
  The same equivalence is true for valleyful MATBG and valleyful and spinful MATBG if we replace the $\hat{C}$ operators with $\hat{C}^{(v)}$ and $\hat{C}^{(v,s)}$ respectively.
\end{proposition}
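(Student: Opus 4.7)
The plan is to prove the two directions separately, with almost all of the work concentrated in the forward implication. Frustration-freeness of $\hat{H}_{FBI}$, together with the positivity of $\hat{V}$, already reduces the problem to the collection of linear constraints $\widehat{\rho}(\bvec{q}')\ket{\Phi}=0$ for every $\bvec{q}'\in\mathcal{K}+\Gamma^*$, so the entire argument takes place at the level of $\widehat{\rho}$ as expressed in~\cref{eq:rho-q-tbg}. The reverse ($\Leftarrow$) direction is straightforward: for $\bvec{q}'\notin\Gamma^*$ the hypothesis $\hat{C}_{\pm,\bvec{k},\bvec{k}+\bvec{q}'}\ket{\Phi}=0$ kills every summand of $\widehat{\rho}(\bvec{q}')$; for $\bvec{q}'=\bvec{0}$ the identity $\widehat{\rho}(\bvec{0})=\hat{N}-N_{occ}N_{\bvec{k}}$ together with half-filling does the job; and for $\bvec{q}'=\bvec{G}\in\Gamma^*\setminus\{\bvec{0}\}$ one argues that the $\hat{C}$-vanishing hypothesis forces each individual orbital to be either fully occupied or fully empty across all $\bvec{k}\in\mathcal{K}$, after which the particle-hole identity $\overline{a_{\bvec{k}}(\bvec{G})}=a_{-\bvec{k}}(\bvec{G})$ from \cref{prop:form-factor-gauge-fixing}, combined with the inversion symmetry of $\mathcal{K}$ and half-filling, collapses each $\bvec{G}$-sum to zero.

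The forward ($\Rightarrow$) direction is where the real work lies: one must invert the collective constraints $\widehat{\rho}(\bvec{q}')\ket{\Phi}=0$ to extract the vanishing of each individual $\hat{C}_{\pm,\bvec{k},\bvec{k}+[\bvec{q}']}\ket{\Phi}$. Fix a nonzero class $[\bvec{q}']\in\mathcal{K}\setminus\{\bvec{0}\}$ and let $\bvec{G}$ range over $\Gamma^*$. Since $\hat{f}_{\pm,\bvec{k}+\bvec{G}}=\hat{f}_{\pm,\bvec{k}}$, the operator $\hat{C}_{\pm,\bvec{k},\bvec{k}+[\bvec{q}']}$ does not depend on $\bvec{G}$, and $\widehat{\rho}([\bvec{q}']+\bvec{G})\ket{\Phi}=0$ becomes the homogeneous linear system
\begin{equation*}
  \sum_{\bvec{k}\in\mathcal{K}}\Bigl(a_{\bvec{k}}([\bvec{q}']+\bvec{G})\,\hat{C}_{+,\bvec{k},\bvec{k}+[\bvec{q}']}\ket{\Phi}+\overline{a_{\bvec{k}}([\bvec{q}']+\bvec{G})}\,\hat{C}_{-,\bvec{k},\bvec{k}+[\bvec{q}']}\ket{\Phi}\Bigr)=0
\end{equation*}
in the $2N_{\bvec{k}}$ unknown vectors $\hat{C}_{\pm,\bvec{k},\bvec{k}+[\bvec{q}']}\ket{\Phi}$. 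The problem thus reduces to exhibiting $2N_{\bvec{k}}$ shifts $\bvec{G}_1,\ldots,\bvec{G}_{2N_{\bvec{k}}}\in\Gamma^*$ for which the coefficient matrix $M$, with rows indexed by $(\pm,\bvec{k})$ and columns by $\bvec{G}_j$, has full rank. This invertibility is the main obstacle and is precisely the lemma packaged into~\cref{sec:lemma-linear-indep}.

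The proof of the invertibility lemma uses the explicit Jacobi-$\theta$ representation of $a_{\bvec{k}}(\bvec{q}')$ in the chiral-limit sublattice gauge, under which the flat-band Bloch functions are closed-form ratios of theta functions. I would suppose, for contradiction, a nontrivial relation $\sum_{\bvec{k}}\bigl(\alpha_{\bvec{k}}\,a_{\bvec{k}}([\bvec{q}']+\bvec{G})+\beta_{\bvec{k}}\,\overline{a_{\bvec{k}}([\bvec{q}']+\bvec{G})}\bigr)=0$ for every $\bvec{G}\in\Gamma^*$, lift the left-hand side to an identity between meromorphic theta quotients in a complex variable that interpolates $\bvec{G}$, and then exploit the simple-zero structure and quasi-periodicity of the theta function (e.g.\ via evaluation at, or residue extraction from, a zero of one of the summands) to force $\alpha_{\bvec{k}}=\beta_{\bvec{k}}=0$ for every $\bvec{k}$. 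This is precisely the step that the discussion flags as breaking down in the thermodynamic limit. Once the spinless, valleyless case is settled, the extensions to the valleyful and spinful cases are essentially cosmetic: the form factors in~\cref{eq:form-factor-valley,eq:form-factor-valley-spin} are block-diagonal with the same scalar entries $a_{\bvec{k}}$ or $\overline{a_{\bvec{k}}}$, so the same linear inversion applies chiral-sector by chiral-sector with $\hat{C}_{\pm,\bvec{k},\bvec{k}'}$ replaced by $\hat{C}^{(v)}_{\pm,\bvec{k},\bvec{k}'}$ or $\hat{C}^{(v,s)}_{\pm,\bvec{k},\bvec{k}'}$.
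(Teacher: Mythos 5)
Your overall reduction to the linear constraints $\widehat{\rho}(\bvec{q}')\ket{\Phi}=0$ and your treatment of the reverse direction match the paper in substance (modulo one imprecision: the hypothesis does not force each orbital to be fully occupied or empty for $N_{occ}>1$; the correct statement, via \cref{lem:uniform-filling}, is that $\ket{\Phi}$ decomposes into joint eigenvectors of the chiral number operators with $\bvec{k}$-independent eigenvalues summing to $N_{occ}$, which is what makes the on-lattice sums cancel). The genuine gap is in the forward direction. You assert that the full-rank statement you need --- linear independence, as functions of $\bvec{G}\in\Gamma^{*}$, of the $2N_{\bvec{k}}$ functions $\{a_{\bvec{k}}(\bvec{q}_{0}+\bvec{G})\}_{\bvec{k}\in\mathcal{K}}\cup\{\overline{a_{\bvec{k}}(\bvec{q}_{0}+\bvec{G})}\}_{\bvec{k}\in\mathcal{K}}$ for a fixed \emph{off-lattice} $\bvec{q}_{0}$ --- ``is precisely the lemma packaged into \cref{sec:lemma-linear-indep}.'' It is not. \Cref{lem:linear-indep} concerns only the on-lattice coefficients $a_{\bvec{k}}(\bvec{G})$, which are Fourier coefficients of the densities $\|u_{1,\bvec{k}}(\bvec{r})\|^{2}$, and it handles only $N_{\bvec{k}}$ functions: the conjugated family is eliminated beforehand using the identity $\overline{a_{\bvec{k}}(\bvec{G})}=a_{-\bvec{k}}(\bvec{G})$ of \cref{eq:form-factor-particle-hole}, which the paper states only for $\bvec{G}\in\Gamma^{*}$. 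For off-lattice $\bvec{q}_{0}$ that identity is unavailable, the relevant functions are Fourier coefficients of the overlaps $\braket{u_{\bvec{k}}(\bvec{r}),u_{(\bvec{k}+\bvec{q}_{0})}(\bvec{r})}$ rather than of densities, and the conjugates are genuinely independent unknowns. Your invertibility claim is therefore a new, stronger, and unproven lemma, not a citation of an existing one; the sketch via ``residue extraction from a zero of one of the summands'' does not establish it.

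The paper deliberately avoids any off-lattice rank condition. It first uses the on-lattice terms and \cref{lem:linear-indep} to show every ground state is a joint eigenvector of $\hat{\mathcal{N}}_{\bvec{k}}=\hat{n}_{+,\bvec{k}}+\hat{n}_{-,(-\bvec{k})}$ with eigenvalue $N_{occ}$ (\cref{prop:gs-prop-1}); it then takes the double commutator $[\hat{\mathcal{N}}_{\bvec{k}+\bvec{q}'},[\hat{\mathcal{N}}_{\bvec{k}},\widehat{\rho}(\bvec{q}')]]$, which isolates exactly two terms, and observes that these two terms shift the $\hat{\mathcal{N}}_{\bvec{k}}$ eigenvalue by $+1$ and $-1$ and hence produce orthogonal states, so each must annihilate $\ket{\Phi}$ separately (\cref{lem:double-commutator}). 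This yields $a_{\bvec{k}}(\bvec{q}')\hat{C}_{+,\bvec{k},(\bvec{k}+\bvec{q}')}\ket{\Phi}=0$, and the only analytic input then needed is $|a_{\bvec{k}}(\bvec{q}')|>0$ for small $|\bvec{q}'|$ (\cref{prop:non-degeneracy} together with \cref{assume:grid-spacing}), after which a commutator chaining argument propagates the vanishing from nearest-neighbor momenta to all pairs with $\bvec{k}-\bvec{k}'\notin\Gamma^{*}$. Your proposal omits all three of these ingredients (the double-commutator/orthogonality step, the grid-spacing assumption, and the chaining), and without them the unproven $2N_{\bvec{k}}$-function independence is carrying the entire weight of the forward implication.
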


To simplify the notation in our proof, we will henceforth suppress the superscripts on the $\hat{C}$ operators; the arguments we use to prove~\cref{prop:equiv-characterization} will be independent of which case we consider.
The requirement that every ground state $\ket{\Phi}$ must lie in the kernel of $\hat{C}_{\pm, \bvec{k}, \bvec{k}'}$  for all pairs of momenta implies that the ground state must be composed to states which uniformly fill each chiral sector:
\begin{lemma}
  \label{lem:uniform-filling}
  Suppose that $\ket{\Phi}$ is a many-body state which satisfies~\cref{eq:ferromagnetism}.
  Then for all $\bvec{k}, \bvec{k}' \in \mathcal{K}$
  \begin{equation}
    \Big( \hat{n}_{\pm, \bvec{k}} - \hat{n}_{\pm, \bvec{k}'}\Big) \ket{\Phi} = 0. \label{eq:uniform-filling} 
  \end{equation}
\end{lemma}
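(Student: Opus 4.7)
The plan is to derive the claimed equality by computing a single commutator of chiral transfer operators. Concretely, I will show that for any $\bvec{k}, \bvec{k}' \in \mathcal{K}$ with $\bvec{k} \ne \bvec{k}'$ (equivalently, $\bvec{k} - \bvec{k}' \notin \Gamma^{*}$, since $\mathcal{K}$ consists of distinct representatives in $\Omega^{*}$),
\begin{equation}
  [\hat{C}_{\pm, \bvec{k}, \bvec{k}'},\,\hat{C}_{\pm, \bvec{k}', \bvec{k}}] \;=\; \hat{n}_{\pm, \bvec{k}} - \hat{n}_{\pm, \bvec{k}'}.
\end{equation}
Since by hypothesis both $\hat{C}_{\pm, \bvec{k}, \bvec{k}'} \ket{\Phi} = 0$ and $\hat{C}_{\pm, \bvec{k}', \bvec{k}}\ket{\Phi} = 0$, applying this commutator to $\ket{\Phi}$ immediately gives \cref{eq:uniform-filling}. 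The case $\bvec{k} = \bvec{k}'$ is of course trivial.

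To establish the commutator identity, I will use the standard quadratic-fermion identity $[\hat{f}_a^\dagger \hat{f}_b,\,\hat{f}_c^\dagger \hat{f}_d] = \delta_{bc}\,\hat{f}_a^\dagger \hat{f}_d - \delta_{ad}\,\hat{f}_c^\dagger \hat{f}_b$, which follows from the canonical anticommutation relations. In the spinless, valleyless case this applies directly with the single chiral label, yielding $\hat{f}_{\pm,\bvec{k}}^\dagger \hat{f}_{\pm,\bvec{k}} - \hat{f}_{\pm,\bvec{k}'}^\dagger \hat{f}_{\pm,\bvec{k}'} = \hat{n}_{\pm,\bvec{k}} - \hat{n}_{\pm,\bvec{k}'}$. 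In the valleyful (respectively spinful, valleyful) case, $\hat{C}_{\pm, \bvec{k}, \bvec{k}'}$ is a sum over the indices $\alpha$ in the $\pm$ chiral sector, so the commutator expands as a double sum over $\alpha, \beta$; the Kronecker deltas collapse it to a diagonal sum, producing
\begin{equation}
  \sum_{\alpha \in \text{chiral sector }\pm} \bigl(\hat{f}_{\alpha,\bvec{k}}^\dagger \hat{f}_{\alpha,\bvec{k}} - \hat{f}_{\alpha,\bvec{k}'}^\dagger \hat{f}_{\alpha,\bvec{k}'}\bigr) \;=\; \hat{n}_{\pm,\bvec{k}} - \hat{n}_{\pm,\bvec{k}'},
\end{equation}
in agreement with the definitions of $\hat{n}_{\pm,\bvec{k}}^{(v)}$ and $\hat{n}_{\pm,\bvec{k}}^{(v,s)}$.

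There is no real obstacle here: the proof is a one-line algebraic consequence of the CAR, together with the assumed vanishing of the two transfer operators on $\ket{\Phi}$. The only point requiring care is keeping track of the signs and Kronecker deltas in the fermionic commutator, and noting that in all three models the commutator $[\hat{C}_{\pm,\bvec{k},\bvec{k}'},\hat{C}_{\pm,\bvec{k}',\bvec{k}}]$ produces precisely $\hat{n}_{\pm,\bvec{k}} - \hat{n}_{\pm,\bvec{k}'}$ (with no extra factor of $N_{occ}$), because the Kronecker delta $\delta_{\alpha\beta}$ forces a single sum rather than a square.
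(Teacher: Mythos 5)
Your proposal is correct and follows essentially the same route as the paper: both compute the commutator $[\hat{C}_{\pm,\bvec{k},\bvec{k}'},\hat{C}_{\pm,\bvec{k}',\bvec{k}}] = \hat{n}_{\pm,\bvec{k}} - \hat{n}_{\pm,\bvec{k}'}$ from the CAR and then note that, since both $\hat{C}_{\pm,\bvec{k},\bvec{k}'}\ket{\Phi}$ and $\hat{C}_{\pm,\bvec{k}',\bvec{k}}\ket{\Phi}$ vanish by hypothesis, applying the commutator to $\ket{\Phi}$ yields zero. The only difference is that you spell out the double-sum/Kronecker-delta bookkeeping for the valleyful and spinful cases, which the paper leaves implicit.
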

\begin{proof}
  The conclusion of the lemma is trivial if $\bvec{k} - \bvec{k}' \in \Gamma^{*}$ so we only need to consider $\bvec{k} - \bvec{k}' \not\in \Gamma^{*}$.
  By the CAR for all $\bvec{k}, \bvec{k}' \in \mathcal{K}$ 
  \begin{equation}
    [ \hat{C}_{\pm, \bvec{k}, \bvec{k}'}, \hat{C}_{\pm, \bvec{k}', \bvec{k}} ] = \hat{C}_{\pm, \bvec{k}, \bvec{k}} - \hat{C}_{\pm, \bvec{k}', \bvec{k}'} = \hat{n}_{\pm, \mathbf{k}} - \hat{n}_{\pm, \mathbf{k}'}.
  \end{equation}
  But for any $\bvec{k} - \bvec{k}' \not\in \Gamma^{*}$
  \begin{equation}
   [ \hat{C}_{\pm, \bvec{k}, \bvec{k}'}, \hat{C}_{\pm, \bvec{k}', \bvec{k}} ] \ket{\Phi} = \bigg(\hat{C}_{\pm, \bvec{k}, \bvec{k}'} \hat{C}_{\pm, \bvec{k}', \bvec{k}}  - \hat{C}_{\pm, \bvec{k}', \bvec{k}} \hat{C}_{\pm, \bvec{k}, \bvec{k}'} \bigg) \ket{\Phi} = 0
 \end{equation}
 where the last equality is due to~\cref{eq:ferromagnetism} which proves the lemma.
\end{proof}
This lemma implies~\cref{prop:uniform-chiral-filling} since the set of operators $\{ \hat{n}_{\chi, \bvec{k}} : \chi \in \{ +, - \},~ \bvec{k} \in \mathcal{K} \}$ are self-adjoint and commuting. So any many-body state may be orthogonally decomposed into a sum of joint eigenvectors.

We begin by showing that~\cref{eq:ferromagnetism} is sufficient for $\ket{\Phi}$ to be a ground state in~\cref{sec:proof-sufficiency}.
To prove~\cref{eq:ferromagnetism} is also necessary for $\ket{\Phi}$ to be a ground state, we split the analysis into two steps: considering $\widehat{\rho}(\bvec{q}')$ where $\bvec{q}' \in \Gamma^{*}$, which we refer to as the ``on lattice terms'', and $\bvec{q}' \not\in \Gamma^{*}$, which we refer to as the ``off lattice terms''.
Due to frustration-freeness, every ground state must satisfy $\widehat{\rho}(\mathbf{q}') \ket{\Phi} = 0$.
Using the on lattice terms, we will show that every ground state must be a joint eigenvector of a family of self-adjoint commuting operators with a fixed eigenvalue.
Using this joint eigenvector property combined with the frustration-freeness of the off lattice terms, will then prove~\cref{prop:equiv-characterization}.
We consider the on lattice terms in~\cref{sec:on-lattice-terms} and the off lattice terms in~\cref{sec:off-lattice-terms}.

\subsection{Proof that~\Cref{eq:ferromagnetism} is Sufficient}
\label{sec:proof-sufficiency}
Suppose that $\ket{\Phi}$ is a many-body state which is half-filled and satisfies~\cref{eq:uniform-filling,eq:ferromagnetism}.
When $\bvec{q}' \not\in \Gamma^{*}$, the operator $\widehat{\rho}(\bvec{q}')$ applied to $\ket{\Phi}$ evaluates to
\begin{equation}
  \widehat{\rho}(\bvec{q}') \ket{\Phi} = \left( \sum_{\bvec{k} \in \mathcal{K}}^{} a_{\bvec{k}}(\bvec{q}') \hat{C}_{+, \bvec{k}, (\bvec{k} + \bvec{q}')}  + \overline{a_{\bvec{k}}(\bvec{q}')}  \hat{C}_{-, \bvec{k}, (\bvec{k} + \bvec{q}')} \right) \ket{\Phi} = 0
\end{equation}
where the final equality is because $\bvec{k} - (\bvec{k} + \bvec{q}') = -\bvec{q}' \not\in \Gamma^{*}$ and~\cref{eq:ferromagnetism}.
Therefore, to prove $\ket{\Phi}$ is a ground state we only need to show that $\widehat{\rho}(\bvec{G}) \ket{\Phi} = 0$ for all $\bvec{G} \in \Gamma^{*}$.

As noted above, the family of operators $\{ \hat{n}_{c, \bvec{k}} : c \in \{ +, - \}, \bvec{k} \in \mathcal{K} \}$ are all self-adjoint and mutually commuting, we may decompose the state $\ket{\Phi}$ into a sum of joint eigenvectors of this family as follows:
\begin{equation}
  \begin{split}
    \ket{\Phi} & = \sum_{\ell}^{} c_{\ell} \ket{\Psi^{(\ell)}}, \quad \text{where} \\
               & \hat{n}_{\pm, \bvec{k}} \ket{\Psi^{(\ell)}} = \lambda_{\pm}^{(\ell)} \ket{\Psi^{(\ell)}}.
  \end{split}
\end{equation}
Following the reasoning in the proof of~\cref{coro:uniform-filling}, for all $\ell$, $\lambda_{+}^{(\ell)} +  \lambda_{-}^{(\ell)} = N_{occ}$.

For each $\mathbf{q}' \in \Gamma^{*}$, we may apply $\widehat{\rho}(\mathbf{q}')$ to each $\ket{\Psi^{(\ell)}}$ and we have that
\begin{equation}
  \begin{split}
    \widehat{\rho}(\bvec{q}') \ket{\Psi^{(\ell)}}
    & = \left[ \sum_{\bvec{k} \in \mathcal{K}}^{} a_{\bvec{k}}(\bvec{q}') \left( \hat{n}_{+, \bvec{k}} - \frac{1}{2} N_{occ} \right) + \overline{a_{\bvec{k}}(\bvec{q}')} \left( \hat{n}_{-, \bvec{k}} - \frac{1}{2} N_{occ} \right) \right] \ket{\Psi^{(\ell)}} \\[1ex]
    & = \left[ \sum_{\bvec{k} \in \mathcal{K}}^{} a_{\bvec{k}}(\bvec{q}') \left( \lambda_{+}^{(\ell)} - \frac{1}{2} N_{occ} \right) + \overline{a_{\bvec{k}}(\bvec{q}')} \left( \lambda_{-}^{(\ell)} - \frac{1}{2} N_{occ} \right) \right] \ket{\Psi^{(\ell)}} \\
    & = \left[ \sum_{\bvec{k} \in \mathcal{K}}^{} \Re{(a_{\bvec{k}}(\bvec{q}'))} \left( \lambda_{+}^{(\ell)} + \lambda_{-}^{(\ell)} - N_{occ} \right) + \Im{(a_{\bvec{k}}(\bvec{q}'))} \left( \lambda_{+}^{(\ell)} - \lambda_{-}^{(\ell)} \right) \right] \ket{\Psi^{(\ell)}}.
  \end{split}
\end{equation}
The first term vanishes since $\lambda_{+}^{(\ell)} + \lambda_{-}^{(\ell)} = N_{occ}$ and so we conclude that
\begin{equation}
    \widehat{\rho}(\bvec{q}') \ket{\Psi^{(\ell)}} = \left( \lambda_{+}^{(\ell)} - \lambda_{-}^{(\ell)} \right) \left( \sum_{\mathbf{k} \in \mathcal{K}}^{} \Im{(a_{\bvec{k}}(\bvec{q}'))} \right)  \ket{\Psi^{(\ell)}}.
\end{equation}
To show that the above expression vanishes, we recall that due to the choice of gauge $a_{-\bvec{k}}(\bvec{G}) = \overline{a_{\bvec{k}}(\bvec{G})}$ for all $\mathbf{G} \in \Gamma^{*}$ (\cref{eq:form-factor-particle-hole}).
Since the grid $\mathcal{K}$ (\cref{eq:mp-grid}) is closed under inversion, $\sum_{\bvec{k} \in \mathcal{K}}^{} \Im{(a_{\bvec{k}}(\mathbf{q}'))} = 0$ for all $\mathbf{q}' \in \Gamma^{*}$.
Hence, for all $\bvec{q}' \in \Gamma^{*}$, $\widehat{\rho}(\bvec{q}') \ket{\Phi} = 0$ and so $\ket{\Phi}$ is a ground state.

From the above calculation, we have shown that any half-filled many-body state $\ket{\Phi}$ which satisfies \cref{eq:uniform-filling} is a ground state.
In the next section, we begin our proof that these conditions are also necessary.

\subsection{On Lattice Terms, $\bvec{q}' \in \Gamma^*$}
\label{sec:on-lattice-terms}
For $\bvec{q}' = \bvec{G} \in \Gamma^{*}$, we can use a property of the gauge choice (\cref{eq:form-factor-particle-hole}) to simplify~\cref{eq:rho-q-tbg}.
In particular, we have
\begin{equation}
  \begin{split}
    \widehat{\rho}(\bvec{G})
    & = \sum_{\bvec{k} \in \mathcal{K}}^{} a_{\bvec{k}}(\bvec{G}) \left( \hat{n}_{+, \bvec{k}} - \frac{1}{2} N_{occ} \right) + \overline{a_{\bvec{k}}(\bvec{G})} \left( \hat{n}_{-, \bvec{k}} - \frac{1}{2} N_{occ} \right) \\
    & = \sum_{\bvec{k} \in \mathcal{K}}^{} a_{\bvec{k}}(\bvec{G}) \left( \hat{n}_{+, \bvec{k}} - \frac{1}{2} N_{occ} \right) + a_{-\bvec{k}}(\bvec{G}) \left( \hat{n}_{-, \bvec{k}} - \frac{1}{2} N_{occ} \right) \\
    & = \sum_{\bvec{k} \in \mathcal{K}}^{} a_{\bvec{k}}(\bvec{G}) \left( \hat{n}_{+, \bvec{k}} + \hat{n}_{-, (-\bvec{k})} - N_{occ} \right).
  \end{split}
\end{equation}
where to get the third line, we have performed the change of variables $\bvec{k} \mapsto - \bvec{k}$.

For a state $\ket{\Phi}$ to be annihilated by $\widehat{\rho}(\bvec{G})$, the following condition is sufficient:
\begin{equation}
  \label{eq:gs-prop-1}
  \left( \hat{n}_{+, \bvec{k}} + \hat{n}_{-, (-\bvec{k})} - N_{occ} \right) \ket{\Phi} = 0 \qquad \forall \bvec{k} \in \mathcal{K}.
\end{equation}
The main goal in this section is to show that~\cref{eq:gs-prop-1} is in fact \textit{necessary} to be a ground state.
\begin{proposition}
  \label{prop:gs-prop-1}
  A many-body state satisfies $\widehat{\rho}(\bvec{G}) \ket{\Phi} = 0$ for all $\bvec{G} \in \Gamma^{*}$ if and only if it is a joint eigenvector of the family of operators $\{ \hat{n}_{+, \bvec{k}} + \hat{n}_{-, (-\bvec{k})} : \bvec{k} \in \mathcal{K} \}$ with eigenvalue $N_{occ}$.
\end{proposition}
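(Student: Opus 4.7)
\medskip

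\noindent\textbf{Plan.} The implication ``joint eigenvector with eigenvalue $N_{occ}$ $\Rightarrow$ $\widehat{\rho}(\bvec{G})\ket{\Phi}=0$'' is immediate: from the simplification established just before the proposition,
\begin{equation*}
  \widehat{\rho}(\bvec{G}) \;=\; \sum_{\bvec{k}\in\mathcal{K}} a_{\bvec{k}}(\bvec{G})\,\bigl(\hat{n}_{+,\bvec{k}} + \hat{n}_{-,(-\bvec{k})} - N_{occ}\bigr),
\end{equation*}
so if each factor $\hat{n}_{+,\bvec{k}} + \hat{n}_{-,(-\bvec{k})} - N_{occ}$ annihilates $\ket{\Phi}$, then so does every $\widehat{\rho}(\bvec{G})$. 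All the work is in the converse direction.

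\medskip

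\noindent For the converse, I would introduce the commuting self-adjoint family $M_{\bvec{k}} := \hat{n}_{+,\bvec{k}} + \hat{n}_{-,(-\bvec{k})} - N_{occ}$ indexed by $\bvec{k}\in\mathcal{K}$. Since these operators mutually commute and are self-adjoint, I can orthogonally decompose any $\ket{\Phi}$ into joint eigenvectors $\ket{\Psi^{(\ell)}}$ with eigenvalues $\mu^{(\ell)}_{\bvec{k}}$. Assuming $\widehat{\rho}(\bvec{G})\ket{\Phi}=0$ for every $\bvec{G}\in\Gamma^*$ and using orthogonality, this reduces to showing that, for each $\ell$ appearing in the decomposition,
\begin{equation*}
  \sum_{\bvec{k}\in\mathcal{K}} a_{\bvec{k}}(\bvec{G})\,\mu^{(\ell)}_{\bvec{k}} \;=\; 0 \qquad \text{for all } \bvec{G}\in\Gamma^*
\end{equation*}
forces $\mu^{(\ell)}_{\bvec{k}}=0$ for every $\bvec{k}\in\mathcal{K}$. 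Once this is established, each summand in the decomposition of $\ket{\Phi}$ is itself annihilated by every $M_{\bvec{k}}$, and hence $\ket{\Phi}$ is a joint eigenvector of $\{\hat{n}_{+,\bvec{k}}+\hat{n}_{-,(-\bvec{k})}\}$ with eigenvalue $N_{occ}$.

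\medskip

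\noindent The heart of the proof is therefore the linear-algebraic claim that the $\infty\times N_{\bvec{k}}$ matrix $[A]_{\bvec{G},\bvec{k}} := a_{\bvec{k}}(\bvec{G})$ has trivial kernel, equivalently that one can extract $N_{\bvec{k}}$ linearly independent rows from it. This is a statement purely about the form factor in the sublattice gauge: in the chiral limit, the Bloch functions $u_{\pm\bvec{k}}(\bvec{r})$ are expressible in terms of Jacobi-$\theta$ functions, which forces $a_{\bvec{k}}(\bvec{G})$ to have a specific analytic dependence on $\bvec{k}$ (as the paper flags in the introduction, the Jacobi-$\theta$ machinery from lowest-Landau-level physics enters precisely here). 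I would invoke this structural fact—almost certainly isolated as the linear-independence lemma in Appendix~\ref{sec:lemma-linear-indep}—to conclude that $\sum_{\bvec{k}} a_{\bvec{k}}(\bvec{G})\mu_{\bvec{k}}=0$ for all $\bvec{G}\in\Gamma^*$ forces $\mu_{\bvec{k}}=0$ for all $\bvec{k}\in\mathcal{K}$.

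\medskip

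\noindent The main obstacle is exactly this last step. The naive worry is that as $N_{\bvec{k}}$ grows, one needs ever more independent rows from an infinite family of coefficients $a_{\bvec{k}}(\bvec{G})$ whose magnitudes typically decay with $|\bvec{G}|$; proving genuine linear independence is not merely a dimension count but requires exploiting the quasi-periodicity and zero structure of the $\theta$-function form factors. This is also consistent with the remark in the introduction that the argument breaks down in the thermodynamic limit, indicating the proof is delicate and finite-$N_{\bvec{k}}$ specific. Once the lemma is in hand, assembling the proposition is routine: sufficiency from the algebraic identity for $\widehat{\rho}(\bvec{G})$ above, and necessity via the joint-eigenvector decomposition combined with the linear-independence statement.
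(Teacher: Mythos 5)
Your proof is correct and follows essentially the same route as the paper: reduce both directions to the identity $\widehat{\rho}(\bvec{G}) = \sum_{\bvec{k}} a_{\bvec{k}}(\bvec{G})(\hat{n}_{+,\bvec{k}} + \hat{n}_{-,(-\bvec{k})} - N_{occ})$, then invoke the linear-independence (full-rank) property of the form-factor matrix from the appendix lemma. The only cosmetic difference is that you decompose $\ket{\Phi}$ spectrally over the commuting family $\{M_{\bvec{k}}\}$ and use orthogonality of the eigenvector pieces, whereas the paper pairs $\widehat{\rho}(\bvec{G})\ket{\Phi}=0$ against an arbitrary complete basis $\{\ket{\Psi^{(\nu)}}\}$ and concludes $\braket{\Psi^{(\nu)}, M_{\bvec{k}}\Phi}=0$ for all $\nu,\bvec{k}$; both arguments funnel into the same lemma and deliver the same conclusion, and your commentary on why that lemma is the delicate step (and why it is finite-$N_{\bvec{k}}$ specific) is accurate.
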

Before proving~\cref{prop:gs-prop-1}, we remark that the operators $\hat{n}_{+, \bvec{k}} + \hat{n}_{-, (-\bvec{k}), (-\bvec{k})}$ will appear many times during our proof so we define the shorthand
\begin{equation}
  \label{eq:ccal-def}
  \hat{\mathcal{N}}_{\bvec{k}} := \hat{n}_{+, \bvec{k}} + \hat{n}_{-, (-\bvec{k})}.
\end{equation}
\begin{proof}[Proof of~\cref{prop:gs-prop-1}]
Any $\ket{\Phi}$ satisfying~\cref{eq:gs-prop-1} implies $\widehat{\rho}(\bvec{G}) \ket{\Phi} = 0$ for all $\bvec{G} \in \Gamma^{*}$. So we only need to prove the converse.
  If $\ket{\Phi}$ is a ground state then for all $\bvec{G} \in \Gamma^{*}$:
  \begin{equation}
    \label{eq:rho-g-zero}
    \widehat{\rho}(\bvec{G}) \ket{\Phi} = \sum_{\bvec{k} \in \mathcal{K}}^{} a_{\bvec{k}}(\bvec{G}) \left(  \hat{\mathcal{N}}_{\bvec{k}} - N_{occ} \right) \ket{\Phi} = 0.
  \end{equation}
  The key observation is the following: Suppose we fix some complete basis for all many-body states, $\{ \ket{\Psi^{(\nu)}} \}_{\nu}$, where $\nu$ runs over an arbitrary indexing set.
  If we take the inner product with $\ket{\Psi^{(\nu)}}$ on both sides of~\cref{eq:rho-g-zero} then it must be that for all $\nu$:
  \begin{equation}
    \label{eq:rho-g-nu-zero}
    \sum_{\bvec{k} \in \mathcal{K}}^{} a_{\bvec{k}}(\bvec{G}) \braket{\Psi^{(\nu)}, \left( \hat{\mathcal{N}}_{\bvec{k}} - N_{occ} \right) \Phi} = 0.
  \end{equation}
  However, the sum over $\bvec{k} \in \mathcal{K}$ can now be viewed as an inner product between two vectors whose entries are indexed by $\bvec{k}$.
  Since~\cref{eq:rho-g-nu-zero} must be true for all $\bvec{G} \in \Gamma^{*}$, if we can show the infinite matrix
  \begin{equation}
    \label{eq:akq-matrix-form}
    \begin{bmatrix}
      a_{\bvec{k}_{1}}(\bvec{G}_{1}) & a_{\bvec{k}_{2}}(\bvec{G}_{1}) & \cdots & a_{\bvec{k}_{N_{\bvec{k}}}}(\bvec{G}_{1}) \\
      a_{\bvec{k}_{1}}(\bvec{G}_{2}) & a_{\bvec{k}_{2}}(\bvec{G}_{2}) & \cdots & a_{\bvec{k}_{N_{\bvec{k}}}}(\bvec{G}_{2}) \\
      \vdots & \vdots & \ddots & \vdots
    \end{bmatrix}
  \end{equation}
  is full rank, then we can conclude that for all $\nu$
  \begin{equation}
    \braket{\Psi^{(\nu)}, \left(  \hat{\mathcal{N}}_{\bvec{k}} - N_{occ} \right) \Phi} = 0 \qquad \forall \bvec{k} \in \mathcal{K}
  \end{equation}
  which implies~\cref{prop:gs-prop-1} since $\{\ket{\Psi^{(\nu)}}\}_{\nu}$ forms a complete basis set.
  We defer the proof of the following technical lemma, which implies~\cref{prop:gs-prop-1}, to~\cref{sec:lemma-linear-indep}:
  \begin{lemma}
    \label{lem:linear-indep}
    Let $a_{\bvec{k}}(\bvec{G})$ be the entries of the form factor as in~\cref{prop:form-factor-gauge-fixing}.
    For any finite set $\mathcal{K} \subseteq \Omega^{*}$ and any $\{ c_{\bvec{k}} \}_{\bvec{k} \in \mathcal{K}} \subseteq \CC$
    \begin{equation}
      \sum_{\bvec{k} \in \mathcal{K}}^{} a_{\bvec{k}}(\bvec{G}) c_{\bvec{k}} = 0\quad \forall \bvec{G} \in \Gamma^{*} \quad \text{if and only if} \quad c_{\bvec{k}} = 0 \quad \forall \bvec{k} \in \mathcal{K}.
    \end{equation}
  \end{lemma}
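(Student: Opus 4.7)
The plan is to translate the claim into a statement about linear independence of real-space Bloch densities and then use the explicit Jacobi theta function representation of the chiral flat band eigenfunctions.

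In the sublattice gauge of \cref{prop:form-factor-gauge-fixing}, $a_{\bvec{k}}(\bvec{G})$ is (up to a fixed prefactor) the $\bvec{G}$-th Fourier coefficient on a unit cell $\Omega$ of the $\Gamma$-periodic density $|u_{+,\bvec{k}}(\bvec{r})|^{2}$. Since $\{e^{i\bvec{G}\cdot\bvec{r}}\}_{\bvec{G}\in\Gamma^{*}}$ is a complete orthogonal basis of $L^{2}(\Omega)$, the hypothesis $\sum_{\bvec{k}\in\mathcal{K}} c_{\bvec{k}}\,a_{\bvec{k}}(\bvec{G}) = 0$ for every $\bvec{G}\in\Gamma^{*}$ is equivalent to the pointwise identity
\begin{equation*}
   \sum_{\bvec{k}\in\mathcal{K}} c_{\bvec{k}}\,|u_{+,\bvec{k}}(\bvec{r})|^{2} = 0 \quad\text{for a.e.\ } \bvec{r}\in\Omega.
\end{equation*}
Thus the lemma reduces to showing that the densities $\{|u_{+,\bvec{k}}|^{2}\}_{\bvec{k}\in\mathcal{K}}$ are linearly independent in $L^{1}(\Omega)$.

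To establish this linear independence, I invoke the Tarnopolsky--Kruchkov--Vishwanath representation of the chiral flat-band wavefunctions \cite{TarnopolskyKruchkovVishwanath2019}: using the complex coordinate $z$ associated to the moir{\'e} torus, there exist a fixed zero mode $\psi_{0}(\bvec{r})$ (vanishing only at a single point in $\Omega$, as dictated by the nontrivial Chern number of the chiral flat band bundle) and a Bloch prefactor of the schematic form $\Phi_{\bvec{k}}(\bvec{r}) = \mathcal{N}_{\bvec{k}}\,e^{i\bvec{k}\cdot\bvec{r}}\,\theta_{1}(z - \kappa(\bvec{k}))/\theta_{1}(z)$, with $\kappa:\Omega^{*}\to\Omega$ a smooth bijection, such that $u_{+,\bvec{k}}(\bvec{r}) = \Phi_{\bvec{k}}(\bvec{r})\,\psi_{0}(\bvec{r})$. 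Taking the squared modulus and dividing out the strictly positive (away from a measure-zero set), $\bvec{k}$-independent factor $|\psi_{0}(\bvec{r})|^{2}/|\theta_{1}(z)|^{2}$ reduces the linear independence to that of the translated theta densities $\{|\theta_{1}(z - \kappa(\bvec{k}))|^{2}\}_{\bvec{k}\in\mathcal{K}}$.

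Although $|\theta_{1}(z)|^{2}$ is only quasi-periodic, the Gaussian-renormalized density $e^{-2\pi(\Im z)^{2}/\Im\tau}\,|\theta_{1}(z)|^{2}$ is genuinely doubly periodic on the moir{\'e} torus; under the translation $z\mapsto z - \kappa$ its Fourier coefficients on this torus are multiplied by an explicit $\bvec{G}$-dependent character $\chi_{\bvec{G}}(\kappa)$. Because $\kappa$ is a bijection, distinct $\bvec{k}\in\mathcal{K}$ produce distinct shifts $\kappa(\bvec{k})$ modulo $\Gamma$, and linear independence of the family $\{\chi_{\bullet}(\kappa(\bvec{k}))\}_{\bvec{k}\in\mathcal{K}}$ as functions of $\bvec{G}\in\Gamma^{*}$ then follows from standard Pontryagin duality, forcing $c_{\bvec{k}}=0$ for all $\bvec{k}\in\mathcal{K}$. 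The main obstacle is the algebraic bookkeeping in the last two paragraphs: one must carefully track the quasi-periodicity factors of $\theta_{1}$, verify that the Gaussian compensator produces a genuinely doubly periodic density, and check that the translation action on its Fourier coefficients is manifestly by a character on $\Gamma^{*}$. Once this scaffolding is in place, the final linear independence is essentially character orthogonality on the dual lattice.
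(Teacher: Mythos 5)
Your reduction (form factors are Fourier coefficients of the Bloch densities, so the claim is equivalent to $L^1$-linear independence of $\{|u_{+,\bvec{k}}|^2\}$) and the use of the Tarnopolsky--Kruchkov--Vishwanath theta representation are exactly the paper's opening moves. Your observation that the Gaussian compensator $e^{-2\pi(\Im z)^2/\Im\tau}|\theta_1(z)|^2$ is genuinely $\Gamma$-periodic is also correct, and one can verify (using $z(k)=\tfrac{\sqrt{3}}{4\pi i}k$ and $\Im\omega=\tfrac{\sqrt{3}}{2}$) that the quasi-periodicity prefactor $e^{2\Im(z)\Re(k)}$ is absorbed exactly, so that, up to $\bvec{k}$-dependent nonzero constants, $|u_{\bvec{k}}|^2$ really is proportional to a translate $g(\cdot-\kappa(\bvec{k}))$ of a single periodic density $g$.

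The gap is in the final step. Knowing that the characters $\bvec{G}\mapsto\chi_{\bvec{G}}(\kappa(\bvec{k}))$ are pairwise distinct and hence linearly independent as functions on $\Gamma^*$ (Dedekind/Artin) is \emph{not} the same as knowing that the vectors $\big(\chi_{\bvec{G}}(\kappa(\bvec{k}))\,\hat{g}(\bvec{G})\big)_{\bvec{G}\in\Gamma^*}$ are linearly independent, which is what you actually need. If $\hat{g}$ vanished on, say, a proper finite-index subgroup, distinct translates of $g$ could coincide or become linearly dependent even though the unrestricted characters are independent --- the standard example being a periodic function whose Fourier support lies in $2\ZZ$, which satisfies $g(\cdot)=g(\cdot-\tfrac{1}{2})$. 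So the non-vanishing (or at least sufficient richness of the support) of $\hat{g}$ is precisely where the content of the lemma lives, and the proposal does not establish it; indeed, its own summary treats the remaining work as ``algebraic bookkeeping'' followed by ``character orthogonality,'' which undersells the hard part.

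This is the step the paper grinds through. After writing $|\theta(z|\omega)|^2$ as a Fourier series in $\Re z$ whose coefficients are theta functions of $\Im z$ (\cref{eq:theta-squared}), the paper separates the $\mathcal{K}$ variables into two directions, isolates each Fourier mode by integration in $\Re z$, inverts a Vandermonde matrix (\cref{lem:vandermonde}) in one direction, analytically continues in $z$, and then --- crucially --- chooses a point $z_*$ at which a second Vandermonde system has nonzero diagonal entries $\theta(z_*+\tau\alpha+d\mid\tau)\neq 0$, which is possible because the zero set of $\theta$ is discrete. That last choice is exactly the non-vanishing control that your Pontryagin-duality step is silently assuming. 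Your formulation is conceptually cleaner (and the periodic renormalization is a nice structural insight), but to close the argument you would still have to verify non-vanishing of the Fourier coefficients of $g$, or replace that verification by an argument showing that a nontrivial relation among translates would force $g$ to have extra periodicity, contradicting its unique zero per $\Gamma$-cell.
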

\end{proof}

\subsection{Off Lattice Terms, $\bvec{q}' \not\in \Gamma^*$}
\label{sec:off-lattice-terms}
The main goal in this section is to prove the following proposition which combined with~\cref{prop:total-number} immediately implies the equivalent characterization (\cref{prop:equiv-characterization}:
\begin{proposition}
  \label{prop:gs-prop-2}
  Let $\ket{\Phi}$ be any state which is a joint eigenstate of $\{ \hat{\mathcal{N}}_{\bvec{k}} \}_{\bvec{k} \in \mathcal{K}}$ and satisfies $\widehat{\rho}(\bvec{q}') \ket{\Phi} = 0$ for all $\bvec{q}' \in (\mathcal{K} + \Gamma^{*}) \setminus \Gamma^{*}$.
  For all $\bvec{k}, \bvec{k}' \in \mathcal{K}$ with $\bvec{k} - \bvec{k}' \not\in \Gamma^{*}$, $\ket{\Phi}$ also satisfies
  \begin{equation}
    \label{eq:gs-prop-2}
    \hat{C}_{\pm, \bvec{k}, \bvec{k}'} \ket{\Phi} = 0.
  \end{equation}
\end{proposition}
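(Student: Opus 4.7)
The plan is to fix an off-lattice momentum $\bvec{q}\in\mathcal{K}\setminus\{\bvec{0}\}$ and use the infinite family $\widehat{\rho}(\bvec{q}+\bvec{G})\ket{\Phi}=0$, $\bvec{G}\in\Gamma^{*}$, to pin down the vectors $\ket{v_{\bvec{k}}^{\pm}}:=\hat{C}_{\pm,\bvec{k},\bvec{k}+\bvec{q}}\ket{\Phi}$. Invoking the periodicity $\hat{f}_{n,\bvec{k}+\bvec{G}}=\hat{f}_{n,\bvec{k}}$, so that $\hat{C}_{\pm,\bvec{k},\bvec{k}+\bvec{q}+\bvec{G}}=\hat{C}_{\pm,\bvec{k},\bvec{k}+\bvec{q}}$, the hypothesis becomes
\[
\sum_{\bvec{k}\in\mathcal{K}}\Big[a_{\bvec{k}}(\bvec{q}+\bvec{G})\ket{v_{\bvec{k}}^{+}}+\overline{a_{\bvec{k}}(\bvec{q}+\bvec{G})}\ket{v_{\bvec{k}}^{-}}\Big]=0,\qquad \bvec{G}\in\Gamma^{*},
\]
and the desired conclusion $\hat{C}_{\pm,\bvec{k},\bvec{k}'}\ket{\Phi}=0$ is just the assertion $\ket{v_{\bvec{k}}^{\pm}}=0$ for every $\bvec{k}$.

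The decoupling strategy is to separate contributions by their joint eigenvalue under the commuting self-adjoint family $\{\hat{\mathcal{N}}_{\bvec{k}''}\}$. A one-line commutator computation — exploiting that $\hat{n}_{\mp,\bvec{k}''}$ commutes with $\hat{C}_{\pm,\cdot,\cdot}$ and that $[\hat{n}_{+,\bvec{k}''},\hat{C}_{+,\bvec{k},\bvec{k}+\bvec{q}}]=(\delta_{\bvec{k}'',\bvec{k}}-\delta_{\bvec{k}'',\bvec{k}+\bvec{q}})\hat{C}_{+,\bvec{k},\bvec{k}+\bvec{q}}$ — together with the hypothesis $\hat{\mathcal{N}}_{\bvec{k}''}\ket{\Phi}=N_{occ}\ket{\Phi}$ yields
\[
\hat{\mathcal{N}}_{\bvec{k}''}\ket{v_{\bvec{k}}^{+}}=\big(N_{occ}+\delta_{\bvec{k}'',\bvec{k}}-\delta_{\bvec{k}'',\bvec{k}+\bvec{q}}\big)\ket{v_{\bvec{k}}^{+}},\qquad \hat{\mathcal{N}}_{\bvec{k}''}\ket{v_{\bvec{k}}^{-}}=\big(N_{occ}+\delta_{\bvec{k}'',-\bvec{k}}-\delta_{\bvec{k}'',-\bvec{k}-\bvec{q}}\big)\ket{v_{\bvec{k}}^{-}}.
\]
Matching these eigenvalue profiles, one sees that different values of $\bvec{k}$ within a single chiral sector always yield distinct (hence orthogonal) eigenspaces, and that a $+$ term and a $-$ term can inhabit the same eigenspace only when $\bvec{k}'=-\bvec{k}$ and simultaneously $2\bvec{q}\in\Gamma^{*}$.

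For the generic case $2\bvec{q}\notin\Gamma^{*}$, projecting the displayed equation onto each joint eigenspace decouples it completely, producing the scalar constraints $a_{\bvec{k}}(\bvec{q}+\bvec{G})\ket{v_{\bvec{k}}^{+}}=0$ and $\overline{a_{\bvec{k}}(\bvec{q}+\bvec{G})}\ket{v_{\bvec{k}}^{-}}=0$ for every $\bvec{k}$ and every $\bvec{G}\in\Gamma^{*}$; the conclusion then follows as soon as one exhibits, for each such $\bvec{k}$, a single $\bvec{G}$ with $a_{\bvec{k}}(\bvec{q}+\bvec{G})\ne 0$. For the exceptional case $2\bvec{q}\in\Gamma^{*}$ the momenta $\bvec{k}$ and $-\bvec{k}$ share an eigenspace, so projection yields the coupled relation
\[
a_{\bvec{k}}(\bvec{q}+\bvec{G})\ket{v_{\bvec{k}}^{+}}+\overline{a_{-\bvec{k}}(\bvec{q}+\bvec{G})}\ket{v_{-\bvec{k}}^{-}}=0,\qquad \bvec{G}\in\Gamma^{*},
\]
reducing the problem to linear independence of the two families $\{a_{\bvec{k}}(\bvec{q}+\bvec{G})\}_{\bvec{G}}$ and $\{\overline{a_{-\bvec{k}}(\bvec{q}+\bvec{G})}\}_{\bvec{G}}$ as $\bvec{G}$ ranges over $\Gamma^{*}$.

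The main obstacle is precisely this off-lattice non-vanishing / linear-independence statement for the form factor $a_{\bvec{k}}$, an off-lattice cousin of \cref{lem:linear-indep}. It is to be proved from the explicit Jacobi-$\theta$ representation of the sublattice-gauge form factor, and this is the technical content that will be deferred to \cref{sec:proof-gs-prop-2a}. This step is exactly where the finite-grid hypothesis (\cref{assume:grid-spacing}) enters, consistent with the remark in the introduction that the present method breaks down in the thermodynamic limit.
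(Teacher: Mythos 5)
Your decoupling via the joint eigenvalue profile of $\{\hat{\mathcal{N}}_{\bvec{k}''}\}$ is essentially equivalent to the paper's double-commutator argument (Lemma~\ref{lem:double-commutator}), and the eigenvalue bookkeeping—including the exceptional case $2\bvec{q}\in\Gamma^{*}$—is correct. But there is a genuine gap at the final step. After decoupling you are left with $a_{\bvec{k}}(\bvec{q}+\bvec{G})\ket{v_{\bvec{k}}^{+}}=0$ for all $\bvec{G}\in\Gamma^{*}$, and you claim it suffices to ``exhibit a single $\bvec{G}$ with $a_{\bvec{k}}(\bvec{q}+\bvec{G})\ne 0$'' for \emph{every} off-lattice $\bvec{q}\in\mathcal{K}$. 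That off-lattice non-vanishing statement is not Lemma~\ref{lem:linear-indep} (which is for $\bvec{G}\in\Gamma^{*}$ alone), it is not what appears in \cref{sec:proof-gs-prop-2a} (which contains only the double-commutator computation), and it is not proved anywhere in the paper; you are deferring the hardest step to a lemma that does not exist. The only non-vanishing result available is \cref{prop:non-degeneracy}, which gives $a_{\bvec{k}}(\bvec{q}')\ne 0$ only for $|\bvec{q}'|<q_c$; for a typical off-lattice $\bvec{q}$ no representative $\bvec{q}+\bvec{G}$ need fall inside that ball. The paper sidesteps exactly this difficulty: it applies the decoupling only to \emph{nearest-neighbor} differences $\bvec{q}'=\bvec{e}_1,\bvec{e}_2$, where \cref{assume:grid-spacing} guarantees $|\bvec{q}'|<q_c$, and then propagates to arbitrary $\bvec{k},\bvec{k}'$ via the commutator identity
\[
\Big[ \hat{C}_{\pm, \bvec{k}, (\bvec{k} + \bvec{q}_{1})},\ \hat{C}_{\pm, (\bvec{k} + \bvec{q}_{1}), (\bvec{k} + \bvec{q}_{1} + \bvec{q}_{2})} \Big] = \hat{C}_{\pm, \bvec{k}, (\bvec{k} + \bvec{q}_{1} + \bvec{q}_{2})}-\delta_{\bvec{q}_{1} + \bvec{q}_{2} \in \Gamma^{*}} \hat{C}_{\pm, (\bvec{k} + \bvec{q}_{1}), (\bvec{k} + \bvec{q}_{1})}.
\]
This chaining step is also where \cref{assume:grid-spacing} actually enters—not, as you suggest, inside a theta-function non-vanishing lemma. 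Your proof either needs this propagation argument, or it needs a proof of the off-lattice non-vanishing claim you are assuming; as written it has neither.
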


Towards proving~\cref{prop:gs-prop-2}, let us recall that in the previous section we have shown that every ground state $\ket{\Phi}$ must be a joint eigenvector of the family $\{ \hat{\mathcal{N}}_{\bvec{k}} \}_{\bvec{k} \in \mathcal{K}}$ which are all self-adjoint and mutually commuting.
To make use of this property, let us recall the following basic fact from linear algebra:
\begin{lemma}
  \label{lem:joint-evector}
  Suppose that $v$ is a joint eigenvector
of two matrices $A$ and $B$, then \([A, B] v = 0\).
\end{lemma}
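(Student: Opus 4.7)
The plan is to unpack the definitions of ``joint eigenvector'' and the commutator, and observe that the two scalar eigenvalues commute as numbers. Concretely, since $v$ is a joint eigenvector of $A$ and $B$, there exist scalars $\lambda, \mu$ with $Av = \lambda v$ and $Bv = \mu v$. The commutator is defined as $[A,B] = AB - BA$, so I would apply each of these two operators to $v$ separately.

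First I would compute $AB\, v = A(Bv) = A(\mu v) = \mu (Av) = \mu \lambda v$, using linearity of $A$ to pull the scalar $\mu$ out. Symmetrically, $BA\, v = \lambda \mu v$. Subtracting the two gives $[A,B] v = (\mu\lambda - \lambda\mu) v = 0$, since scalars commute. That is the entire argument.

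There is essentially no obstacle here; the only content is that scalars commute and that $A$ and $B$ are linear. The lemma is recorded only because it will be invoked in the sequel: the authors will apply it with $A = \hat{\mathcal{N}}_{\bvec{k}}$ (from \cref{eq:ccal-def}), $B$ a chiral transfer operator, and $v = \ket{\Phi}$ a joint eigenstate established by \cref{prop:gs-prop-1}, to conclude that commutators like $[\hat{\mathcal{N}}_{\bvec{k}}, \hat{C}_{\pm,\bvec{k},\bvec{k}'}]$ annihilate any ground state. So while the proof itself is a one-line calculation, the statement is worth isolating as a lemma because of this downstream use.
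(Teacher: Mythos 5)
Your proof is correct and is exactly the one-line computation the paper has in mind; the paper simply states this as a basic linear algebra fact without proof, so there is nothing to compare beyond noting that your argument ($ABv=\mu\lambda v$, $BAv=\lambda\mu v$, hence $[A,B]v=0$) is the standard and intended one.
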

As a consequence of this lemma, if \(\ket{\Phi}\) is a ground state (and so \(\widehat{\rho}(\bvec{q}') \ket{\Phi} = 0\)) then it must be that for all $\bvec{k} \in \mathcal{K}$
\begin{equation}
  \Big[ \hat{\mathcal{N}}_{\bvec{k}},~ \widehat{\rho}(\bvec{q}') \Big] \ket{\Phi} = 0.
\end{equation}
Since $\hat{\mathcal{N}}_{\bvec{k}}$ only involves two momenta $\bvec{k}$ and $-\bvec{k}$, for any $\bvec{q}' \in (\mathcal{K} + \Gamma^{*}) \setminus \Gamma^{*}$ one can check that the commutator \([\hat{\mathcal{N}}_{\bvec{k}},\widehat{\rho}(\bvec{q}')]\) only involves \(4\) terms.
We can further reduce this to \(2\) terms by considering the double commutator \([ \hat{\mathcal{N}}_{\bvec{k} + \bvec{q}'}, [\hat{\mathcal{N}}_{\bvec{k}},\widehat{\rho}(\bvec{q}')]]\).
In particular, a straightforward calculation (see \cref{sec:proof-gs-prop-2a}) shows that
\begin{equation}
  \label{eq:double-commutator}
    \Bigg[ \hat{\mathcal{N}}_{\bvec{k} + \bvec{q}'},~ \Big[ \hat{\mathcal{N}}_{\bvec{k}},~ \widehat{\rho}(\bvec{q}') \Big]  \Bigg] = a_{\bvec{k}}(\bvec{q}') \hat{C}_{+, \bvec{k}, (\bvec{k} + \bvec{q}')} - \overline{a_{-\bvec{k} - \bvec{q}'}(\bvec{q}')} \hat{C}_{-, (-\bvec{k} - \bvec{q}'), (-\bvec{k})}
\end{equation}
and due to~\cref{lem:joint-evector} every ground state must be annihilated by the above operator.

It turns out we can further simplify the above condition, by noticing that the outcome of applying these two chiral transfer operators (\(\hat{C}_{+, \bvec{k}, (\bvec{k} + \bvec{q}')}\) and \(\hat{C}_{-, (-\bvec{k} - \bvec{q}'), (-\bvec{k})}\)) must be orthogonal and so any ground state must be annihilated by these two terms separately.
\begin{lemma}
  \label{lem:double-commutator}
  If $\ket{\Phi}$ is a many-body ground state which is a joint eigenvector of $\{ \hat{N}_{\bvec{k}} \}_{\bvec{k} \in \mathcal{K}}$ and $\widehat{\rho}(\bvec{q}') \ket{\Phi} = 0$ for all $\bvec{q}' \in (\mathcal{K} + \Gamma^{*}) \setminus \Gamma^{*}$, then for all $\bvec{k} \in \mathcal{K}$ and all $\bvec{q}' \in (\mathcal{K} + \Gamma^{*}) \setminus \Gamma^{*}$
  \begin{equation}
    \label{eq:lem-double-commutator}
    \begin{split}
      a_{\bvec{k}}(\bvec{q}') \hat{C}_{+, \bvec{k}, (\bvec{k} + \bvec{q}')} \ket{\Phi} & = 0 \\[1ex]
      \overline{a_{-\bvec{k} - \bvec{q}'}(\bvec{q}')} \hat{C}_{-, (-\bvec{k} - \bvec{q}'), (-\bvec{k})} \ket{\Phi} & = 0
    \end{split}
  \end{equation}
\end{lemma}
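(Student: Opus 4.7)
The plan is to combine the already-derived double-commutator identity~\cref{eq:double-commutator} with an orthogonality argument coming from the eigenvalues of $\ket{\Phi}$ under the family $\{\hat{\mathcal{N}}_{\bvec{k}}\}$. Since $\ket{\Phi}$ is by hypothesis annihilated by $\widehat{\rho}(\bvec{q}')$ for every $\bvec{q}' \in (\mathcal{K}+\Gamma^*)\setminus\Gamma^*$, and is a joint eigenvector of the commuting self-adjoint family $\{\hat{\mathcal{N}}_{\bvec{k}}\}$ with some eigenvalues $\mu_{\bvec{k}}$, I would first apply Lemma (or rather its direct computation) twice: $[\hat{\mathcal{N}}_{\bvec{k}},\widehat{\rho}(\bvec{q}')]\ket{\Phi} = \hat{\mathcal{N}}_{\bvec{k}}\cdot 0 - \mu_{\bvec{k}}\widehat{\rho}(\bvec{q}')\ket{\Phi} = 0$, and then the same argument with the outer commutator $[\hat{\mathcal{N}}_{\bvec{k}+\bvec{q}'},\,\cdot\,]$ yields $[\hat{\mathcal{N}}_{\bvec{k}+\bvec{q}'},[\hat{\mathcal{N}}_{\bvec{k}},\widehat{\rho}(\bvec{q}')]]\ket{\Phi}=0$. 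Invoking~\cref{eq:double-commutator} this becomes
\begin{equation*}
  a_{\bvec{k}}(\bvec{q}')\,\hat{C}_{+,\bvec{k},(\bvec{k}+\bvec{q}')}\ket{\Phi}
  \;=\; \overline{a_{-\bvec{k}-\bvec{q}'}(\bvec{q}')}\,\hat{C}_{-,(-\bvec{k}-\bvec{q}'),(-\bvec{k})}\ket{\Phi}.
\end{equation*}

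The second step, and the heart of the argument, is to show that the two sides of this equation live in orthogonal subspaces, so that each side must vanish individually. For this I would compute $[\hat{\mathcal{N}}_{\bvec{k}},\hat{C}_{+,\bvec{k},(\bvec{k}+\bvec{q}')}]$ and $[\hat{\mathcal{N}}_{\bvec{k}},\hat{C}_{-,(-\bvec{k}-\bvec{q}'),(-\bvec{k})}]$ directly from the CAR. Because $\hat{\mathcal{N}}_{\bvec{k}}=\hat{n}_{+,\bvec{k}}+\hat{n}_{-,-\bvec{k}}$, and because the $+$ and $-$ chiral transfer operators commute across chirality sectors in all three models (valleyless, valleyful, spinful+valleyful), a short calculation gives that $\hat{C}_{+,\bvec{k},(\bvec{k}+\bvec{q}')}$ raises the $\hat{\mathcal{N}}_{\bvec{k}}$-eigenvalue by exactly $+1$, while $\hat{C}_{-,(-\bvec{k}-\bvec{q}'),(-\bvec{k})}$ lowers it by exactly $-1$, provided $\bvec{q}'\notin\Gamma^*$ (which is part of the hypothesis). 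Hence the two vectors above are eigenvectors of the self-adjoint $\hat{\mathcal{N}}_{\bvec{k}}$ with distinct eigenvalues $\mu_{\bvec{k}}\pm 1$, and are therefore orthogonal. An equation of the form $u=v$ with $u\perp v$ forces $u=v=0$, which yields exactly~\cref{eq:lem-double-commutator}.

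The main obstacle I anticipate is simply the bookkeeping in verifying the eigenvalue shifts uniformly across all three models. In the spinless, valleyless case the calculation is immediate because the chiral transfer operators are single bilinears. In the valleyful and spinful+valleyful cases, however, $\hat{C}_{\pm,\bvec{k},\bvec{k}'}$ is a sum over flavor indices with $m\tau=\pm 1$ (and over spin), so one must check that every term in this sum commutes with $\hat{n}_{\mp,\cdot}$ (different chirality sector) and produces the same additive shift on $\hat{n}_{\pm,\bvec{k}}$. Once this is confirmed, the orthogonality conclusion is identical in all three settings, which is why the proof can be stated uniformly using the chiral-transfer-operator notation. A small sanity check is also needed in the degenerate momentum configurations $\bvec{k}\equiv-\bvec{k}-\bvec{q}'$ or $\bvec{k}+\bvec{q}'\equiv-\bvec{k}$ modulo $\Gamma^*$, but in these cases one verifies that the $\hat{\mathcal{N}}_{\bvec{k}}$-shifts $+1$ and $-1$ remain distinct, so the orthogonality argument survives unchanged.
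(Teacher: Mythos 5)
Your proof is correct and takes essentially the same route as the paper: invoke the double-commutator identity, show that $\hat{C}_{+,\bvec{k},(\bvec{k}+\bvec{q}')}\ket{\Phi}$ and $\hat{C}_{-,(-\bvec{k}-\bvec{q}'),(-\bvec{k})}\ket{\Phi}$ are $\hat{\mathcal{N}}_{\bvec{k}}$-eigenvectors with eigenvalues shifted by $+1$ and $-1$ respectively, and conclude by self-adjointness that they are orthogonal, forcing both terms to vanish separately. One minor remark: your final ``sanity check'' on the degenerate configuration $2\bvec{k}\equiv-\bvec{q}'$ is unnecessary, since the operator identities $[\hat{\mathcal{N}}_{\bvec{k}},\hat{C}_{+,\bvec{k},(\bvec{k}+\bvec{q}')}]=\hat{C}_{+,\bvec{k},(\bvec{k}+\bvec{q}')}$ and $[\hat{\mathcal{N}}_{\bvec{k}},\hat{C}_{-,(-\bvec{k}-\bvec{q}'),(-\bvec{k})}]=-\hat{C}_{-,(-\bvec{k}-\bvec{q}'),(-\bvec{k})}$ hold unconditionally once $\bvec{q}'\notin\Gamma^*$, so the $\pm 1$ shifts never collapse.
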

Note that \cref{lem:double-commutator} does not yet imply~\cref{prop:equiv-characterization} because \(a_{\bvec{k}}(\bvec{q}')\) or \(a_{-\bvec{k} - \bvec{q}'}(\bvec{q}')\) can be zero (and hence~\cref{eq:lem-double-commutator} may be vacuously true).
\begin{proof}
  Using the CAR we can verify that
  \begin{equation}
    \begin{split}
      [ \hat{\mathcal{N}}_{\bvec{k}},\, \hat{C}_{+, \bvec{k}, (\bvec{k} + \bvec{q}')} ] & = \hat{C}_{+, \bvec{k}, (\bvec{k} + \bvec{q}')} \\
      [ \hat{\mathcal{N}}_{\bvec{k}},\, \hat{C}_{-, (-\bvec{k} - \bvec{q}'), (-\bvec{k})} ] & = - \hat{C}_{-, (-\bvec{k} - \bvec{q}'), (-\bvec{k})}. 
    \end{split}
  \end{equation}
  Since $\ket{\Phi}$ is an eigenvector of $\hat{\mathcal{N}}_{\bvec{k}}$ with eigenvalue $N_{occ}$, we have that
  \begin{equation}
    \begin{split}
      \hat{\mathcal{N}}_{\bvec{k}} \hat{C}_{+, \bvec{k}, (\bvec{k} + \bvec{q}')} \ket{\Phi}
      & = \hat{C}_{+, \bvec{k}, (\bvec{k} + \bvec{q}')} \hat{\mathcal{N}}_{\bvec{k}} \ket{\Phi} + \Big[ \hat{\mathcal{N}}_{\bvec{k}}, \hat{C}_{+, \bvec{k}, (\bvec{k} + \bvec{q}')} \Big] \ket{\Phi} \\
      & = (N_{occ} + 1) \hat{C}_{+, \bvec{k}, (\bvec{k} + \bvec{q}')}  \ket{\Phi}.
    \end{split}
  \end{equation}
  So  $\hat{C}_{+, \bvec{k}, (\bvec{k} + \bvec{q}')}  \ket{\Phi}$ is an eigenvector of $\hat{\mathcal{N}}_{\bvec{k}}$ with eigenvalue $(N_{occ} + 1)$.
  
  A similar calculation shows that
  \begin{equation}
    \hat{\mathcal{N}}_{\bvec{k}} \hat{C}_{-, (-\bvec{k} - \bvec{q}'), (-\bvec{k})} \ket{\Phi} = (N_{occ} - 1) \hat{C}_{-, (-\bvec{k} - \bvec{q}'), (-\bvec{k})} \ket{\Phi}.
  \end{equation}
  But $\hat{\mathcal{N}}_{\bvec{k}}$ is a self-adjoint operator so it must be that $\hat{C}_{+, \bvec{k}, (\bvec{k} + \bvec{q}')}  \ket{\Phi}$ and $\hat{C}_{-, (-\bvec{k} - \bvec{q}'), (-\bvec{k})} \ket{\Phi}$ are orthogonal.
  Since these two states are orthogonal, \cref{eq:double-commutator} implies
  \begin{equation}
    \label{eq:double-commutator-2}
    \begin{split}
     & a_{\bvec{k}}(\bvec{q}') \hat{C}_{+, \bvec{k}, (\bvec{k} + \bvec{q}')} \ket{\Phi} = 0, \\[1ex]
      & \overline{a_{-\bvec{k} - \bvec{q}'}(\bvec{q}')} \hat{C}_{-, (-\bvec{k} - \bvec{q}'), (-\bvec{k})} \ket{\Phi} = 0
    \end{split}
  \end{equation}
  which proves the statement.
\end{proof}
While generally the value of \(a_{\bvec{k}}(\bvec{q}')\) can be difficult to determine, for any \(\bvec{k} \in \mathcal{K}\), we have that \(a_{\bvec{k}}(\bvec{0}) = 1\) since by definition
\begin{equation}
  a_{\bvec{k}}(\bvec{0}) = \int_{\Omega}^{} \| u_{1,\bvec{k}}(\bvec{r}) \|^{2} \dd{\bvec{r}} = 1.
\end{equation}
Using the Lipschitz property of the map $\bvec{q}' \to |a_{\bvec{k}}(\bvec{q}')|^{2}$, we have the following proposition
\begin{proposition}[{\cite[Lemma 4.1]{BeckerLinStubbs2023}}]
  \label{prop:non-degeneracy}
  There exists a constant $q_{c}>0$ so that $|a_{\bvec{k}}(\bvec{q}')| > 0$ for all $\bvec{k} \in \Omega^{*}$, $| \bvec{q}' | < q_{c}$.
\end{proposition}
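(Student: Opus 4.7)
The plan is to exploit the normalization $a_{\bvec{k}}(\bvec{0}) = 1$ together with a uniform continuity estimate for $\bvec{q}' \mapsto a_{\bvec{k}}(\bvec{q}')$ that holds uniformly across the compact torus $\bvec{k} \in \Omega^*$. For $|\bvec{q}'|$ small enough we have $\bvec{q}' \in \Omega^*$, so the reciprocal lattice vector $\bvec{G}$ in the definition of the form factor vanishes, and
\begin{equation*}
  a_{\bvec{k}}(\bvec{q}') = \tfrac{1}{|\Omega|} \int_{\RR^2} \braket{u_{+,\bvec{k}}(\bvec{r}), u_{+,\bvec{k}+\bvec{q}'}(\bvec{r})}_{\CC^L} \dd{\bvec{r}}.
\end{equation*}
Applying Cauchy--Schwarz to the integrand of $a_{\bvec{k}}(\bvec{q}') - a_{\bvec{k}}(\bvec{0})$ and using $\|u_{+,\bvec{k}}\|_{L^2(\Omega)} = 1$ reduces the problem to showing that $\bvec{k} \mapsto u_{+,\bvec{k}}$ admits a uniform $L^2$-modulus of continuity on $\Omega^*$ in some smooth gauge.

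The central observation is that $|a_{\bvec{k}}(\bvec{q}')|^2$ is a gauge-invariant quantity: it equals $\tr\bigl[P_+(\bvec{k})\, P_+(\bvec{k}+\bvec{q}')\bigr]$, where $P_+(\bvec{k})$ is the rank-one orthogonal projector onto the flat-band state in the $+$ chiral sector at momentum $\bvec{k}$. I would therefore work with $P_+(\bvec{k})$ directly and avoid the question of whether a globally continuous Bloch frame exists. By \cref{assume:bands} the two flat bands sit at energy zero, isolated from the dispersive spectrum of $H_{\bvec{k}}(\alpha)$; chiral symmetry further decouples the $+$ and $-$ sectors so that $P_+(\bvec{k})$ is the spectral projector associated to a simple, isolated eigenvalue of a self-adjoint operator depending real-analytically on $\bvec{k}$. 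Kato's perturbation theory then guarantees that $\bvec{k} \mapsto P_+(\bvec{k})$ is real-analytic; compactness of $\Omega^*$ upgrades this to uniform Lipschitz continuity in the operator norm, with some constant $L < \infty$.

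Combining the two ingredients, one obtains
\begin{equation*}
  \bigl| |a_{\bvec{k}}(\bvec{q}')|^2 - 1 \bigr| = \bigl| \tr\bigl[ P_+(\bvec{k})(P_+(\bvec{k}+\bvec{q}') - P_+(\bvec{k}))\bigr]\bigr| \leq L\, |\bvec{q}'|,
\end{equation*}
uniformly in $\bvec{k} \in \Omega^*$. Choosing $q_c := 1/(2L)$ yields $|a_{\bvec{k}}(\bvec{q}')|^2 \geq 1/2$, and hence $|a_{\bvec{k}}(\bvec{q}')| > 0$, whenever $|\bvec{q}'| < q_c$.

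The step I expect to be the main obstacle is establishing a \emph{uniform} regularity estimate for $P_+(\bvec{k})$ across the whole torus. Pointwise analyticity follows from standard resolvent perturbation theory once one knows the flat band is separated from the rest of the spectrum at each $\bvec{k}$, but an explicit bound on the gap (and therefore on $L$) requires knowing that the dispersive bands of $H(\alpha)$ never touch zero away from the flat bands. Fortunately this is exactly the content of the flat band isolation result established in \cite{BeckerEmbreeWittsten2022,TarnopolskyKruchkovVishwanath2019} for magic $\alpha$ under \cref{assume:bands}. An alternative, more concrete route would use the explicit Jacobi-$\theta$ representation of the flat band Bloch functions in the chiral limit (as invoked elsewhere in this paper) to write $|a_{\bvec{k}}(\bvec{q}')|$ as a ratio of theta quotients and read off its smoothness and strict positivity in a neighborhood of $\bvec{q}' = \bvec{0}$ directly.
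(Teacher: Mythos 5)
Your proof is correct and follows the same route the paper indicates (and which underlies the cited Lemma~4.1 of \cite{BeckerLinStubbs2023}): Lipschitz continuity of the gauge-invariant quantity $|a_{\bvec{k}}(\bvec{q}')|^{2}$ combined with the normalization $a_{\bvec{k}}(\bvec{0}) = 1$. One minor imprecision worth correcting: $P_{+}(\bvec{k})$ is not a spectral projector of $H_{\bvec{k}}(\alpha)$ itself (its zero eigenvalue has multiplicity two); rather it equals $P_{0}(\bvec{k})\tfrac{\mathds{1}+S}{2}$ where $P_{0}(\bvec{k})$ is the rank-two spectral projector and $S = \diag(I_{2\times 2}, -I_{2\times 2})$ is the chirality operator, or equivalently it is the spectral projector of $D_{\bvec{k}}(\alpha)^{\dagger} D_{\bvec{k}}(\alpha)$ onto its simple zero eigenvalue, so the Kato analyticity argument goes through with this small extra step.
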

To make use of this fact, we make the following assumption on the grid $\mathcal{K}$.
\begin{assumption}
  \label{assume:grid-spacing}
  We assume that $n_{\bvec{k}_{x}}$ and $n_{\bvec{k}_{y}}$ are chosen so that the spacing between two neighboring points in $\mathcal{K}$ are separated by a distance less than $q_{c}$.
\end{assumption}
We now have all the tools needed to prove~\cref{prop:gs-prop-2}.
\begin{proof}[Proof of~\cref{prop:gs-prop-2}]
  To prove this proposition, we make use of the following identity
  \begin{equation}
    \Big[ \hat{C}_{\pm, \bvec{k}, (\bvec{k} + \bvec{q}_{1})},~ \hat{C}_{\pm, (\bvec{k} + \bvec{q}_{1}), (\bvec{k} + \bvec{q}_{1} + \bvec{q}_{2})} \Big] = \hat{C}_{\pm, \bvec{k}, (\bvec{k} + \bvec{q}_{1} + \bvec{q}_{2})}-\delta_{\bvec{q}_{1} + \bvec{q}_{2} \in \Gamma^{*}} \hat{C}_{\pm, (\bvec{k} + \bvec{q}_{1}), (\bvec{k} + \bvec{q}_{1})},
  \end{equation}
  Therefore, for any $\bvec{q}_{1} + \bvec{q}_{2} \not\in \Gamma^{*}$ we have the following implication:
  \begin{equation}
    \label{eq:ferromagnetic-implication}
    \begin{array}{r}
      \hat{C}_{\pm, \bvec{k}, (\bvec{k} + \bvec{q}_{1})} \ket{\Phi} = 0 \\[1ex]
      \hat{C}_{\pm, (\bvec{k} + \bvec{q}_{1}), (\bvec{k} + \bvec{q}_{1} + \bvec{q}_{2})} \ket{\Phi} = 0
    \end{array}
    \quad
    \Longrightarrow
    \quad
    \hat{C}_{\pm, \bvec{k}, (\bvec{k} + \bvec{q}_{1} + \bvec{q}_{2})} \ket{\Phi} = 0.
  \end{equation}
  As for the proof, we begin by fixing an arbitrary $\bvec{k} \in \mathcal{K}$ and let $\bvec{e}_{1} = \frac{\bvec{b}_{1}}{n_{k_{x}}}$, $\bvec{e}_{2} = \frac{\bvec{b}_{2}}{n_{k_{y}}}$ where we recall $\bvec{b}_{1}$, $\bvec{b}_{2}$ are generating vectors for $\Gamma^{*}$.
  We will show that for any $\bvec{k}' \in \mathcal{K}$ where $\bvec{k}' \neq \bvec{k}$ that $\hat{C}_{\pm, \bvec{k}, \bvec{k}'} \ket{\Phi} = 0$.
  
  Because two neighboring points in \(\mathcal{K}\) differ by \(\pm \bvec{e}_{1}\) or \(\pm \bvec{e}_{2}\) (modulo the lattice)~\cref{assume:grid-spacing} and~\cref{eq:ferromagnetic-implication} gives us the implication
  \begin{equation}
    \begin{split}
      \hat{C}_{\pm, \bvec{k}, (\bvec{k} + \bvec{e}_{1})} \ket{\Phi} & = 0 \\
      \hat{C}_{\pm, (\bvec{k} + \bvec{e}_{1}), (\bvec{k} + 2 \bvec{e}_{1})} \ket{\Phi} & = 0
    \end{split}
    \quad
    \Longrightarrow
    \quad
    \hat{C}_{\pm, \bvec{k}, (\bvec{k} + 2 \bvec{e}_{1})} \ket{\Phi} = 0.
  \end{equation}
  so long as \(2 \bvec{e}_{1} \not\in \Gamma^{*}\).
  
  Once we know that every ground state must satisfy \(\hat{C}_{\pm, \bvec{k}, (\bvec{k} + 2 \bvec{e}_{1})} \ket{\Phi} = 0\), we also have the implication
  \begin{equation}
    \begin{split}
      \hat{C}_{\pm, \bvec{k}, (\bvec{k} + 2 \bvec{e}_{1})} \ket{\Phi} & = 0 \\
      \hat{C}_{\pm, (\bvec{k} + 2 \bvec{e}_{1}), (\bvec{k} + 2 \bvec{e}_{1} + \bvec{e}_{2})} \ket{\Phi} & = 0
    \end{split}
    \quad
    \Longrightarrow
    \quad
    \hat{C}_{\pm, \bvec{k}, (\bvec{k} + 2 \bvec{e}_{1} + \bvec{e}_{2})} \ket{\Phi} = 0.
  \end{equation}
  so long as \(2 \bvec{e}_{1} + \bvec{e}_{2} \not\in \Gamma^{*}\).
  Since any $\bvec{k}' \in \mathcal{K}$ may be written as $\bvec{k}' = m \bvec{e}_{1} + n \bvec{e}_{2}$ for some $m, n \in \mathbb{Z}$, we can repeat this argument to conclude that \(\hat{C}_{\pm, \bvec{k}, \bvec{k}'} \ket{\Phi} = 0\) so long as $\bvec{k} - \bvec{k}' \not\in \Gamma^{*}$. 
\end{proof}

\section{Ground State Manifold of the Spinless, Valleyless Model and Occupation Vector Representations}
\label{sec:repr-theory-ground}
From the arguments in the previous section we have proven~\cref{prop:equiv-characterization} which fully characterizes all ground states of MATBG.
It is worth observing that this characterization only depends on the $\hat{C}$ operators and hence is \textit{independent} of the underlying material details.
As outlined in~\cref{sec:nature-many-body}, \cref{lem:uniform-filling} implies that we can decompose any many-body ground state
\begin{equation}
  \label{eq:particle-number-decomp}
  \begin{split}
    \ket{\Phi} & = \sum_{\lambda_{+} = 0}^{N_{occ}}  c_{\lambda_{+}}  \ket{\Phi_{\lambda_{+}}},~\text{where} \\
                 \hat{n}_{+, \bvec{k}} & \ket{\Phi_{\lambda_{+}}} = \lambda_{+}\ket{\Phi_{\lambda_{+}}} \\
    \hat{n}_{-, \bvec{k}} & \ket{\Phi_{\lambda_{+}}} = (N_{occ} - \lambda_{+}) \ket{\Phi_{\lambda_{+}}}
  \end{split}
\end{equation}
Here, the sum over $\lambda_{+}$ runs over $\{ 0, \cdots, N_{occ} \}$ since these are all eigenvalues of $\hat{n}_{+, \bvec{k}}$.
For the single valley case,~\cref{eq:particle-number-decomp} is enough to completely fix the ground state
\begin{theorem}
  \label{thm:single-valley}
  For the  spinless, valleyless model of MATBG, a many-body state $\ket{\Phi} $ is a ground state if and only if it can be written as
  \begin{equation}
    \ket{\Phi} = \alpha \ket{\tilde{\Phi}_{1}} + \beta \ket{\tilde{\Phi}_{0}} \qquad \alpha, \beta \in \CC
  \end{equation}
  where
  \begin{equation}
    \label{eq:single-valley-fsd-def}
    \ket{\tilde{\Phi}_{1}} = \left( \prod_{\bvec{k} \in \mathcal{K}}^{} \hat{f}_{+, \bvec{k}}^{\dagger} \right) \ket{\mathrm{vac}} \qquad 
    \ket{\tilde{\Phi}_{0}} = \left( \prod_{\bvec{k} \in \mathcal{K}}^{} \hat{f}_{-, \bvec{k}}^{\dagger} \right) \ket{\mathrm{vac}}
  \end{equation}
\end{theorem}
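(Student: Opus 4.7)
The plan is to obtain the theorem as an almost immediate corollary of~\cref{prop:equiv-characterization} combined with~\cref{prop:uniform-chiral-filling} (and its corollary~\cref{coro:uniform-filling}), once we exploit the fact that $N_{occ}=1$ in the spinless, valleyless model. First, by~\cref{prop:equiv-characterization}, a many-body state $\ket{\Phi}$ is a ground state if and only if it is half-filled and $\hat{C}_{\pm,\bvec{k},\bvec{k}'}\ket{\Phi}=0$ whenever $\bvec{k}-\bvec{k}'\notin\Gamma^{*}$. Assuming $\ket{\Phi}$ is a ground state, I would invoke~\cref{prop:uniform-chiral-filling} to decompose $\ket{\Phi}=\sum_{\ell}c_{\ell}\ket{\Phi^{(\ell)}}$ into joint eigenvectors of the commuting self-adjoint family $\{\hat{n}_{\pm,\bvec{k}}:\bvec{k}\in\mathcal{K}\}$ with $\bvec{k}$-independent eigenvalues $\lambda_{\pm}^{(\ell)}$.

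The key simplification is that, with $N_{occ}=1$, the operators $\hat{n}_{\pm,\bvec{k}}=\hat{f}_{\pm,\bvec{k}}^{\dagger}\hat{f}_{\pm,\bvec{k}}$ are ordinary fermionic number operators with spectrum $\{0,1\}$, while~\cref{coro:uniform-filling} forces $\lambda_{+}^{(\ell)}+\lambda_{-}^{(\ell)}=1$. Hence only two occupation patterns can occur: either $(\lambda_{+},\lambda_{-})=(1,0)$, giving a state I call $\ket{\Phi_{1}}$, or $(\lambda_{+},\lambda_{-})=(0,1)$, giving $\ket{\Phi_{0}}$. Thus $\ket{\Phi}=c_{1}\ket{\Phi_{1}}+c_{0}\ket{\Phi_{0}}$.

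The next step is to show that each of $\ket{\Phi_{1}},\ket{\Phi_{0}}$ is a scalar multiple of the corresponding Slater determinant in~\cref{eq:single-valley-fsd-def}. Here I would use the basic Fock-space fact that a vector simultaneously diagonalized by all single-mode number operators, with a prescribed eigenvalue pattern on a basis of modes, is unique up to an overall phase: in our setting the conditions $\hat{n}_{+,\bvec{k}}\ket{\Phi_{1}}=\ket{\Phi_{1}}$ and $\hat{n}_{-,\bvec{k}}\ket{\Phi_{1}}=0$ for every $\bvec{k}\in\mathcal{K}$ completely specify the occupation of every single-particle orbital, so $\ket{\Phi_{1}}=\alpha\ket{\tilde\Phi_{1}}$; the same argument gives $\ket{\Phi_{0}}=\beta\ket{\tilde\Phi_{0}}$. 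This is the one place where a little care is warranted: I would formalize the uniqueness by observing that the projector onto the fixed occupation pattern is rank one in the fermionic Fock space over the finite mode set $\{(\pm,\bvec{k}):\bvec{k}\in\mathcal{K}\}$.

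For the converse, I would directly verify that $\ket{\tilde\Phi_{1}}$ and $\ket{\tilde\Phi_{0}}$ satisfy the hypotheses of~\cref{prop:equiv-characterization}: both are half-filled by construction, and for $\ket{\tilde\Phi_{1}}$ every $+$ mode is filled while every $-$ mode is empty, so $\hat{f}_{-,\bvec{k}'}\ket{\tilde\Phi_{1}}=0$ and $\hat{f}_{+,\bvec{k}}^{\dagger}\hat{f}_{+,\bvec{k}'}\ket{\tilde\Phi_{1}}=0$ whenever $\bvec{k}\neq\bvec{k}'$ because the creation operator hits an already-occupied mode; the check for $\ket{\tilde\Phi_{0}}$ is identical with the roles of $+$ and $-$ swapped. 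There is no real obstacle in this proof — the heavy lifting has already been done in~\cref{prop:equiv-characterization} and~\cref{prop:uniform-chiral-filling} — the only thing to be careful about is invoking the Fock-space uniqueness cleanly.
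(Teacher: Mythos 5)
Your proof is correct and follows essentially the same path as the paper: use the decomposition into joint $\hat{n}_{\pm,\bvec{k}}$-eigenstates with $\bvec{k}$-independent eigenvalues, note that $N_{occ}=1$ forces the occupation pattern to be either $(1,0)$ or $(0,1)$, and then invoke Fock-space uniqueness to identify each component with the corresponding Slater determinant. The only difference is that you spell out the uniqueness argument and the converse verification more explicitly, both of which the paper leaves implicit (the converse having already been handled by the sufficiency direction of~\cref{prop:equiv-characterization}).
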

\begin{proof}
  For the single valley model, $N_{occ} = 1$ so the sum in~\cref{eq:particle-number-decomp} only has two terms, $\ket{\Phi_{1}}$ and $\ket{\Phi_{0}}$, which must satisfy
  \begin{equation}
    \left\{
    \begin{array}{rl}
      \hat{n}_{+, \bvec{k}} \ket{\Phi_{1}} & = \ket{\Phi_{1}} \\
      \hat{n}_{-, \bvec{k}} \ket{\Phi_{1}} & = 0
    \end{array}
    \right.
    \qquad
    \left\{
    \begin{array}{rl}
      \hat{n}_{+, \bvec{k}} \ket{\Phi_{0}} & = 0 \\
      \hat{n}_{-, \bvec{k}} \ket{\Phi_{0}} & = \ket{\Phi_{0}}
    \end{array}
    \right.
  \end{equation}
  The only state which is a $+1$ eigenstate of $\hat{n}_{+, \bvec{k}}$ for all $\bvec{k} \in \mathcal{K}$ is $\left( \prod_{\bvec{k} \in \mathcal{K}}^{} \hat{f}_{+, \bvec{k}}^{\dagger} \right) \ket{\mathrm{vac}}$ and the only state $+1$ eigenstate of $\hat{n}_{-, \bvec{k}}$ for all $\bvec{k} \in \mathcal{K}$ is $\left( \prod_{\bvec{k} \in \mathcal{K}}^{} \hat{f}_{-, \bvec{k}}^{\dagger} \right) \ket{\mathrm{vac}}$ which proves the claim.
\end{proof}
The argument used in the proof of~\cref{thm:single-valley} generally fixes the form of $\ket{\Phi_{\lambda_{+}}}$ when $\lambda_{+} = 0$ and $\lambda_{+} = N_{occ}$ since there is only one state which has $\lambda_{+} = 0$ for all $\bvec{k}$ and $\lambda_{+} = N_{occ}$ for all $\bvec{k}$.
Since in the single valley case, the only possible values of $\lambda_{+}$ are $0$ or $N_{occ}$ so this completely fixes the ground state.

To handle the other cases (i.e. $\lambda_{+} \not\in \{ 0, N_{occ} \}$), we will use the fact that the chiral transfer operators commute with a $\mathsf{U}(N_{occ}) \times \mathsf{U}(N_{occ})$ basis transformation as noted in \cref{sec:nature-many-body}.
To make a formal argument, we introduce a bijection between the states $\ket{\Phi_{\lambda_{+}}}$ from the decomposition in~\cref{eq:particle-number-decomp} into a subspace of $\bigwedge^{N_{occ} N_{\bvec{k}}} \CC^{2 N_{occ} N_{\bvec{k}}}$; 
we refer to this mapping as the \textit{occupation vector} representation of a many-body state.

\subsection{Occupation Vector Representation of a Many-Body State}
\label{sec:occup-vect-repr}
Throughout this work, we have described both the many-body operators and many-body states using the language of second quantization (i.e. in terms of the creation and annihilation operators).
Second quantization is an attractive language for studying many particle systems, because it expresses many-body operators in a way which is agnostic to the number of particles in the system.
Because of the decomposition in~\cref{eq:particle-number-decomp} however, once we know $\lambda_{+}$, we know precisely how many electrons are in each chiral sector at each $\bvec{k} \in \mathcal{K}$ making second quantization unnecessary.

Let us begin by considering the valleyful model with $\lambda_{+} = 1$ and $N_{\bvec{k}} = 1$.
Every state from this space may be written as a linear combination of states of the form
\begin{equation}
  \label{eq:lambda-1-nk-1}
  \left( \alpha_{1} \hat{f}_{(+,+), \bvec{k}_{1}}^{\dagger} + \alpha_{2} \hat{f}_{(-,-), \bvec{k}_{1}}^{\dagger} \right) \left( \beta_{1} \hat{f}_{(+,-), \bvec{k}_{1}}^{\dagger} + \beta_{2} \hat{f}_{(-,+), \bvec{k}_{1}}^{\dagger} \right) \ket{\mathrm{vac}}.
\end{equation}
We may identify this state with the vector
\begin{equation}
  \begin{bmatrix}
    \alpha_{1} \\
    \alpha_{2}
  \end{bmatrix}
  \otimes
  \begin{bmatrix}
    \beta_{1} \\
    \beta_{2}
  \end{bmatrix} \in \CC^{2} \otimes \CC^{2}.
\end{equation}
Importantly, under this identification, for any $(U_{1}, U_{2}) \in \mathsf{U}(2) \times \mathsf{U}(2)$  performing a change of basis for the creation operators in~\cref{eq:lambda-1-nk-1} results in the mapping:
\begin{equation}
  \begin{bmatrix}
    \alpha_{1} \\
    \alpha_{2}
  \end{bmatrix}
  \otimes
  \begin{bmatrix}
    \beta_{1} \\
    \beta_{2}
  \end{bmatrix}
  \mapsto
  \left(
  U_{1}
  \begin{bmatrix}
    \alpha_{1} \\
    \alpha_{2}
  \end{bmatrix}
  \right)
  \otimes
  \left(
  U_{2}
  \begin{bmatrix}
    \beta_{1} \\
    \beta_{2}
  \end{bmatrix}
  \right),
\end{equation}
so the image of this map is a faithful representation $\mathsf{U}(2) \times \mathsf{U}(2)$.
We refer to $\begin{bmatrix} \alpha_{1} & \alpha_{2}\end{bmatrix}^{\top} \otimes \begin{bmatrix} \beta_{1} & \beta_{2} \end{bmatrix}^{\top}$ as the \textit{occupation vector} corresponding to~\cref{eq:lambda-1-nk-1} since $\begin{bmatrix} \alpha_{1} & \alpha_{2}\end{bmatrix}^{\top}$ describes the occupation in the $+$ chiral sector at momentum $\bvec{k}_{1}$ (and similarly for $-$ chiral sector).

For $N_{\bvec{k}} = 2$ and $\lambda_{+} = 1$ in the valleyful model, any state may be written as a linear combination of states of the form
\begin{equation}
  \begin{split}
    & \left( \alpha_{1} \hat{f}_{(+,+), \bvec{k}_{1}}^{\dagger} + \alpha_{2} \hat{f}_{(-,-), \bvec{k}_{1}}^{\dagger} \right) \left( \beta_{1} \hat{f}_{(+,+), \bvec{k}_{2}}^{\dagger} + \beta_{2} \hat{f}_{(-,-), \bvec{k}_{2}}^{\dagger} \right) \\[1ex]
    & \hspace{2em} \left( \gamma_{2} \hat{f}_{(+,-), \bvec{k}_{1}}^{\dagger} + \gamma_{2} \hat{f}_{(-,+), \bvec{k}_{1}}^{\dagger} \right) \left( \delta_{2} \hat{f}_{(+,-), \bvec{k}_{2}}^{\dagger} + \delta_{2} \hat{f}_{(-,+), \bvec{k}_{2}}^{\dagger} \right)
    \ket{\mathrm{vac}}.
  \end{split}
\end{equation}
which can be identified with
\begin{equation}
  \left(
  \begin{bmatrix}
    \alpha_{1} \\
    \alpha_{2}
  \end{bmatrix}
  \otimes
  \begin{bmatrix}
    \beta_{1} \\
    \beta_{2}
  \end{bmatrix}
  \right)
  \otimes
  \left(
  \begin{bmatrix}
    \gamma_{1} \\
    \gamma_{2}
  \end{bmatrix}
  \otimes
  \begin{bmatrix}
    \delta_{1} \\
    \delta_{2}
  \end{bmatrix}
\right)
\in ( \CC^{2} \otimes \CC^{2} ) \otimes (\CC^{2} \otimes \CC^{2}).
\end{equation}
This mapping is well defined once we fix an order on the set of momenta and the creation operators in each of the chiral sector.
A basis change by $\mathsf{U}(N_{occ}) \times \mathsf{U}(N_{occ})$ transforms the occupation vector as
\begin{equation}
  \left(
  \begin{bmatrix}
    \alpha_{1} \\
    \alpha_{2}
  \end{bmatrix}
  \otimes
  \begin{bmatrix}
    \beta_{1} \\
    \beta_{2}
  \end{bmatrix}
  \right)
  \otimes
  \left(
  \begin{bmatrix}
    \gamma_{1} \\
    \gamma_{2}
  \end{bmatrix}
  \otimes
  \begin{bmatrix}
    \delta_{1} \\
    \delta_{2}
  \end{bmatrix}
\right)
\mapsto
U_{1}^{\otimes 2}
\left(
  \begin{bmatrix}
    \alpha_{1} \\
    \alpha_{2}
  \end{bmatrix}
  \otimes
  \begin{bmatrix}
    \beta_{1} \\
    \beta_{2}
  \end{bmatrix}
\right)
\otimes
U_{2}^{\otimes 2}
\left(
  \begin{bmatrix}
    \gamma_{1} \\
    \gamma_{2}
  \end{bmatrix}
  \otimes
  \begin{bmatrix}
    \delta_{1} \\
    \delta_{2}
  \end{bmatrix}
\right)
\end{equation}
which is clearly a faithful representation.
For arbitrary $N_{\bvec{k}}$ in the valleyful model, we can embed the set of all states with $\lambda_{+} = 1$ into $(\CC^{2})^{\otimes N_{\bvec{k}}} \otimes (\CC^{2})^{\otimes N_{\bvec{k}}}$ in the same way.

For the valleyful and spinful model, the mapping to an occupation vector is nearly identical to the valleyful case.
For example, when $N_{\bvec{k}} = 1$ and $\lambda_{+} = 1$ we can embed each vector from this space into $\CC^{4} \otimes (\CC^{4} \wedge \CC^{4} \wedge \CC^{4})$ and for $\lambda_{+} = 2$ we embed into $(\CC^{4} \wedge \CC^{4}) \otimes (\CC^{4} \wedge \CC^{4})$.
Note that where there are more than one electron per $\bvec{k}$, we map occupation vector of the many-body state into a wedge product; either \(\mathbb{C}^{4} \wedge \mathbb{C}^{4}\) or \(\mathbb{C}^{4} \wedge \mathbb{C}^{4} \wedge \mathbb{C}^{4}\) depending on whether there are two or three electrons at each momenta.
This choice of embedding space ensures that the embedded space remains a faithful representation of $\mathsf{U}(N_{occ}) \times \mathsf{U}(N_{occ})$.
A summary of the embedding in each case is given in~\cref{fig:occupation-vector}.

With this mapping to occupation vectors, we can now directly apply tools from representation theory to the prove the main result for the valleyful case (\cref{sec:valleyful-case}) and the valleyful and spinful case (\cref{sec:vall-spinf-case}).

\begin{figure}[h]
  \centering
  \renewcommand{\arraystretch}{1.4}
  \begin{tabular}[t]{ll}
    \multicolumn{2}{c}{\textbf{Valleyful Case}} \\
    $\lambda_{+}$ & Embedded space at each $\bvec{k}$ \\
    \hline
    0 & N/A \\
    1 & $\CC^{2} \otimes \CC^{2}$ \\
    2 & N/A \\
  \end{tabular}
  \qquad
  \renewcommand{\arraystretch}{1.4}
  \begin{tabular}[t]{ll}
    \multicolumn{2}{c}{\textbf{Valleyful and Spinful Case}} \\
    $\lambda_{+}$ & Embedded space at each $\bvec{k}$ \\
    \hline
    0 & N/A \\
    1 & $\CC^{4} \otimes (\CC^{4} \wedge \CC^{4} \wedge \CC^{4})$ \\
    2 & $(\CC^{4} \wedge \CC^{4}) \otimes (\CC^{4} \wedge \CC^{4})$ \\
    3 & $(\CC^{4} \wedge \CC^{4} \wedge \CC^{4}) \otimes \CC^{4}$ \\
    4 & N/A \\
  \end{tabular}
  \caption{Occupation vector embedding based on the value of $\lambda_{+}$}
  \label{fig:occupation-vector}
\end{figure}

\section{Tools from Representation Theory}
\label{sec:tools-from-repr}

Due to the hidden symmetry, the ground states of the valleyful models (both spinless and spinful) exhibit significantly greater complexity than those of the spinless, valleyless model. Tools from representation theory play an important role in analyzing these states. At a high level, we examine the action of the $\mathsf{U}(N_{occ})$ group on some tensor product space $(\mathbb{C}^d)^{\otimes N}$. By the Schur--Weyl duality, this space decomposes into irreducible representations (irreps) indexed by Young diagrams. We will argue that most of these irreps do not contribute to the ground state manifold.
The key results supporting this argument are the Theorem of the Highest Weight (\cref{sec:theor-high-weight}), which enables us to identify generating vectors for the irreps, and the Littlewood--Richardson rules (\cref{sec:young-tabl-littl}), which determine the irreps relevant to the ground state manifold. To illustrate these methods, we analyze a simple example of a valleyful model with two sites in \cref{sec:example-two-sites}.  

We first state without proof a few important results from the representation theory of \(\mathfrak{su}(d)\) which will be used as part of the arguments~\cref{sec:valleyful-case,sec:vall-spinf-case}.
For a more in depth discussion, we refer the reader to~\cref{sec:tools-from-repr-appendix} as well as \cite[Chapter 3]{Ragone2024}.
For our calculations, we will represent the standard basis of $\CC^{d}$, denoted $\{ e_{j} \}_{j = 1}^{d}$, with bra-ket notation $\ket{j} := e_{j}$.
When we have a tensor product space $\CC^{d} \otimes \CC^{d}$, we will also suppress the implicit tensor product symbol $e_{j} \otimes e_{k} = \ket{j} \otimes \ket{k} = \ket{j} \ket{k}$.

\subsection{Theorem of Highest Weight}
\label{sec:theor-high-weight}
We begin by stating a special case of the Theorem of Highest Weight. Here, we are working with tensor powers $(\CC^d)^{\otimes N}$ of the defining representation of the Lie group $\mathsf{SU}(d)$, which are given by the map $U\mapsto U^{\otimes N}$. This then induces a representation of the Lie algebra $\mathfrak{su}(d)$ on $(\CC^d)^{\otimes N}$ given by the map 
\[
X\mapsto X\otimes (\idty_{d\times d})^{\otimes N-1} + (\idty_{d\times d}) \otimes X\otimes (\idty_{d\times d})^{\otimes N-2} + \dots + (\idty_{d\times d})^{\otimes N-1}\otimes X.
\]
\begin{theorem}[Theorem of Highest Weight for \(\mathfrak{su}(d)\) and \((\mathbb{C}^{d})^{\otimes N}\)]
  \label{thm:theorem-of-highest-wt}
  For any \(N \in \mathbb{Z}_{+}\), consider the tensor representation \((\mathbb{C}^{d})^{\otimes N}\) of \(\mathfrak{su}(d)\) and let \(V \) be an irreducible subrepresentation.
  There exists a highest weight vector \(v \in (\mathbb{C}^{d})^{\otimes N}\) such that
  \begin{equation}
    V = \gen_{\mathfrak{su}(d)}\{ v \}.
  \end{equation}
\end{theorem}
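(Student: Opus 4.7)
The plan is to run the standard highest-weight argument, specialized to the concrete setup here. The first step is to complexify: every $X \in \mathfrak{su}(d)$ acts as a skew-Hermitian operator on each factor $\CC^d$, so the given representation extends $\CC$-linearly to a representation of $\mathfrak{sl}(d,\CC)$ on $(\CC^d)^{\otimes N}$. Because $\mathfrak{sl}(d,\CC) = \mathfrak{su}(d) \oplus i\,\mathfrak{su}(d)$, a complex subspace is $\mathfrak{su}(d)$-invariant if and only if it is $\mathfrak{sl}(d,\CC)$-invariant; in particular $V$ is irreducible as an $\mathfrak{sl}(d,\CC)$-representation, and $\gen_{\mathfrak{su}(d)}\{v\} = \gen_{\mathfrak{sl}(d,\CC)}\{v\}$ for every $v \in V$. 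So it suffices to exhibit a highest weight vector for the complexified action.

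Next, I would invoke the triangular decomposition $\mathfrak{sl}(d,\CC) = \mathfrak{n}^- \oplus \mathfrak{h} \oplus \mathfrak{n}^+$, where $\mathfrak{h}$ is the traceless diagonal matrices and $\mathfrak{n}^\pm$ are the strictly upper/lower triangular matrices. Each basis tensor $\ket{i_1}\ket{i_2}\cdots\ket{i_N}$ is a joint eigenvector of every $H = \operatorname{diag}(h_1,\ldots,h_d) \in \mathfrak{h}$, with eigenvalue $h_{i_1}+\cdots+h_{i_N}$, so $(\CC^d)^{\otimes N}$ decomposes into weight spaces under $\mathfrak{h}$. Since $V$ is $\mathfrak{h}$-invariant, it inherits a weight space decomposition $V = \bigoplus_\mu V_\mu$ where $V_\mu = V \cap ((\CC^d)^{\otimes N})_\mu$, and only finitely many $V_\mu$ are nonzero.

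Now let $\lambda$ be a weight of $V$ that is maximal with respect to any total order refining the standard partial order on weights (for instance lexicographic order on the multiplicity vector); such a maximum exists because the set of weights of $V$ is finite and nonempty. Pick any nonzero $v \in V_\lambda$. For each positive root vector $E_{ij}$ with $i<j$, the image $E_{ij}\,v$ lies in $V_{\lambda + e_i - e_j}$; the weight $\lambda + e_i - e_j$ is strictly greater than $\lambda$, so maximality of $\lambda$ forces $E_{ij}\,v = 0$. Hence $v$ is annihilated by $\mathfrak{n}^+$, which is the definition of a highest weight vector. The subspace $W := \gen_{\mathfrak{sl}(d,\CC)}\{v\}$ is a nonzero $\mathfrak{sl}(d,\CC)$-invariant subspace of $V$, so by irreducibility $W = V$, and the identification from the first step yields $V = \gen_{\mathfrak{su}(d)}\{v\}$.

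The main obstacle is conceptual rather than technical: to get a simultaneously diagonalizable Cartan and a useful notion of positive roots one must pass from $\mathfrak{su}(d)$ to $\mathfrak{sl}(d,\CC)$, and one must verify that irreducibility and the notion of ``generated subrepresentation'' are preserved under this complexification. Once this bookkeeping is done, the existence of a highest weight vector is essentially the one-line maximality argument above, and the theorem follows.
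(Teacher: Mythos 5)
Your proof is correct and follows the standard complexify-then-take-a-maximal-weight argument. The paper does not actually prove this statement — it defers to the general Theorem of Highest Weight (stated without proof in~\cref{sec:tools-from-repr-appendix}, citing Fulton--Harris and Hall), and your proof is precisely the standard argument those references give, specialized to the concrete setting $(\CC^d)^{\otimes N}$ with the Cartan subalgebra and simple root vectors that the appendix introduces.
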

\begin{remark}\label{rem:reps of U, SU, su}
    Since $\mathsf{SU}(d)$ is simply connected, the representations of $\mathsf{SU}(d)$ and of $\mathfrak{su}(d)$ are in one-to-one correspondence ~\cite[Theorem 5.6]{Hall2015}. Further, any irreducible representation of $\mathsf{U}(d)$ remains irreducible upon restricting to the subgroup $\mathsf{SU}(d)$.
\end{remark}
This is a special instance of the more general definition of a highest weight vector, which may be found in~\cref{sec:tools-from-repr-appendix}). For our purposes we equivalently characterize highest weight vectors in terms of the operators \(H_{j}\) and \(E_{j}\) defined as follows:
\begin{equation}
  \begin{array}{lll}
    H_{j} := \ket{j}\bra{j} && j \in \{ 1, \cdots, d \} \\[1ex]
    E_{j} := \ket{j}\bra{j + 1} && j \in \{ 1, \cdots, d-1 \}
  \end{array}
\end{equation}
Highest weight vectors may then be identified using the following proposition.
\begin{proposition}
  \label{thm:highest-wt-vector}
  A vector \(v \in (\mathbb{C}^{d})^{\otimes N}\) is a \emph{highest weight vector} for \(\mathfrak{su}(d)\) if it lies in the joint kernel of the family of operators
\begin{equation}
   \sum_{\ell = 0}^{N - 1} \left( \idty_{d \times d} \right)^{\otimes \ell} \otimes E_j \otimes \left( \idty_{d \times d} \right)^{\otimes (N - \ell - 1)} \qquad \forall j \in \{ 1, \cdots, d-1 \}.
 \end{equation}
 Additionally, \(v\) is a joint eigenvector of the family of operators
 \begin{equation}
   \sum_{\ell = 0}^{N - 1} \left( \idty_{d \times d} \right)^{\otimes \ell} \otimes H_j \otimes \left( \idty_{d \times d} \right)^{\otimes (N - \ell - 1)} \qquad \forall j \in \{ 1, \cdots, d \}.
 \end{equation}
 The tuple of eigenvalues \( ( \mu_{j} )_{j=1}^{d}\) corresponding to this family is called the \emph{weight} of \(v\).
\end{proposition}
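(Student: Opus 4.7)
The proposition's two conditions --- that $v$ lie in the joint kernel of the tensored raising operators $\sum_\ell (\idty_{d\times d})^{\otimes \ell}\otimes E_j \otimes (\idty_{d\times d})^{\otimes (N-\ell-1)}$ and simultaneously be a joint eigenvector of the tensored diagonal operators $\sum_\ell (\idty_{d\times d})^{\otimes \ell}\otimes H_j \otimes (\idty_{d\times d})^{\otimes (N-\ell-1)}$ --- together constitute the concrete translation of the abstract definition of a highest weight vector (HWV) from \cref{sec:tools-from-repr-appendix}. Recall that for a semisimple Lie algebra, an HWV is a weight vector (joint eigenvector of a Cartan subalgebra) that is annihilated by every positive root vector. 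My plan is to verify this equivalence by identifying the Cartan subalgebra and positive root space of $\mathfrak{su}(d)$ (working in its complexification $\mathfrak{sl}(d,\CC)$) in terms of the $H_j$ and $E_j$ operators, and then appealing to the fact that $\rho_N(X) := \sum_{\ell=0}^{N-1} (\idty_{d\times d})^{\otimes \ell}\otimes X\otimes (\idty_{d\times d})^{\otimes (N-\ell-1)}$ is a Lie-algebra homomorphism (the derivative of $U\mapsto U^{\otimes N}$).

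Under the standard triangular decomposition $\mathfrak{sl}(d,\CC) = \mathfrak{n}^-\oplus \mathfrak{h}\oplus \mathfrak{n}^+$, the positive root space $\mathfrak{n}^+$ is spanned by $E_{i,j} := \ket{i}\bra{j}$ for $i<j$, the simple positive roots being $E_j = E_{j,j+1}$, while the Cartan $\mathfrak{h}$ is spanned by the differences $H_j - H_{j+1}$. The first key step is the algebraic identity $[E_{i,i+1}, E_{i+1,j}] = E_{i,j}$ for $i+1<j$, which shows that the simple positive roots generate the full positive root space under iterated brackets. Since $\rho_N$ preserves commutators, an induction on $j-i$ shows that any $v$ in the joint kernel of the simple-root operators $\rho_N(E_j)$ is annihilated by every $\rho_N(E_{i,j})$ for $i<j$, i.e., by all of $\rho_N(\mathfrak{n}^+)$. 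Conversely, annihilation by $\rho_N(\mathfrak{n}^+)$ trivially implies annihilation by $\rho_N(E_j)$'s, so the two conditions are equivalent.

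For the second condition, being a joint eigenvector of $\rho_N(H_1),\dots,\rho_N(H_d)$ is equivalent to being a weight vector for $\mathfrak{h}$: the $H_j$'s span the full diagonal, and their differences span the Cartan, so the two eigenvector conditions are interchangeable, with the constraint $\sum_j \mu_j = N$ enforced automatically by $\sum_j \rho_N(H_j) = N\cdot\idty$. Combining the two halves, $v$ satisfies the proposition's conditions iff it is a weight vector annihilated by $\mathfrak{n}^+$, i.e., an HWV in the abstract sense, with weight tuple $(\mu_j)_{j=1}^d$ read off as the $H_j$-eigenvalues. The main obstacle in carrying this out carefully is bookkeeping the passage between $\mathfrak{su}(d)$ and its complexification $\mathfrak{sl}(d,\CC)$: the $E_j$ and $H_j$ are not themselves elements of $\mathfrak{su}(d)$, but sit inside $\mathfrak{sl}(d,\CC)$ where classical highest-weight theory is conducted; one must therefore invoke that every finite-dimensional representation of $\mathfrak{su}(d)$ extends uniquely to $\mathfrak{sl}(d,\CC)$ without changing the notion of irreducibility, as already tacitly used earlier. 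Everything else is an application of the standard commutator identities for $E_{i,j}$.
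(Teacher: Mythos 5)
Your proposal is correct and follows essentially the same route as the paper: the paper treats this proposition as the concrete unpacking, for the tensor representation $X\mapsto\sum_{\ell}(\idty_{d\times d})^{\otimes\ell}\otimes X\otimes(\idty_{d\times d})^{\otimes(N-\ell-1)}$, of the abstract definitions in \cref{sec:tools-from-repr-appendix}, where the simple positive root vectors of $\fraksl(d,\CC)$ are identified as $E_{j,j+1}=\ket{j}\bra{j+1}$ and the weight is read off from the $H_j$-eigenvalues. The only extra content in your write-up is the induction via $[E_{i,i+1},E_{i+1,j}]=E_{i,j}$ showing that annihilation by the simple root operators forces annihilation by all positive root operators; this is correct but not needed under the paper's convention, which defines a highest weight vector by annihilation under the \emph{simple} positive roots only.
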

We note that the weights \( ( \mu_{j} )_{j}\) of \(v\) allow us to distinguish highest weight vectors from different irreps:
\begin{lemma}
  Suppose that \(v\) and \(v'\) are highest weight vectors with weights \(( \mu_{j} )_{j}\) and \(( \mu_{j}' )_{j}\) respectively.
  If there exists a \(j_{*}\) so that \(\mu_{j_{*}} \neq \mu_{j_{*}}'\), then \(v \perp v'\).
  Additionally, the irreps generated by \(v\) and \(v'\) are orthogonal.
\end{lemma}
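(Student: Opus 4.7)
The plan is to prove both statements using the self-adjointness of the operators $\hat{H}_j := \sum_{\ell=0}^{N-1} (\idty_{d\times d})^{\otimes \ell} \otimes H_j \otimes (\idty_{d\times d})^{\otimes(N-\ell-1)}$ together with standard facts about unitary representations of $\mathsf{SU}(d)$.

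First I would establish the vector-level orthogonality $v \perp v'$. Each operator $H_j = \ket{j}\bra{j}$ is self-adjoint on $\mathbb{C}^d$, so each $\hat{H}_j$ is self-adjoint on $(\mathbb{C}^d)^{\otimes N}$. Since $v$ and $v'$ are eigenvectors of $\hat{H}_{j_*}$ with distinct real eigenvalues $\mu_{j_*} \neq \mu'_{j_*}$, the spectral theorem immediately yields $\braket{v, v'} = 0$.

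Next I would show the subrepresentations $V = \gen_{\mathfrak{su}(d)}\{v\}$ and $V' = \gen_{\mathfrak{su}(d)}\{v'\}$ are orthogonal. By \cref{thm:theorem-of-highest-wt}, both $V$ and $V'$ are irreducible subrepresentations of $(\mathbb{C}^d)^{\otimes N}$. The representation $U \mapsto U^{\otimes N}$ of $\mathsf{SU}(d)$ on $(\mathbb{C}^d)^{\otimes N}$ is unitary with respect to the standard inner product, so the orthogonal complement $V^\perp$ is itself $\mathsf{SU}(d)$-invariant, and the orthogonal projection $\pi : (\mathbb{C}^d)^{\otimes N} \to V$ is $\mathsf{SU}(d)$-equivariant.

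Finally I would invoke Schur's lemma. The restriction $\pi|_{V'} : V' \to V$ is an equivariant linear map between irreducible representations, hence either zero or an isomorphism. If it were an isomorphism then $V \cong V'$ as irreducible representations of $\mathsf{SU}(d)$, which by the uniqueness part of the Theorem of Highest Weight would force the highest weights of $v$ and $v'$ to coincide, contradicting $\mu_{j_*} \neq \mu'_{j_*}$. Therefore $\pi|_{V'} = 0$, which means $V' \subseteq V^\perp$, i.e.\ $V \perp V'$. The main subtlety will be invoking the \emph{classification} direction of the Theorem of Highest Weight---namely, that isomorphic irreducibles share the same highest weight---which goes slightly beyond the existence statement of \cref{thm:theorem-of-highest-wt} as stated in the text, but is standard and available from \cref{sec:tools-from-repr-appendix}.
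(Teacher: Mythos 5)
Your argument is correct and follows essentially the same two-step route as the paper: self-adjointness of $\hat{H}_{j_*}$ for the vector orthogonality, and unitarity of $U^{\otimes N}$ for the subspace orthogonality. For the second step the paper simply cites a standard result (Hall, Proposition 4.27), whereas you unpack the underlying Schur's-lemma argument; this is a useful clarification, since unitarity alone only gives orthogonality of \emph{non-isomorphic} irreducible subrepresentations, and the non-isomorphism is exactly what the differing highest weights supply via the classification part of the theorem of highest weight, which you correctly note is available from the appendix.
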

\begin{proof}
  The orthogonality of \(v\) and \(v'\) is immediate since \(\{ H_{j} \}_{j}\) is a family of self-adjoint commuting operators and hence eigenvectors with different eigenvalues are orthogonal.
  Since the representation $U^{\otimes N}$ is unitary, the irreps generated by \(v\) and \(v'\) are necessarily orthogonal ~\cite[Proposition 4.27]{Hall2015}.
\end{proof}
The Theorem of Highest Weight also holds when we consider \(V \subseteq (\mathbb{C}^{d} \wedge \mathbb{C}^{d})^{\otimes N}\) instead of \((\mathbb{C}^{d})^{\otimes N}\) with some minor modifications:
\begin{proposition}
  \Cref{thm:theorem-of-highest-wt} holds when \(\mathbb{C}^{d}\) is replaced with \(\mathbb{C}^{d} \wedge \mathbb{C}^{d}\).
  \Cref{thm:highest-wt-vector} also holds when \(\mathbb{C}^{d}\) is replaced with \(\mathbb{C}^{d} \wedge \mathbb{C}^{d}\) if the operators \(H_{j}\) are \(E_{j}\) are replaced with
  \begin{equation}
    \begin{array}{lll}
      H_{j} := \ket{j}\bra{j} \otimes \One_{d \times d} +\One_{d \times d} \otimes \ket{j}\bra{j}, & & \forall j \in \{ 1, \cdots, d \}. \\[1ex]
      E_{j} := \ket{j}\bra{j + 1} \otimes \One_{d \times d} +\One_{d \times d} \otimes \ket{j}\bra{j + 1}, & & \forall j \in \{ 1, \cdots, d \}.
    \end{array}
  \end{equation}
\end{proposition}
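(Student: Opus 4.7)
The plan is to reduce the claim to the already-stated \cref{thm:theorem-of-highest-wt} by realizing $(\mathbb{C}^d \wedge \mathbb{C}^d)^{\otimes N}$ as an $\mathfrak{su}(d)$-invariant subspace of $(\mathbb{C}^d)^{\otimes 2N}$. The natural inclusion of $\mathbb{C}^d \wedge \mathbb{C}^d$ into $\mathbb{C}^d \otimes \mathbb{C}^d$ as the antisymmetric subspace is $\mathfrak{su}(d)$-equivariant, since the single-site action $X \mapsto X \otimes \idty + \idty \otimes X$ (together with its complex-linear extension to $\mathfrak{sl}(d,\mathbb{C})$) commutes with the swap operator and therefore preserves both the symmetric and antisymmetric subspaces. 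Tensoring this inclusion $N$-fold yields an $\mathfrak{su}(d)$-equivariant embedding $\iota \colon (\mathbb{C}^d \wedge \mathbb{C}^d)^{\otimes N} \hookrightarrow (\mathbb{C}^d)^{\otimes 2N}$.

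Next, I would verify by direct inspection that the modified operators $H_j$ and $E_j$ in the proposition are precisely the restrictions to $\mathbb{C}^d \wedge \mathbb{C}^d$ of the complex-linear lifts of $\ket{j}\bra{j}$ and $\ket{j}\bra{j+1}$ to $\mathbb{C}^d \otimes \mathbb{C}^d$. Summing over the $N$ factors of $(\mathbb{C}^d \wedge \mathbb{C}^d)^{\otimes N}$ then reproduces, under $\iota$, exactly the corresponding sum over $2N$ factors on $(\mathbb{C}^d)^{\otimes 2N}$ appearing in \cref{thm:highest-wt-vector}. Consequently, a vector $v \in (\mathbb{C}^d \wedge \mathbb{C}^d)^{\otimes N}$ satisfies the modified highest-weight kernel-and-eigenvector conditions if and only if $\iota(v)$ is a highest weight vector of $(\mathbb{C}^d)^{\otimes 2N}$ in the original sense of \cref{thm:highest-wt-vector}.

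Given these two observations, both assertions of the proposition follow at once. For any irreducible $\mathfrak{su}(d)$-subrepresentation $V \subseteq (\mathbb{C}^d \wedge \mathbb{C}^d)^{\otimes N}$, its image $\iota(V)$ is an irreducible subrepresentation of $(\mathbb{C}^d)^{\otimes 2N}$, and \cref{thm:theorem-of-highest-wt} provides a highest weight vector generating $\iota(V)$; pulling back along $\iota$ yields a vector $v \in V$ satisfying the modified characterization with $V = \gen_{\mathfrak{su}(d)}\{v\}$. I do not anticipate any real obstacle here, as the argument amounts to a routine transfer of structure along an equivariant embedding; the only point that genuinely requires checking is the $\mathfrak{su}(d)$-stability of the antisymmetric subspace, which is immediate from the commutation of $X \otimes \idty + \idty \otimes X$ with the swap operator.
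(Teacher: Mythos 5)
The paper states this proposition without proof (it is presented as a routine extension of \cref{thm:theorem-of-highest-wt,thm:highest-wt-vector}), so there is no proof in the text to compare against. Your argument is correct and complete: the natural equivariant embedding $\iota\colon(\mathbb{C}^d\wedge\mathbb{C}^d)^{\otimes N}\hookrightarrow(\mathbb{C}^d)^{\otimes 2N}$ identifies each modified $H_j$ (resp.\ $E_j$) summed over the $N$ wedge factors with the original Cartan (resp.\ raising) operator summed over $2N$ tensor factors, so $v$ satisfies the modified highest-weight conditions iff $\iota(v)$ satisfies the original ones, and \cref{thm:theorem-of-highest-wt} applied to $\iota(V)\subseteq(\mathbb{C}^d)^{\otimes 2N}$ transfers back along the isomorphism $\iota\colon V\to\iota(V)$.
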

Now that we can characterize generating vectors for different irreps, we can turn our attention to understanding which irreps are relevant for the ground state manifold.

\subsection{Young Tableaux and the Littlewood-Richardson Rules}
\label{sec:young-tabl-littl}
For our discussion of Young tableaux and the Littlewood-Richardson rules (which we use to classify the irreps) we follow \cite[Lecture 6]{FultonHarris2004}. The irreducible subrepresentations of $\mathsf{SU}(d)$\footnote{It should be noted that the discussion in Fulton and Harris \cite{FultonHarris2004} considers representations of $GL(d)$--however, these particular irreps of $GL(d)$ remain irreducible after restricting to $\mathsf{SU}(d)$.} in $(\CC^d)^{\otimes N}$ are labeled by partitions $\lambda$ of $N$, where $\lambda = (\lambda_1,\lambda_2,\dots , \lambda_r)$ with $\lambda_1\geq \lambda_2\geq \dots \geq \lambda_r \geq 0$ and $N = \lambda_1 + \dots +\lambda_r$. We write $\mathbb{S}_\lambda(\CC^d)$ to denote the irrep corresponding to a partition $\lambda$. Partitions in turn are in one-to-one correspondence with Young diagrams. Here are a few possible diagrams for $N=4$:
\[
    (1^4)=(1,1,1,1)\leftrightarrow \ydiagram{1,1,1,1}, \qquad (2,1,1) \leftrightarrow \ydiagram{2,1,1}, \qquad (4) \leftrightarrow \ydiagram{4}
\] In this manner Young diagrams may be used to label certain irreducible representations of $\mathsf{SU}(d)$. For example, the partition $\lambda = (1^k)$ corresponds to the exterior power representation $\mathbb{S}_\lambda(\CC^d) = \bigwedge^k \CC^d$, while the partition $\mu = (r)$ corresponds to the $r^{th}$ symmetric power representation $\mathbb{S}_\mu(\CC^d)$. Note that if $\lambda$ a partition such that $\lambda_{d+1}\neq 0$, then $\mathbb{S}_\lambda(\CC^d)$ is zero.

Now, given two irreducible representations $\mathbb{S}_\lambda(\CC^d)$ and $ \mathbb{S}_\mu(\CC^d)$, it is natural to wonder what we can say about the irreps of the tensor product $\mathbb{S}_\lambda(\CC^d) \otimes \mathbb{S}_\mu(\CC^d)$.
This can be mapped to a combinatorial problem whose solution is given by the Littlewood-Richardson rules.
For our purposes we need only a special case of the Littlewood-Richardson rule, which may be found in \cite[Equation 6.9]{FultonHarris2004}.

\begin{theorem}[Littlewood-Richardson Rule, special case]
  \label{thm:littlewood-richardson_maintext}
Let $\lambda$ be a partition of $N$ and $m\leq d$ a positive integer. Then we have the following decomposition into irreducibles:
\[
  \mathbb{S}_\lambda(\CC^d)\otimes \mathbb{S}_{(1^m)}(\CC^d) \cong \bigoplus_{\pi} \mathbb{S}_\pi(\CC^d), 
\]
where the sum is over partitions $\pi$ of $N+m$ whose Young diagram is obtained from that of $\lambda$ by adding $m$ boxes, with no two in the same row. In particular this decomposition is multiplicity-free.
\end{theorem}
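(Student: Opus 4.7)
The plan is to reduce the stated tensor product decomposition to a combinatorial identity for symmetric polynomials. Since $\mathsf{SU}(d)$ is compact and connected, finite-dimensional representations are classified up to isomorphism by their characters on the maximal torus, and characters are multiplicative on tensor products. The character of the Schur functor representation $\mathbb{S}_\lambda(\CC^d)$ evaluated on diagonal entries $(x_1,\ldots,x_d)$ is the Schur polynomial $s_\lambda(x_1,\ldots,x_d)$, and the character of $\mathbb{S}_{(1^m)}(\CC^d) = \bigwedge^m \CC^d$ is the elementary symmetric polynomial $e_m(x_1,\ldots,x_d)$. Consequently the theorem is equivalent to Pieri's rule
\[
s_\lambda \cdot e_m \;=\; \sum_{\pi} s_\pi,
\]
summed over partitions $\pi$ of $N+m$ obtained from $\lambda$ by adding $m$ boxes, no two in the same row.

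To establish Pieri's rule I would use the combinatorial definition $s_\lambda = \sum_T x^T$, summed over semistandard Young tableaux $T$ of shape $\lambda$, together with $e_m = \sum_S x^S$, summed over $m$-element subsets $S \subseteq \{1,\ldots,d\}$ (viewed as the entries of a single column of height $m$). Then $s_\lambda \cdot e_m$ is a weighted count of pairs $(T,S)$, and the core task is to construct a weight-preserving bijection between such pairs and SSYTs of shape $\pi$ with $\pi/\lambda$ a vertical $m$-strip. The bijection comes from a column-insertion procedure: list the elements of $S$ in increasing order and insert them one at a time into $T$, bumping strictly larger entries downward along columns. The strict column-increase condition on SSYTs forces the newly added boxes to form a vertical strip, and the procedure is reversible by reading the strip off of $\pi/\lambda$ and reverse-bumping in the opposite order.

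Multiplicity-freeness is then automatic, since each admissible shape $\pi$ appears exactly once in the resulting sum. The main obstacle is verifying that the insertion bijection is genuinely well-defined and weight-preserving, and in particular that the vertical-strip constraint matches exactly the SSYT column-strict condition after insertion; this requires careful bookkeeping on the order of bumps. A cleaner but less illuminating alternative is to invoke the dual Jacobi--Trudi identity $s_\lambda = \det(e_{\lambda'_i - i + j})_{i,j}$, multiply both sides by $e_m$, and extract the Pieri expansion via multilinearity of the determinant together with the straightening relations; this route bypasses the bumping argument at the cost of obscuring the combinatorial meaning of the vertical-strip condition.
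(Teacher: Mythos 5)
The paper does not prove this statement; it is quoted as a standard result with a citation to \cite[Equation 6.9]{FultonHarris2004}, so there is no in-paper argument to compare against. Your proposal is a genuine proof sketch, and the overall approach is the standard one: pass to characters (which determine $\mathsf{SU}(d)$-representations up to isomorphism and are multiplicative under tensor product), identify the character of $\mathbb{S}_\lambda(\CC^d)$ with the Schur polynomial $s_\lambda$ and that of $\bigwedge^m\CC^d$ with $e_m$, and thereby reduce the claim to the dual Pieri rule $s_\lambda\, e_m = \sum_\pi s_\pi$ over vertical $m$-strips $\pi/\lambda$. That reduction is correct, and multiplicity-freeness then follows because the sum runs over distinct partitions, each appearing once.

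The one place I would tighten the argument is the bumping description. As written — ``bumping strictly larger entries downward along columns'' — it reads as if a displaced entry slides further down its own column, which is not how Schensted insertion works. In column insertion of $k$, one locates the topmost entry in the first column that is $\geq k$, replaces it by $k$, and the displaced entry is then inserted into the \emph{next} column; only if no such entry exists is $k$ appended at the bottom of the column. With the entries of $S$ inserted in increasing order by this rule, the new cells do form a vertical strip, and the map is weight-preserving and reversible, which is what the Pieri bijection needs. Equivalently, one can row-insert the elements of $S$ in \emph{decreasing} order. You correctly flag the bookkeeping as the main obstacle; the point is simply that it must be the standard Schensted column (or row) insertion, not an in-column slide. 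The dual Jacobi--Trudi alternative you mention, $s_\lambda = \det\bigl(e_{\lambda'_i - i + j}\bigr)$, with multilinearity and straightening, is likewise a correct and fully rigorous route and is probably the shortest way to a complete proof if one does not want to develop the insertion machinery.
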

For example, the tensor product $S_{(2,2)}(\CC^{d}) \otimes S_{(1^2)}(\CC^{d})$ splits into a direct sum of irreducibles:
\begin{equation}
  \ydiagram{2,2} \otimes \ydiagram{1,1}
  =
  \ydiagram{3,3} \oplus \ydiagram{3,2,1} \oplus \ydiagram{2,2,1,1}.
\end{equation}
Note that the Young diagrams
\begin{equation}
  \ydiagram{4,2} \qquad \qquad \ydiagram{2,2,2}
\end{equation}
are not part of this decomposition because they require adding two boxes in the same row.
We also note that given a Young diagram, we may readily compute the dimension of the corresponding irrep via the Hook Length Formula \cite[Theorem 6.3]{FultonHarris2004}:
\begin{theorem}[Hook Length Formula for $\mathsf{SU}(d)$]
  \label{thm:hook-formula}
  Let $\lambda = (\lambda_1,\dots,\lambda_N)$ be a partition of $N$ and let \(\mathbb{S}_{\lambda}(\mathbb{C}^{d})\) denote an irrep of \(\mathsf{SU}(d)\). Then
  \[
    \dim \mathbb{S}_\lambda (\CC^d) = \prod_{(i,j) \in \lambda} \frac{(d + j - i)}{h(i,j)}. 
  \]
  where \(h(i, j)\) is the hook length of the box in the \(i\)-th row and \(j\)-th column of the Young diagram \(\lambda\) and \((i,j) \in \lambda\) denotes the position of a box in the Young diagram.
\end{theorem}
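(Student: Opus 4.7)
The plan is to deduce this formula from Weyl's dimension formula for $\mathsf{SU}(d)$, reducing it to a combinatorial identity relating hook lengths to the shifted coordinates $\ell_i := \lambda_i + d - i$, where $\lambda = (\lambda_1, \ldots, \lambda_d)$ is padded by zeros if necessary. First I would recall Weyl's dimension formula, which in the $\mathsf{SU}(d)$ case reads
\begin{equation*}
\dim \mathbb{S}_\lambda(\CC^d) = \prod_{1 \le i < j \le d} \frac{\lambda_i - \lambda_j + j - i}{j - i} = \frac{\prod_{i < j}(\ell_i - \ell_j)}{\prod_{k=1}^{d-1} k!}.
\end{equation*}
This is standard and follows from the Weyl character formula evaluated at the identity, equivalently from computing the Schur polynomial $s_\lambda(1,\ldots,1)$ via the bialternant formula and the Vandermonde determinant.

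Next I would establish the row-wise hook identity: for each $i \in \{1, \ldots, d\}$, the multiset of hook lengths $\{h(i,j) : 1 \le j \le \lambda_i\}$ coincides with $\{1, 2, \ldots, \ell_i\} \setminus \{\ell_i - \ell_k : i < k \le d\}$. I would prove this by a direct bijective argument that tracks, for each integer $m \in \{1, \ldots, \ell_i\}$, whether it appears as a hook length $h(i,j)$ for some box in row $i$; the missing values correspond precisely to transitions between successive column heights of the subdiagram below row $i$. This identity yields
\begin{equation*}
\prod_{j=1}^{\lambda_i} h(i,j) = \frac{\ell_i!}{\prod_{k > i}(\ell_i - \ell_k)},
\end{equation*}
which holds for every $i$ (for empty rows both sides equal $1$).

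The remainder is bookkeeping. The numerator of the claimed formula factors row-by-row as $\prod_{j=1}^{\lambda_i}(d + j - i) = \ell_i!/(d-i)!$, so multiplying the row-wise contributions over all $i$ gives
\begin{equation*}
\prod_{(i,j) \in \lambda} \frac{d + j - i}{h(i,j)} \;=\; \frac{\prod_{i=1}^d \ell_i!/(d-i)!}{\prod_{i=1}^d \ell_i!/\prod_{k>i}(\ell_i - \ell_k)} \;=\; \frac{\prod_{1 \le i < j \le d}(\ell_i - \ell_j)}{\prod_{k=0}^{d-1} k!},
\end{equation*}
which matches Weyl's formula since $\prod_{k=0}^{d-1} k! = \prod_{1 \le i < j \le d}(j - i)$. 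The main obstacle is the combinatorial identification of the row-wise hook multiset; while elementary, it requires care to correctly account for arm versus leg contributions and to handle empty rows $i > \mathrm{len}(\lambda)$ so that their factors cancel consistently across the telescoping products. An alternative route that avoids this row-by-row bookkeeping is to invoke Schur--Weyl duality: decompose $(\CC^d)^{\otimes N} = \bigoplus_\lambda \mathbb{S}_\lambda(\CC^d) \otimes V_\lambda^{\mathfrak{S}_N}$, apply the classical Frame--Robinson--Thrall hook length formula $\dim V_\lambda^{\mathfrak{S}_N} = N!/\prod h(i,j)$ to the Specht module, and identify $\dim \mathbb{S}_\lambda(\CC^d)$ as the number of semistandard Young tableaux of shape $\lambda$ with entries in $\{1,\ldots,d\}$ via a standard generating-function computation.
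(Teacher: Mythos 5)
The paper states this theorem as a citation of \cite[Theorem 6.3]{FultonHarris2004} and does not supply its own proof, so there is no argument in the paper to compare against. Evaluated on its own terms, your derivation is correct. The reduction to Weyl's dimension formula via the shifted coordinates $\ell_i = \lambda_i + d - i$ is standard, and the row-wise hook identity you isolate --- that $\{h(i,j) : 1 \le j \le \lambda_i\}$ and $\{\ell_i - \ell_k : i < k \le d\}$ partition $\{1, \dots, \ell_i\}$ --- is exactly the right lemma; it follows from the strict decrease of the $\ell_k$ and a straightforward count, and once it is in hand the telescoping in the final display goes through cleanly, including the edge case of rows $i$ with $\lambda_i = 0$ (both $\ell_i!/(d-i)!$ and $\prod_{j} h(i,j)$ become empty products, and the residual factor $\prod_{k>i}(\ell_i - \ell_k)/(d-i)!$ still contributes to the Vandermonde product as it must). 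Your alternative route through Schur--Weyl duality, the Frame--Robinson--Thrall formula for Specht modules, and the count of semistandard tableaux is also a valid and classical path, though as you acknowledge the final step (evaluating the number of SSYT as the hook-content product) still requires an argument of roughly the same weight as the Vandermonde computation, so it is not genuinely shorter. In short: the proof is sound and self-contained; it is more than the paper asks for, since the paper treats this as black-box background, but it would survive as an appendix.
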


Finally, we may readily interpret the irreps $\mathbb{S}_\lambda(\CC^d)$ labeled by Young diagrams $\lambda$ from the perspective of weights \cite[Proposition 15.15]{FultonHarris2004}:

\begin{proposition}
  The representation $\mathbb{S}_\lambda(\CC^d)$ is the irreducible representation of $\mathfrak{sl}(d,\CC)$ with highest weight $\mu = (\lambda_1,\lambda_2, \dots,\lambda_d)$.
\end{proposition}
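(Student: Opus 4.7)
The plan is to exhibit an explicit highest weight vector living inside $\mathbb{S}_\lambda(\CC^d)$ whose weight is $(\lambda_1, \ldots, \lambda_d)$, and then appeal to the Theorem of Highest Weight (\cref{thm:theorem-of-highest-wt}) together with the known irreducibility of $\mathbb{S}_\lambda(\CC^d)$ to conclude. The approach uses the standard Schur-module construction via Young symmetrizers. Given the partition $\lambda = (\lambda_1, \ldots, \lambda_r)$ of $N$, realize $\mathbb{S}_\lambda(\CC^d)$ as $c_\lambda \cdot (\CC^d)^{\otimes N}$, where $c_\lambda \in \CC[S_N]$ is the Young symmetrizer associated with a fixed tableau of shape $\lambda$. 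For the weight computation it is more convenient to use the equivalent realization inside $\bigwedge^{\lambda'_1}\CC^d \otimes \bigwedge^{\lambda'_2}\CC^d \otimes \cdots \otimes \bigwedge^{\lambda'_{\lambda_1}}\CC^d$, where $\lambda'$ denotes the conjugate partition; this realization has the advantage that column antisymmetrization is built into the wedge products.

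The candidate highest weight vector is
\[
  v_\lambda := \bigotimes_{j=1}^{\lambda_1} \left( e_1 \wedge e_2 \wedge \cdots \wedge e_{\lambda'_j} \right),
\]
which corresponds to the canonical semistandard filling of the Young diagram $\lambda$ in which every box in row $i$ is labeled $i$. After verifying that this vector lies in (and is nonzero in) $\mathbb{S}_\lambda(\CC^d)$, the weight computation is immediate. The operator $\sum_{\ell = 0}^{N-1} (\idty)^{\otimes \ell} \otimes H_i \otimes (\idty)^{\otimes (N - \ell - 1)}$ simply counts occurrences of $e_i$ across all tensor slots; since $e_i$ appears in the $j$-th wedge factor exactly when $\lambda'_j \geq i$, i.e.\ when $j \leq \lambda_i$, the total count is $\lambda_i$, giving the desired weight $(\lambda_1, \ldots, \lambda_d)$ (padded with zeros if $r < d$).

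Next I would check that $v_\lambda$ is annihilated by every raising operator $\sum_\ell (\idty)^{\otimes \ell} \otimes E_i \otimes (\idty)^{\otimes (N-\ell-1)}$ with $E_i = \ket{i}\bra{i+1}$. Within each wedge factor, replacing an $e_{i+1}$ by an $e_i$ either produces a wedge containing two copies of $e_i$, which vanishes by antisymmetry, or acts on a factor with no $e_{i+1}$ (when $\lambda'_j < i+1$), which also vanishes. Summed over the tensor slots this yields $E_i \cdot v_\lambda = 0$ for each $i \in \{1, \ldots, d-1\}$. By the characterization of highest weight vectors in \cref{thm:highest-wt-vector}, $v_\lambda$ is then a highest weight vector of weight $(\lambda_1, \ldots, \lambda_d)$.

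Finally, using the independently established fact that the Schur module $\mathbb{S}_\lambda(\CC^d)$ is an irreducible representation of $\mathfrak{sl}(d, \CC)$ (a consequence of Schur--Weyl duality together with the irreducibility of the Specht module for $S_N$ labeled by $\lambda$), the Theorem of Highest Weight forces $\mathbb{S}_\lambda(\CC^d)$ to coincide with the irrep determined by this highest weight. The main obstacle in this plan is the verification that $v_\lambda$ is a \emph{nonzero} element of $\mathbb{S}_\lambda(\CC^d)$: this requires a careful compatibility check between the wedge-product model and the Young-symmetrizer image, and is where most of the combinatorial bookkeeping lives. The standard resolution uses the straightening algorithm, or equivalently an explicit semistandard-tableaux basis of $\mathbb{S}_\lambda(\CC^d)$, to rule out any cancellation of $v_\lambda$ under the action of $c_\lambda$.
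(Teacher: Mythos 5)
The paper states this result without proof, citing \cite[Proposition 15.15]{FultonHarris2004}, so there is no in-paper argument to compare against. Your sketch reconstructs the standard argument from that source: exhibit the weight vector $v_\lambda = \bigotimes_j (e_1 \wedge \cdots \wedge e_{\lambda_j'})$ inside the exterior-power realization, verify its weight is $(\lambda_1,\ldots,\lambda_d)$ by counting occurrences of each $e_i$, show it is killed by the simple raising operators (the wedge collapses whenever an $e_{i+1}$ is replaced by an $e_i$), and then invoke irreducibility of the Schur module together with the Theorem of Highest Weight; the one point you flag as an obstacle, nonvanishing of $v_\lambda$ in $\mathbb{S}_\lambda(\CC^d)$, is in fact nearly immediate once you observe that the Young symmetrizer in the form $b_\lambda a_\lambda$ sends the canonical filling $e_T$ (row $i$ filled with $i$'s) to a nonzero scalar multiple of $v_\lambda$, since $a_\lambda$ fixes $e_T$ up to scale and $b_\lambda$ then produces the wedge. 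The proof is correct and is essentially the cited one.
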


\subsection{A Two Site Example with \(\mathsf{SU}(2)\)}
\label{sec:example-two-sites}
Let us now put these ideas into action to decompose the two-site tensor representation $\CC^2\otimes \CC^2$ into irreps of $\mathsf{SU}(2)$.

First, consider the defining representation $(\Pi,\CC^2)$ of $\mathsf{SU}(2)$, which is given by the map $\Pi(U)=U$ for all $U\in \mathsf{SU}(2)$. This descends to the defining representation of the Lie algebra $\mathfrak{su}(2)$, which uniquely extends to the defining representation $(\pi,\CC^2)$ of $\fraksl(2,\CC)$, which is given by $\pi(X)=X$ for all $X\in\fraksl(2,\CC)$. This has a highest weight vector $\ket{1}$:
\[
    \pi(E_{1}) \ket{1} = E_{1} \ket{1} = 0.
\] In fact, the highest weight space is 1-dimensional since $E_{1}\ket{2}\neq 0$. Thus by~\cref{thm:highest-wt-vector}, the defining representation is an irrep of highest weight with weight $(1,0)$.
Let us denote this irrep by $\mathbb{S}_{(1)}(\CC^2)$, or diagrammatically by the box
\[
  \mathbb{S}_{(1)}(\CC^2) = \ydiagram{1}.
\] 

We now demonstrate the 2-site tensor representation. At the group level, this is the representation $\Pi:\mathsf{SU}(2)\to GL(\CC^2\otimes \CC^2)$ given by $\Pi(U)=U\otimes U$. This passes to the Lie algebra representation $\pi:\fraksl(2,\CC)\to \mathfrak{gl}(\CC^2\otimes \CC^2)$ given by $\pi(X) = X\otimes \idty + \idty \otimes X$. To decompose this into irreps, we look for highest weight vectors by computing the kernel of $\pi(E_{1}) = E_{1}\otimes \idty + \idty\otimes E_{1}$. We see that this kernel is two dimensional and spanned by the vectors $\ket{1}\ket{1}$ and $\ket{1} \ket{2} - \ket{2} \ket{1}$. We may quickly calculate the weights of these two highest weight vectors:
\begin{equation}\begin{split}
  \Big( \One \otimes H_1 + H_{1} \otimes \One \Big) \ket{1}\ket{1} & = 2 \ket{1} \ket{1} \\
  \Big( \One \otimes H_2 + H_{2} \otimes \One \Big) \ket{1}\ket{1} & = 0 \ket{1} \ket{1}
\end{split}\end{equation}
\begin{equation}\begin{split}
  \Big( \One \otimes H_1 + H_{1} \otimes \One \Big) (\ket{1}\ket{2} - \ket{2} \ket{1}) & = 1 ( \ket{1} \ket{2} - \ket{2} \ket{1} ) \\
  \Big( \One \otimes H_2 + H_{2} \otimes \One \Big) (\ket{1}\ket{2} - \ket{2} \ket{1}) & =  1 ( \ket{1} \ket{2} - \ket{2} \ket{1}) .
\end{split}\end{equation}
So $\ket{1}\ket{1}$ is a highest weight vector of weight $(2,0)$ and $\ket{1}\ket{2}-\ket{2}\ket{1}$ is a highest weight vector of weight $(1,1)$.
Thus, by~\cref{thm:highest-wt-vector}, they cyclically generate orthogonal irreps.
We denote
\begin{equation}
    \mathbb{S}_{(2)}(\CC^2) := \gen_{\fraksl(2,\CC)} \{ \ket{1} \ket{1} \}, \qquad \mathbb{S}_{(1,1)}(\CC^2) := \gen_{\fraksl(2,\CC)} \{ \ket{1} \ket{2} - \ket{2} \ket{1} \}.
\end{equation}
In particular, note that we have completely decomposed $\CC^2\otimes \CC^2$ into a direct sum of irreps:\footnote{To connect to the language of spin in physics, the irrep $\mathbb{S}_{(1,1)}(\CC^2)$ is often called a singlet, while the irrep $\mathbb{S}_{(2)}(\CC^2)$ is often called a triplet.}
\[
    \CC^2 \otimes \CC^2 \cong \mathbb{S}_{(2)}(\CC^2) \oplus \mathbb{S}_{(1,1)}(\CC^2) . 
\] We will conclude with the diagrammatic version of the above isomorphism:
\begin{equation} \label{eqn:sl2 2 tensor ydiagram}
    \ydiagram{1}\otimes \ydiagram{1} \cong \ydiagram{2} \oplus \ydiagram{1,1}.
\end{equation}

\subsection{Additional Notation}
\label{sec:additional-notation}
Before proceeding, we introduce various notations which will be used in the arguments in~\cref{sec:valleyful-case,sec:vall-spinf-case}.
As introduced in the previous section, we will use the bra-ket notation \(\ket{i} \in \mathbb{C}^{d}\) to denote the \(i^{\mathrm{th}}\) standard basis vector.
When we consider the anti-symmetric space \(\mathbb{C}^{d} \wedge \mathbb{C}^{d}\), we use the mild abuse of notation $\ket{i \wedge j}$ to denote $\ket{i} \ket{j} - \ket{j} \ket{i} \in \mathbb{C}^{d} \wedge \mathbb{C}^{d}$ and take a similar convention $\ket{i \wedge j \wedge k}$ and \(\mathbb{C}^{d} \wedge \mathbb{C}^{d} \wedge \mathbb{C}^{d}\).
We also define the \textit{symmetrizer} with respect to an orthogonal basis.
\begin{definition}[Symmetrizer]
  Given set of orthogonal basis vectors $\{ \ket{j} \}_{j=1}^{n}$ on a tensor product space $\bigotimes^{N} \CC^{d}$, we define the symmetrized state
  \begin{equation}
    \Sym\big[ \ket{j_{1}} \ket{j_{2}} \cdots \ket{j_{N}} \big] = \frac{1}{N!} \sum_{\sigma \in S_{N}}^{} \ket{\sigma(j_{1})} \ket{\sigma(j_{2})} \cdots \ket{\sigma(j_{N})} ,
  \end{equation} where $S_N$ denotes the symmetric group on $N$ letters. 
\end{definition}
As a concrete example of this notation,
\begin{equation}
  \Sym\big[ \ket{1}\ket{1}\ket{2} \big] = \frac{1}{3} \Big( \ket{1} \ket{1} \ket{2} + \ket{1} \ket{2} \ket{1} + \ket{2} \ket{1} \ket{1} \Big).
\end{equation}
It will also be useful to define the set cyclically generated by a representation $(\rho,V)$:
\begin{definition}
  Let $(\rho,V)$ be a representation of a Lie group $G$ (or Lie algebra $\mathfrak{g}$). For any vector $v\in V$, we let
  \begin{equation}
    \gen_{G} \{ v \} := \spn\{ \rho(g) v : g \in G \}
  \end{equation} denote the subrepresentation cyclically generated by $v$.
\end{definition}

Finally, we introduce some notations to assist describing the many-body ground states.
In what follows, we will often work within either the $+$ chiral sector or $-$ chiral sector.
These different sectors have either 2 or 4 creation operators so we can identify them with the integers $\{ 1^{\pm}, 2^{\pm} \}$ or $\{ 1^{\pm}, 2^{\pm}, 3^{\pm}, 4^{\pm} \}$ depending on the case.
\begin{definition}
  \label{rem:basis-notation}
  For the valleyful model, we make the identifications
  \begin{equation}
    \begin{array}{c}
      (+, +) \leftrightarrow 1^{+} \qquad (-, -) \leftrightarrow 2^{+} \\[1ex]
      (+, -) \leftrightarrow 1^{-} \qquad (-, +) \leftrightarrow 2^{-}.
    \end{array}
  \end{equation}
  With this identification, we have, for example \(\hat{f}_{(+, +), \bvec{k}}^{\dagger} = \hat{f}_{1^{+}, \bvec{k}}^{\dagger}\).
  For the valleyful and spinful mode, we make the identifications
  \begin{equation}
    \begin{array}{c}
      (+, +, \uparrow) \leftrightarrow 1^{+} \qquad (+, +, \downarrow) \leftrightarrow 2^{+} \qquad (-, -, \uparrow) \leftrightarrow 3^{+} \qquad (-, -, \downarrow) \leftrightarrow 4^{+} \\[1ex]
      (+, -, \uparrow) \leftrightarrow 1^{-} \qquad (+, -, \downarrow) \leftrightarrow 2^{-} \qquad (-, +, \uparrow) \leftrightarrow 3^{-} \qquad (-, +, \downarrow) \leftrightarrow 4^{-}.
    \end{array}
  \end{equation}
\end{definition}
We will also be interested in defining notation for basis rotations on each chiral sector:
\begin{definition}[Basis Rotation]
  For the valleyful model and any $U \in \mathsf{U}(2)$, we define
  \begin{equation}
    U \circ \hat{f}_{(+,+), \bvec{k}}^{\dagger}  = \sum_{m=1}^{2} \hat{f}_{m^{+}, \bvec{k}}^{\dagger} [U]_{m, (+,+)} 
  \end{equation}
  which rotates the creation operator $\hat{f}_{(+,+), \bvec{k}}$ by $U$ within the $+$ chiral sector; $U \circ \hat{f}_{(+,-), \bvec{k}}$ is defined analogously.

  For the valleyful and spinful model and any $U \in \mathsf{U}(4)$, we define
  \begin{equation}
    \begin{split}
      U \circ \hat{f}_{(+,+,\uparrow), \bvec{k}}^{\dagger} & = \sum_{m=1}^{4} \hat{f}_{m^{+}, \bvec{k}}^{\dagger} [U]_{m, (+,+,\uparrow)}  \\
      U \circ \hat{f}_{(+,+,\uparrow), \bvec{k}}^{\dagger} \hat{f}_{(+,+,\downarrow), \bvec{k}}^{\dagger} & = \sum_{m=1}^{4} \sum_{n=1}^{4}  \Big( \hat{f}_{m^{+}, \bvec{k}}^{\dagger} [U]_{m, (+, +, \uparrow)} \Big) \Big(  \hat{f}_{n^{+}, \bvec{k}}^{\dagger} [U]_{n, (+, +, \downarrow)}  \Big)
    \end{split}
  \end{equation}
  which rotates each of the creation operators at momentum $\bvec{k}$ in the $+$ chiral sector by $U$.
  The rotation
  \begin{equation}
    U \circ \hat{f}_{(+,+,\uparrow), \bvec{k}}^{\dagger} \hat{f}_{(+,+,\downarrow), \bvec{k}}^{\dagger} \hat{f}_{(-,-,\uparrow), \bvec{k}}
  \end{equation}
  is defined analogously.
  As well as the corresponding rotations on the $-$ chiral sector.
\end{definition}

\section{Ground State Manifold of the Spinless, Valleyful Model}
\label{sec:valleyful-case}
The main goal of this section is to use the representation theory of $\mathsf{U}(2) \times \mathsf{U}(2)$ to prove the following result: 
\begin{theorem}
  \label{prop:valleyful-case}
  A many-body state $\ket{\Phi}$ is a ground state of valleyful MATBG if and only if it can be written as
  \begin{equation}
    \label{eq:valleyful-particle-number}
    \ket{\Phi} = \alpha \ket{\tilde{\Phi}_{0}} + \beta \ket{\tilde{\Phi}_{1}} + \gamma \ket{\tilde{\Phi}_{2}} \quad \alpha, \beta, \gamma \in \CC
  \end{equation}
  where
  \begin{equation}
    \ket{\tilde{\Phi}_{2}} = \prod_{\bvec{k} \in \mathcal{K}}^{} \hat{f}_{(+,+), \bvec{k}}^{\dagger} \hat{f}_{(-, -), \bvec{k}}^{\dagger} \ket{\mathrm{vac}}, \qquad
    \ket{\tilde{\Phi}_{0}} = \prod_{\bvec{k} \in \mathcal{K}}^{} \hat{f}_{(+,-), \bvec{k}}^{\dagger} \hat{f}_{(-, +), \bvec{k}}^{\dagger} \ket{\mathrm{vac}} 
  \end{equation}
  and 
  \begin{equation}
    \ket{\tilde{\Phi}_{1}} \in \spn\left\{\prod_{\bvec{k} \in \mathcal{K}}^{} \left( U_{1} \circ \hat{f}_{(+, +), \bvec{k}}^{\dagger} \right) \left( U_{2} \circ \hat{f}_{(+, -), \bvec{k}}^{\dagger} \right) \ket{\mathrm{vac}} : U_{1}, U_{2} \in \mathsf{U}(2) \right\}.
  \end{equation}
  Alternatively, the state $\ket{\tilde{\Phi}_{1}}$ may be written as a linear combination of $(N_{\bvec{k}} + 1)^{2}$ linearly independent states so the total dimension of the ground space is $(N_{\bvec{k}} + 1)^{2} + 2$.
\end{theorem}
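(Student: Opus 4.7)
The plan is to split any ground state $\ket{\Phi}$ by the chiral filling $\lambda_+$; by \cref{prop:uniform-chiral-filling} and \cref{coro:uniform-filling}, $\lambda_+ \in \{0,1,2\}$ with $\lambda_- = 2 - \lambda_+$, and each eigenspace can be analyzed separately. In the extreme cases $\lambda_+ \in \{0,2\}$ there is a unique many-body state satisfying $\hat{n}_{+, \bvec{k}} = \lambda_+$ uniformly: namely $\ket{\tilde{\Phi}_2}$ (for $\lambda_+ = 2$, all $+$-chiral orbitals filled at every $\bvec{k}$) and $\ket{\tilde{\Phi}_0}$ (for $\lambda_+ = 0$, all $-$-chiral orbitals filled). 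Both are annihilated by every $\hat{C}_{\pm, \bvec{k}, \bvec{k}'}$ with $\bvec{k} \ne \bvec{k}'$ by the Pauli principle, so by \cref{prop:equiv-characterization} each sector contributes a one-dimensional subspace.

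The heart of the proof is the $\lambda_+ = 1$ sector. The occupation vector construction of \cref{sec:occup-vect-repr} identifies this sector with $V := (\CC^2)^{\otimes N_{\bvec{k}}} \otimes (\CC^2)^{\otimes N_{\bvec{k}}}$, on which the hidden $\mathsf{U}(2) \times \mathsf{U}(2)$ acts diagonally as $U_1^{\otimes N_{\bvec{k}}} \otimes U_2^{\otimes N_{\bvec{k}}}$, with $\hat{C}_{+, \bvec{k}_i, \bvec{k}_j}$ (resp.\ $\hat{C}_{-, \bvec{k}_i, \bvec{k}_j}$) acting only on the first (resp.\ second) tensor factor. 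A direct computation from the CAR shows that on a computational basis vector of the first factor
\begin{equation*}
  \hat{C}_{+, \bvec{k}_i, \bvec{k}_j}\bigl( \ket{I_1} \otimes \cdots \otimes \ket{I_{N_{\bvec{k}}}} \bigr) = \pm \bigl( \delta_{I_i,1}\delta_{I_j,2} - \delta_{I_i,2}\delta_{I_j,1} \bigr) \ket{\Psi_{ij}^{\mathrm{rest}}},
\end{equation*}
where $\ket{\Psi_{ij}^{\mathrm{rest}}}$ is a fixed basis vector of the sector with $\lambda_+ = 2$ at $\bvec{k}_i$ and $\lambda_+ = 0$ at $\bvec{k}_j$, depending only on the indices $I_l$ for $l \ne i, j$. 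Because the scalar prefactor is antisymmetric under $I_i \leftrightarrow I_j$, the kernel of $\hat{C}_{+, \bvec{k}_i, \bvec{k}_j}$ in the first tensor factor is precisely the subspace invariant under transposing the $i$-th and $j$-th tensor slots.

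Imposing this for all pairs $i \ne j$ restricts the first factor to the totally symmetric subspace $\mathbb{S}_{(N_{\bvec{k}})}(\CC^2) = \mathrm{Sym}^{N_{\bvec{k}}}(\CC^2)$, and the identical argument for $\hat{C}_-$ does the same for the second factor. Combined with \cref{prop:equiv-characterization}, the ground state manifold at $\lambda_+ = 1$ is $\mathbb{S}_{(N_{\bvec{k}})}(\CC^2) \otimes \mathbb{S}_{(N_{\bvec{k}})}(\CC^2)$, of dimension $(N_{\bvec{k}}+1)^2$ by \cref{thm:hook-formula}. To match this with the span claimed in the theorem, I would observe that the Slater determinant $\prod_{\bvec{k}} (U_1 \circ \hat{f}^\dagger_{(+,+), \bvec{k}})(U_2 \circ \hat{f}^\dagger_{(+,-), \bvec{k}}) \ket{\mathrm{vac}}$ corresponds to the rank-one symmetric tensor $\ket{u}^{\otimes N_{\bvec{k}}} \otimes \ket{w}^{\otimes N_{\bvec{k}}}$ with $\ket{u} = U_1 \ket{1}$ and $\ket{w} = U_2 \ket{1}$; as $U_1, U_2$ sweep $\mathsf{U}(2)$, these rank-one tensors span $\mathrm{Sym}^{N_{\bvec{k}}}(\CC^2)^{\otimes 2}$, which is an instance of \cref{thm:theorem-of-highest-wt} applied to the highest weight vector $\ket{1}^{\otimes N_{\bvec{k}}} \otimes \ket{1}^{\otimes N_{\bvec{k}}}$. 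Summing contributions yields the total ground state dimension $(N_{\bvec{k}}+1)^2 + 2$.

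The main obstacle I anticipate is the explicit computation of $\hat{C}_{+, \bvec{k}_i, \bvec{k}_j}$ on simple tensors, with the attendant sign bookkeeping from the CAR needed to identify its image as a single basis vector (modulo sign), so that the kernel collapses cleanly onto the position-swap-symmetric subspace. A representation-theoretic alternative is available if this combinatorics becomes unwieldy: decompose $(\CC^2)^{\otimes N_{\bvec{k}}}$ into $\mathsf{U}(2)$-irreps $\mathbb{S}_{(a,b)}(\CC^2)$ via Schur--Weyl duality, and for each Young diagram $(a,b)$ with $b > 0$ exhibit a highest weight vector on which some $\hat{C}_{+, \bvec{k}_i, \bvec{k}_j}$ fails to vanish, thereby ruling out these irreps from the ground state manifold by \cref{thm:theorem-of-highest-wt}.
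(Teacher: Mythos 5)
Your proposal is correct, but it takes a genuinely different route from the paper in the crucial $\lambda_{+}=1$ sector. The paper proceeds inductively: starting from $V^{(1)}=(\CC^2)^{\otimes N_{\bvec{k}}}$, it applies the Littlewood--Richardson rule one momentum at a time, decomposing $\mathbb{S}_{(n)}(\CC^2)\otimes\CC^2\cong\mathbb{S}_{(n+1)}(\CC^2)\oplus\mathbb{S}_{(n,1)}(\CC^2)$, identifying explicit highest weight vectors for both summands, and checking by hand that $\hat{C}_{+,\bvec{k}_n,\bvec{k}_{n+1}}$ fails to annihilate the $\mathbb{S}_{(n,1)}$ generator. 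You instead observe directly that the image of $\hat{C}_{+,\bvec{k}_i,\bvec{k}_j}$ on a Fock basis state factors as an antisymmetric function of $(I_i,I_j)$ times a fixed target state, so that $\ker\hat{C}_{+,\bvec{k}_i,\bvec{k}_j}$ is exactly the subspace fixed by the transposition of tensor slots $i$ and $j$; intersecting over all pairs gives $\mathrm{Sym}^{N_{\bvec{k}}}(\CC^2)$ at once, with no induction and no appeal to the LR rule. I checked the sign bookkeeping you flag as the main risk: the relative $(-1)$ between $(I_i,I_j)=(1,2)$ and $(2,1)$ does hold, since the only change in fermionic reordering is a single extra transposition when inserting the new creation operator at site $\bvec{k}_i$, and the phase accrued in passing through the other sites and through the $-$ chiral sector is fixed once $\{I_\ell\}_{\ell\neq i,j}$ and the $-$ sector occupations are fixed. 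Your argument is arguably cleaner for this theorem because it replaces the stage-by-stage irrep bookkeeping with a single combinatorial identification. The trade-off is that it does not transplant to the spinful--valleyful $\lambda_{+}=2$ sector of \cref{prop:valleyful-spinful-case}: there $\ker$ of a single $\hat{C}_{+}$ on $(\CC^4\wedge\CC^4)^{\otimes 2}$ is strictly smaller than the transposition-symmetric subspace ($\mathrm{Sym}^2(\wedge^2\CC^4)\cong\mathbb{S}_{(2,2)}\oplus\mathbb{S}_{(1^4)}$, and $\mathbb{S}_{(1^4)}$ must be excluded), so the paper's LR-based induction is the one that scales to that case. Your representation-theoretic fallback is essentially the paper's proof. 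One small precision worth making explicit in a final write-up: the statement ``$\hat{C}_{\pm}$ acts only on the first (resp.\ second) tensor factor'' should be interpreted as the action factoring through a map into a different filling sector, with the kernel condition decoupling across the tensor product because the $-$ sector occupations are uniform; as written it is slightly informal.
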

As discussed in \cref{sec:repr-theory-ground}, the decomposition in~\cref{eq:valleyful-particle-number} is implied~\cref{lem:uniform-filling}.
Since the forms of $\ket{\tilde{\Phi}_{0}}$ and $\ket{\tilde{\Phi}_{2}}$ are fixed, the remaining part of the proof is to characterize the intersection
\begin{equation}
  \label{eq:valleyful-intersection}
  \Big\{ \ket{\Phi} : \hat{n}_{+, \bvec{k}} \ket{\Phi} = \hat{n}_{-, \bvec{k}} \ket{\Phi} = \ket{\Phi},~\forall \bvec{k} \in \mathcal{K} \Big\} \bigcap \Big\{ \ket{\Phi} : \hat{C}_{\pm, \bvec{k}, \bvec{k}'} \ket{\Phi} = 0~\forall \bvec{k}, \bvec{k}' \in \mathcal{K}, \bvec{k} \neq \bvec{k}' \Big\}
\end{equation}
as this is the set of states with $\lambda_{+} = \lambda_{-} = 1$ which are also ground states of $\hat{H}_{FBI}$.

\begin{proof}[Proof of~\cref{prop:valleyful-case}]
Using the occupation vector representation (\cref{sec:occup-vect-repr}) we can bijectively map the set of states with $\lambda_{+} = \lambda_{-} = 1$ to $(\CC^{2})^{\otimes N_{\bvec{k}}} \otimes (\CC^{2})^{\otimes N_{\bvec{k}}}$ where the first $(\CC^{2})^{\otimes N_{\bvec{k}}}$ corresponds to the $+$ chiral sector and the second $(\CC^{2})^{\otimes N_{\bvec{k}}}$ corresponds to the $-$ chiral sector.
To characterize the ground state, we will construct a decreasing sequence of $(\CC^{2})^{\otimes N_{\bvec{k}}}= V^{(1)} \supset V^{(2)} \supset \cdots \supset V^{(N_{\bvec{k}})}$ so that a many-body state $\ket{\Phi}$ with $\lambda_{+} = \lambda_{-} = 1$ satisfies $\hat{C}_{+, \bvec{k}, \bvec{k}'} \ket{\Phi} = 0$ for all $\bvec{k} \neq \bvec{k}'$ only if the occupation vector for $\ket{\Phi}$ lies in $V^{(N_{\bvec{k}})} \otimes (\CC^{2})^{\otimes N_{\bvec{k}}}$.
Due to to the symmetry between the two chiral sectors, it follows that a many-body state is a ground state if and only if the occupation vector lies in $V^{(N_{\bvec{k}})} \otimes V^{(N_{\bvec{k}})}$.
Note that since the group representations on the \(+\) and \(-\) chiral sectors commute, we may consider the representation theory of the full space by considering each sector as independent representations of $\mathsf{U}(2)$. For our purposes we may then further restrict to $\mathsf{SU}(2)$ per Remark~\ref{rem:reps of U, SU, su}.

To construct the first non-trivial vector space, $V^{(2)}$, we decompose the tensor product $\CC^{2} \otimes \CC^{2}$ into a direct sum of irreps $\mathbb{S}_{(2)}(\mathbb{C}^{2}) \oplus \mathbb{S}_{(1,1)}(\mathbb{C}^{2})$ of $\mathsf{SU}(2)$. This case is explicitly worked in~\cref{sec:example-two-sites}, and a special case of the Littlewood-Richardson rule (\cref{thm:littlewood-richardson_maintext}).
\begin{equation}
  \begin{array}{rlccrl}
    V^{(1)} & = \CC^{2} \otimes \CC^{2} \otimes (\CC^{2})^{\otimes (N_{\bvec{k}} - 2)} & & & V^{(1)} &  = \ydiagram{1} \otimes \ydiagram{1} \otimes (\CC^{2})^{\otimes (N_{\bvec{k}} - 2)} \\[1ex]
     & = ( \mathbb{S}_{(2)}(\mathbb{C}^{2}) \oplus \mathbb{S}_{(1,1)}(\mathbb{C}^{2}) ) \otimes (\CC^{2})^{\otimes (N_{\bvec{k}} - 2)} & & & & = \left(~\ydiagram{2} \oplus \ydiagram{1,1} ~\right) \otimes (\CC^{2})^{\otimes (N_{\bvec{k}} - 2)}.
  \end{array}
\end{equation}
Using the Theorem of Highest Weight, one can check that the irreps $\mathbb{S}_{(2)}$ and $\mathbb{S}_{(1,1)}$ are cyclically generated by:
\begin{equation}
  \begin{split}
    \mathbb{S}_{(2)}(\mathbb{C}^{2}) & := \gen_{\mathsf{U}(2)}\big\{ \ket{1^{+}}^{\otimes 2} \big\} \\[1ex]
    \mathbb{S}_{(1,1)}(\mathbb{C}^{2}) & := \gen_{\mathsf{U}(2)}\big\{ \ket{1^{+}} \ket{2^{+}} - \ket{2^{+}} \ket{1^{+}} \big\}
  \end{split}
\end{equation}
We can now transform these occupation vectors back to many body states.
With this transformation (and recalling the notation introduced in \cref{rem:basis-notation}) the generating vectors map to 
\begin{equation}
  \begin{split}
   & \hat{f}_{1^{+}, \bvec{k}_{1}}^{\dagger} \hat{f}_{1^{+}, \bvec{k}_{2}}^{\dagger} \ket{\Phi_{\rm rest}} \\[1ex]
   & \left( \hat{f}_{1^{+}, \bvec{k}_{1}}^{\dagger} \hat{f}_{2^{+}, \bvec{k}_{2}}^{\dagger} - \hat{f}_{2^{+}, \bvec{k}_{1}}^{\dagger} \hat{f}_{1^{+}, \bvec{k}_{2}}^{\dagger}  \right) \ket{\Phi_{\rm rest}}
  \end{split}
\end{equation}
where $\ket{\Phi_{\rm rest}}$ is an arbitrary state on the remaining $\mathcal{K} \setminus \{ \bvec{k}_{1}, \bvec{k}_{2} \}$ momenta in the \(+\) chiral sector.
These two vectors (when restricted to the first two momenta) likewise generate two irreps of $\mathsf{U}(2)$.
We can now calculate that
\begin{equation}
  \begin{split}
   & \hat{C}_{+, \bvec{k}_{1}, \bvec{k}_{2}} \left( \hat{f}_{1^{+}, \bvec{k}_{1}}^{\dagger} \hat{f}_{1^{+}, \bvec{k}_{2}}^{\dagger} \right) \ket{\Phi_{\rm rest}}  = \hat{f}_{1^{+}, \bvec{k}_{1}}^{\dagger} \hat{f}_{1^{+},\bvec{k}_{1}}^{\dagger} \ket{\Phi_{\rm rest}} = 0 \\[1ex]
   & \hat{C}_{+, \bvec{k}_{1}, \bvec{k}_{2}} \left( \hat{f}_{1^{+}, \bvec{k}_{1}}^{\dagger} \hat{f}_{2^{+}, \bvec{k}_{2}}^{\dagger} - \hat{f}_{2^{+}, \bvec{k}_{1}}^{\dagger} \hat{f}_{1^{+}, \bvec{k}_{2}}^{\dagger}  \right) \ket{\Phi_{\rm rest}} = 2 \hat{f}_{1^{+}, \bvec{k}_{1}}^{\dagger} \hat{f}_{2^{+}, \bvec{k}_{1}}^{\dagger} \ket{\Phi_{\rm rest}} \neq 0
  \end{split}
\end{equation}
Therefore, states from $\mathbb{S}_{(2)}(\mathbb{C}^{2}) \otimes (\CC^{2})^{\otimes (N_{\bvec{k}} - 2)}$ may lie in the ground state manifold however states from $\mathbb{S}_{(1,1)} \otimes (\CC^{2})^{\otimes (N_{\bvec{k}} - 2)}$ do not.
Therefore, we define $V^{(2)} := \mathbb{S}_{(2)}(\mathbb{C}^{2}) \otimes (\CC^{2})^{\otimes (N_{\bvec{k}} - 2)}$ and proceed with the induction.

To construct $V^{(3)}$ from $V^{(2)}$, we once again use the Littlewood-Richardson rule
\begin{equation}
  \begin{array}{rlccrl}
    V^{(2)} & = \mathbb{S}_{(2)}(\mathbb{C}^{2}) \otimes \CC^{2} \otimes (\CC^{2})^{\otimes (N_{\bvec{k}} - 3)} & & & V^{(2)} &  = \ydiagram{2} \otimes \ydiagram{1} \otimes (\CC^{2})^{\otimes (N_{\bvec{k}} - 3)} \\[1ex]
     & = ( \mathbb{S}_{(3)}(\mathbb{C}^{2}) \oplus \mathbb{S}_{(2,1)}(\mathbb{C}^{2}) ) \otimes (\CC^{2})^{\otimes (N_{\bvec{k}} - 3)} & & & & = \left(~\ydiagram{3} \oplus \ydiagram{2,1} ~\right) \otimes (\CC^{2})^{\otimes (N_{\bvec{k}} - 3)}
  \end{array}
\end{equation}
By the Theorem of Highest Weight, we can verify these spaces have generating vectors
\begin{equation}
  \begin{split}
    \mathbb{S}_{(3)}(\mathbb{C}^{2}) & := \gen_{\mathsf{U}(2)}\big\{ \ket{1^{+}}^{\otimes 3} \big\} \\[1ex]
    \mathbb{S}_{(2,1)}(\mathbb{C}^{2}) & := \gen_{\mathsf{U}(2)}\big\{ \Sym\Big[\ket{1^{+}} \ket{2^{+}} \Big] \ket{1^{+}} - \ket{1^{+}}^{\otimes 2} \ket{2^{+}}  \big\}.
  \end{split}
\end{equation}
The corresponding many-body states for these generating vectors are
\begin{equation}
  \begin{split}
   & \hat{f}_{1^{+}, \bvec{k}_{1}}^{\dagger} \hat{f}_{1^{+}, \bvec{k}_{2}}^{\dagger} \hat{f}_{1^{+}, \bvec{k}_{3}} \ket{\Phi_{\rm rest}} \\[1ex]
   & \left( \frac{1}{2} \hat{f}_{1^{+}, \bvec{k}_{1}}^{\dagger} \hat{f}_{2^{+}, \bvec{k}_{2}}^{\dagger} \hat{f}_{1^{+}, \bvec{k}_{3}}^{\dagger} + \frac{1}{2} \hat{f}_{2^{+}, \bvec{k}_{1}}^{\dagger} \hat{f}_{1^{+}, \bvec{k}_{2}}^{\dagger} \hat{f}_{1^{+}, \bvec{k}_{3}}^{\dagger} - \hat{f}_{1^{+}, \bvec{k}_{1}}^{\dagger} \hat{f}_{1^{+}, \bvec{k}_{2}}^{\dagger} \hat{f}_{2^{+}, \bvec{k}_{3}}^{\dagger}  \right) \ket{\Phi_{\rm rest}}
  \end{split}
\end{equation}
where $\ket{\Phi_{\rm rest}}$ is now an arbitrary state on the momenta $\mathcal{K} \setminus \{ \bvec{k}_{1}, \bvec{k}_{2}, \bvec{k}_{3} \}$ in the \(+\) chiral sector.
It is easily seen the generator for $\mathbb{S}_{(3)}(\mathbb{C}^{2})$ is annihilated by $\hat{C}_{+, \bvec{k}_{2}, \bvec{k}_{3}}$.
For the generator for $\mathbb{S}_{(2,1)}$ we instead have
\begin{equation}
  \label{eq:a2-valleyful-check}
  \begin{split}
    & \hat{C}_{+, \bvec{k}_{2}, \bvec{k}_{3}}\left( \frac{1}{2} \hat{f}_{1^{+}, \bvec{k}_{1}}^{\dagger} \hat{f}_{2^{+}, \bvec{k}_{2}}^{\dagger} \hat{f}_{1^{+}, \bvec{k}_{3}}^{\dagger} + \frac{1}{2} \hat{f}_{2^{+}, \bvec{k}_{1}}^{\dagger} \hat{f}_{1^{+}, \bvec{k}_{2}}^{\dagger} \hat{f}_{1^{+}, \bvec{k}_{3}}^{\dagger} - \hat{f}_{1^{+}, \bvec{k}_{1}}^{\dagger} \hat{f}_{1^{+}, \bvec{k}_{2}}^{\dagger} \hat{f}_{2^{+}, \bvec{k}_{3}}^{\dagger}  \right) \ket{\Phi_{\rm rest}} \\
    & \hspace{2em} = -\frac{3}{2} \hat{f}_{1^{+},\bvec{k}_{1}}^{\dagger} \hat{f}_{1^{+},\bvec{k}_{2}}^{\dagger} \hat{f}_{2^{+}, \bvec{k}_{2}}^{\dagger} \ket{\Phi_{\rm rest}} \neq 0
  \end{split}
\end{equation}
So states from $\mathbb{S}_{(3)}(\mathbb{C}^{2}) \otimes (\CC^{2})^{\otimes (N_{\bvec{k}} - 3)}$ may lie in the ground space however states from $\mathbb{S}_{(2,1)} \otimes (\CC^{2})^{\otimes (N_{\bvec{k}} - 3)}$ do not.
Continuing the induction, we define $V^{(3)} := \mathbb{S}_{(3)}(\mathbb{C}^{2}) \otimes (\CC^{2})^{\otimes (N_{\bvec{k}} - 3)}$.
At the $n^{\rm th}$ stage of the induction, we perform the decomposition:
\begin{equation}
  \begin{split}
    V^{(n)}
    & = ( \mathbb{S}_{(n)} \otimes \CC^{2} ) \otimes (\CC^{2})^{\otimes (N_{\bvec{k}} - (n + 1))} \\
    & = ( \mathbb{S}_{(n+1)} \oplus \mathbb{S}_{(n,1)} ) \otimes (\CC^{2})^{\otimes (N_{\bvec{k}} - (n + 1))} 
  \end{split}
\end{equation}
Here $\mathbb{S}_{(n+1)}$ and $\mathbb{S}_{(n,1)}$ have corresponding Young diagrams 
\begin{equation}
  \mathbb{S}_{(n+1)} = \overbrace{\ydiagram{7}}^{n+1} \qquad \mathbb{S}_{(n,1)} = \overbrace{\ydiagram{6,1}}^{n}.
\end{equation}
As before, we can use the theorem of highest weight to verify that the following vectors generate the two irreps 
\begin{equation}
  \begin{split}
    \mathbb{S}_{(n+1)}(\mathbb{C}^{2}) & := \gen_{\mathsf{U}(2)}\big\{ \ket{1^{+}}^{\otimes (n+1)} \big\} \\[1ex]
    \mathbb{S}_{(n,1)}(\mathbb{C}^{2}) & := \gen_{\mathsf{U}(2)}\big\{ \Sym\Big[\ket{1^{+}}^{\otimes (n-1)} \ket{2^{+}}\Big] \ket{1^{+}} - \ket{1^{+}}^{\otimes n}\ket{2^{+}}  \big\}
  \end{split}
\end{equation}
and repeating a similar calculation to~\cref{eq:a2-valleyful-check} confirms states with occupation vectors from $\mathbb{S}_{(n,1)}(\mathbb{C}^{2}) \otimes (\CC^{2})^{\otimes (N_{\bvec{k}} - (n + 1))}$ cannot be ground states for all $n$.

At the final stage of the induction, we conclude the only possible subspace for the ground states is
\begin{equation}
 \mathbb{S}_{(n)}(\mathbb{C}^{2}) \otimes (\CC^{2})^{\otimes N_{\bvec{k}}} =
 \gen_{\mathsf{U}(2)}\big\{ \ket{1^{+}}^{\otimes N_{\bvec{k}}} \big\} \otimes (\CC^{2})^{\otimes N_{\bvec{k}}}
\end{equation}
which corresponds to
\begin{equation}
  \spn\Big\{\prod_{\bvec{k} \in \mathcal{K}}^{} \left( U_{1} \circ \hat{f}_{(+,+), \bvec{k}}^{\dagger} \right) \ket{\Phi_{\rm rest}} : U_{1} \in \mathsf{U}(2) \Big\}.
\end{equation}
Applying the same analysis to the $-$ chiral sector with the hook formula (\cref{thm:hook-formula}) completes the proof.
\end{proof}

\section{Representation Theory for the Spinful and Valleyful Model}
\label{sec:vall-spinf-case}
The main goal of this section is to use the representation theory of $\mathsf{U}(4) \times \mathsf{U}(4)$ to prove the following result:
\begin{theorem}
  \label{prop:valleyful-spinful-case}
  A many-body state $\ket{\Phi}$ is a ground state of spinful and valleyful MATBG if and only if it can be written as
  \begin{equation}
    \label{eq:valleyful-spinful-particle-number}
    \ket{\Phi} = \alpha \ket{\tilde{\Phi}_{0}} + \beta \ket{\tilde{\Phi}_{1}} + \gamma \ket{\tilde{\Phi}_{2}} + \delta \ket{\tilde{\Phi}_{3}} + \theta \ket{\tilde{\Phi}_{4}} \quad \alpha, \beta, \gamma, \delta, \theta \in \CC
  \end{equation}
  where
  \begin{equation}
    \ket{\tilde{\Phi}_{4}} =  \prod_{\bvec{k} \in \mathcal{K}}^{} \prod_{s \in \{ \uparrow, \downarrow \}}^{}  \hat{f}_{(+,+,s), \bvec{k}}^{\dagger} \hat{f}_{(-, -,s), \bvec{k}}^{\dagger} \ket{\mathrm{vac}}, \qquad
    \ket{\tilde{\Phi}_{0}} = \prod_{\bvec{k} \in \mathcal{K}}^{} \prod_{s \in \{ \uparrow, \downarrow\}}^{}  \hat{f}_{(+,-,s), \bvec{k}}^{\dagger} \hat{f}_{(-, +,s), \bvec{k}}^{\dagger} \ket{\mathrm{vac}}
  \end{equation}
  and
  \begin{equation}
    \begin{split}
      \ket{\tilde{\Phi}_{1}} & \in \spn\left\{ \prod_{\bvec{k} \in \mathcal{K}}^{} \left( U_{1} \circ \hat{f}_{(+,+, \uparrow), \bvec{k}}^{\dagger} \hat{f}_{(+,+, \downarrow), \bvec{k}}^{\dagger}  \hat{f}_{(-, -, \uparrow), \bvec{k}}^{\dagger}  \right) \left( U_{2} \circ \hat{f}_{(+, -, \uparrow), \bvec{k}}^{\dagger} \right) \ket{\mathrm{vac}} : U_{1}, U_{2} \in \mathsf{U}(4) \right\} \\
      \ket{\tilde{\Phi}_{2}} & \in \spn\left\{\prod_{\bvec{k} \in \mathcal{K}}^{} \left( U_{1} \circ \hat{f}_{(+,+, \uparrow), \bvec{k}}^{\dagger} \hat{f}_{(+,+,\downarrow), \bvec{k}}^{\dagger} \right) \left( U_{2} \circ \hat{f}_{(+, -, \uparrow), \bvec{k}}^{\dagger} \hat{f}_{(+, -, \downarrow), \bvec{k}}^{\dagger} \right) \ket{\mathrm{vac}} : U_{1}, U_{2} \in \mathsf{U}(4) \right\} \\
      \ket{\tilde{\Phi}_{3}} & \in \spn\left\{\prod_{\bvec{k} \in \mathcal{K}}^{} \left( U_{1} \circ \hat{f}_{(+,+, \uparrow), \bvec{k}}^{\dagger} \right) \left( U_{2} \circ \hat{f}_{(+, -, \uparrow), \bvec{k}}^{\dagger} \hat{f}_{(+, -, \downarrow), \bvec{k}}^{\dagger} \hat{f}_{(-,+, \uparrow), \bvec{k}}^{\dagger}  \right) \ket{\mathrm{vac}} : U_{1}, U_{2} \in \mathsf{U}(4) \right\}
    \end{split}
  \end{equation}
  Alternatively, the states $\{ \ket{\tilde{\Phi}_{i}} : i \in \{0, \cdots, 4 \} \}$ lie in a vector spaces of the following dimensions:
  \begin{center}
    \renewcommand{\arraystretch}{1.4}
    \begin{tabular}[t]{ll}
      State & Vector Space Dimension \\
      $\ket{\tilde{\Phi}_{0}}$ & 1 \\
      $\ket{\tilde{\Phi}_{1}}$ & $\frac{1}{36} (N_{\bvec{k}} + 1)^{2} (N_{\bvec{k}} + 2)^{2} (N_{\bvec{k}} + 3)^{2}$ \\
      $\ket{\tilde{\Phi}_{2}}$ & $\frac{1}{144} (N_{\bvec{k}} + 1)^{2} (N_{\bvec{k}} + 2)^{4} (N_{\bvec{k}} + 3)^{2}$ \\
      $\ket{\tilde{\Phi}_{3}}$ &  $\frac{1}{36} (N_{\bvec{k}} + 1)^{2} (N_{\bvec{k}} + 2)^{2} (N_{\bvec{k}} + 3)^{2}$ \\ 
      $\ket{\tilde{\Phi}_{4}}$ & 1 \\
    \end{tabular}
  \end{center}
  and the total dimension of the ground space is the sum of these dimensions.
\end{theorem}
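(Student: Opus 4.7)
The plan is to extend the representation-theoretic argument of \cref{prop:valleyful-case} from $\mathsf{SU}(2)\times\mathsf{SU}(2)$ to $\mathsf{SU}(4)\times\mathsf{SU}(4)$, using the occupation vector embeddings listed in \cref{fig:occupation-vector}. By \cref{prop:equiv-characterization} together with \cref{lem:uniform-filling}, every ground state decomposes as in \cref{eq:valleyful-spinful-particle-number} according to the common eigenvalue $\lambda_+\in\{0,1,2,3,4\}$ of the family $\{\hat{n}_{+,\bvec{k}}\}_{\bvec{k}\in\mathcal{K}}$. For $\lambda_+=0$ (respectively $\lambda_+=4$) the requirement that $\hat{n}_{+,\bvec{k}}$ (respectively $\hat{n}_{-,\bvec{k}}$) annihilate the state at every $\bvec{k}\in\mathcal{K}$ forces the unique Slater determinant $\ket{\tilde\Phi_0}$ (respectively $\ket{\tilde\Phi_4}$), exactly as in the proof of \cref{thm:single-valley}. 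The work is therefore concentrated on $\lambda_+\in\{1,2,3\}$.

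For these middle sectors I would pass to the occupation vector representation and exploit the fact that the chiral transfer operators in the two chiral sectors commute with each other and with the full $\mathsf{U}(4)\times\mathsf{U}(4)$ hidden symmetry. This factors the problem: for each single-site space $V\in\{\CC^4,\CC^4\wedge\CC^4,\CC^4\wedge\CC^4\wedge\CC^4\}$ I need to identify the largest subspace of $V^{\otimes N_{\bvec{k}}}$ annihilated by every $\hat{C}_{+,\bvec{k},\bvec{k}'}$ with $\bvec{k}\neq\bvec{k}'$, and then tensor with the analogous subspace from the $-$ chiral sector (by \cref{rem:reps of U, SU, su} we may equivalently work with $\mathsf{SU}(4)$ representations). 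The cases $V=\CC^4$ and $V=\CC^4\wedge\CC^4\wedge\CC^4$ are mirror images of each other under the duality $\bigwedge^3\CC^4\cong(\CC^4)^*$ as $\mathsf{SU}(4)$-representations, so the analysis for $\lambda_+=3$ is obtained from that for $\lambda_+=1$ by symmetry.

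The core step is the inductive Littlewood--Richardson procedure used in \cref{prop:valleyful-case}, now for $\mathsf{SU}(4)$. For $V=\CC^4$, \cref{thm:littlewood-richardson_maintext} gives $\mathbb{S}_{(n)}(\CC^4)\otimes\CC^4\cong\mathbb{S}_{(n+1)}(\CC^4)\oplus\mathbb{S}_{(n,1)}(\CC^4)$, and using \cref{thm:highest-wt-vector} I would verify that the highest weight vector $\ket{1^+}^{\otimes(n+1)}$ of $\mathbb{S}_{(n+1)}(\CC^4)$ is annihilated by $\hat{C}_{+,\bvec{k}_n,\bvec{k}_{n+1}}$ while the highest weight vector $\mathrm{Sym}[\ket{1^+}^{\otimes(n-1)}\ket{2^+}]\ket{1^+}-\ket{1^+}^{\otimes n}\ket{2^+}$ of $\mathbb{S}_{(n,1)}(\CC^4)$ is not — exactly the calculation performed in \cref{eq:a2-valleyful-check}, now interpreted inside $\CC^4$. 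The surviving subspace is therefore the single irrep $\mathbb{S}_{(N_{\bvec{k}})}(\CC^4)$, cyclically generated under $\mathsf{U}(4)$ by $\ket{1^+}^{\otimes N_{\bvec{k}}}$, whose image under the occupation vector bijection is precisely $\mathrm{span}\{\prod_{\bvec{k}}(U\circ\hat{f}_{(+,+,\uparrow),\bvec{k}}^\dagger):U\in\mathsf{U}(4)\}$. Combining with the dual analysis in the $-$ chiral sector yields $\ket{\tilde\Phi_1}$, and symmetrically $\ket{\tilde\Phi_3}$.

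The main obstacle is the $\lambda_+=2$ case with $V=\CC^4\wedge\CC^4$, where the Littlewood--Richardson decomposition produces three summands rather than two,
\begin{equation}
\mathbb{S}_{(n,n)}(\CC^4)\otimes\mathbb{S}_{(1,1)}(\CC^4)\;\cong\;\mathbb{S}_{(n+1,n+1)}(\CC^4)\;\oplus\;\mathbb{S}_{(n+1,n,1)}(\CC^4)\;\oplus\;\mathbb{S}_{(n,n,1,1)}(\CC^4),
\end{equation}
so at each inductive step two families of irreps must be eliminated. For each I would write down the highest weight vector explicitly via \cref{thm:highest-wt-vector} (suitably adapted as in the proposition after it to the wedge setting), translate it back to the Fock space using the identification of \cref{rem:basis-notation}, and verify by a direct CAR calculation that $\hat{C}_{+,\bvec{k}_n,\bvec{k}_{n+1}}$ fails to annihilate it. What remains is $\mathbb{S}_{(N_{\bvec{k}}, N_{\bvec{k}})}(\CC^4)$, cyclically generated by $\ket{1^+\wedge 2^+}^{\otimes N_{\bvec{k}}}$, which corresponds to the span in the statement of $\ket{\tilde\Phi_2}$. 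The dimension counts then follow from \cref{thm:hook-formula}: $\dim\mathbb{S}_{(N_{\bvec{k}})}(\CC^4)=\dim\mathbb{S}_{(N_{\bvec{k}},N_{\bvec{k}},N_{\bvec{k}})}(\CC^4)=\tfrac{1}{6}(N_{\bvec{k}}+1)(N_{\bvec{k}}+2)(N_{\bvec{k}}+3)$ and $\dim\mathbb{S}_{(N_{\bvec{k}},N_{\bvec{k}})}(\CC^4)=\tfrac{1}{12}(N_{\bvec{k}}+1)(N_{\bvec{k}}+2)^2(N_{\bvec{k}}+3)$, and multiplying the contributions from the two chiral sectors recovers the entries of the table in the theorem.
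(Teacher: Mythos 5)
Your proposal follows essentially the same route as the paper: decompose by the $\lambda_{+}$ eigenvalue via \cref{lem:uniform-filling}, pass to the occupation vector picture, and run the inductive Littlewood--Richardson elimination for $\mathsf{SU}(4)$ acting on $\CC^{4}$ (for $\lambda_+=1,3$) and on $\CC^{4}\wedge\CC^{4}$ (for $\lambda_+=2$), verifying at each step via the Theorem of the Highest Weight and a CAR computation that only the rectangular diagram survives, then applying the hook length formula.

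The one place you pass over is the handling of the opposite chiral sector in the $\lambda_+=1$ (and dually $\lambda_+=3$) case, which lives in $(\CC^4\wedge\CC^4\wedge\CC^4)^{\otimes N_{\bvec{k}}}$. You invoke the abstract isomorphism $\bigwedge^3\CC^4\cong(\CC^4)^*$ of $\mathsf{SU}(4)$-representations, which is correct at the level of irrep structure, but by itself does not directly say which irreps lie in the kernel of the intertwining operators $\hat{C}_{-,\bvec{k},\bvec{k}'}$ (these are not $\mathsf{SU}(4)$-equivariant in the direction you need; they act across the tensor factors). What makes the argument go through is that the explicit particle--hole map $\mathcal{P}_-$ realizing this duality conjugates $\hat{C}_{-,\bvec{k},\bvec{k}'}$ to $-\hat{C}_{-,\bvec{k}',\bvec{k}}$ for $\bvec{k}\neq\bvec{k}'$, so that the kernel condition is preserved; the paper checks this directly. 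This is a small but genuine step that should be spelled out rather than attributed to duality alone. With that supplied, your argument is complete and matches the paper's.
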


\begin{proof}
  The general strategy used to prove~\cref{prop:valleyful-spinful-case} will be the same as~\cref{prop:valleyful-case} in that we will use the decomposition implied by~\cref{lem:uniform-filling} to characterize the non-trivial $+$ chiral fillings $\lambda_{+} = 1, 2, 3$.
  The case $\lambda_{+} = 1$ will follow using the same argument as used for~\cref{prop:valleyful-case} with minimal changes since $\CC^{2}$ isomorphically embeds into a subspace of $\CC^{4}$, and $\CC^{4} \wedge \CC^{4} \wedge \CC^{4} \cong \CC^{4}$.
  While the case $\lambda_{+} = 3$ is the same as $\lambda_{+} = 1$ due to the symmetry between the two chiral sectors, $\lambda_{+} = 2$ requires a new argument entirely. 

  \underline{$\lambda_{+} = 1$ and $\lambda_{+} = 3$}:
  The argument in this case is nearly identical to the one presented in~\cref{sec:valleyful-case} with a few minor changes.
  We begin by considering $\lambda_{+} = 1$ and focusing on the $+$ chiral sector only.
  We construct a decreasing sequence of subspaces $(\CC^{4})^{\otimes N_{\bvec{k}}} = V^{(1)} \supset V^{(2)} \supset \cdots \supset V^{(N_{\bvec{k}})}$ so that a state $\ket{\Phi}$ is a ground state only if its occupation vector lies in $V^{(N_{\bvec{k}})} \otimes (\CC^{4})^{\otimes N_{\bvec{k}}}$.

  We can decompose the occupation vector space on the $+$ chiral sector as follows:
  \begin{equation}
    \begin{array}{rlccrl}
      V^{(1)} & = \CC^{4} \otimes \CC^{4} \otimes (\CC^{4})^{\otimes (N_{\bvec{k}} - 2)} & & & V^{(1)} &  = \ydiagram{1} \otimes \ydiagram{1} \otimes (\CC^{4})^{\otimes (N_{\bvec{k}} - 2)} \\[1ex]
              & = ( \mathbb{S}_{(2)}(\mathbb{C}^{4}) \oplus \mathbb{S}_{(1,1)}(\mathbb{C}^{4}) ) \otimes (\CC^{4})^{\otimes (N_{\bvec{k}} - 2)} & & & & = \left(~\ydiagram{2} \oplus \ydiagram{1,1} ~\right) \otimes (\CC^{4})^{\otimes (N_{\bvec{k}} - 2)}.
    \end{array}
  \end{equation}
  which is precisely the same decomposition as used in~\cref{sec:valleyful-case} with $\CC^{2}$ replaced with $\CC^{4}$.
  In fact, using the Theorem of Highest Weight, $\mathbb{S}_{(2)}(\mathbb{C}^{4})$ and $\mathbb{S}_{(1,1)}(\mathbb{C}^{4})$ are cyclically generated by 
  \begin{equation}
    \begin{split}
      \mathbb{S}_{(2)}(\mathbb{C}^{4}) & := \gen_{\mathsf{U}(4)}\big\{ \ket{1^{+}}^{\otimes 2} \big\}, \\[1ex]
      \mathbb{S}_{(1,1)}(\mathbb{C}^{4}) & := \gen_{\mathsf{U}(4)}\big\{ \ket{1^{+}} \ket{2^{+}} - \ket{2^{+}} \ket{1^{+}} \big\}.
    \end{split}
  \end{equation}
  which is the same as the valleyful case with $\mathsf{U}(2)$ replaced with $\mathsf{U}(4)$ and now $\ket{1}, \ket{2} \in \CC^{4}$ instead of $\CC^{2}$.
  Repeating the same reasoning as used in the valleyful case where we replace $\mathsf{U}(2)$ with $\mathsf{U}(4)$ and $\CC^{2}$ with $\CC^{4}$ where appropriate we can conclude that the occupation vector for every ground state must lie in the vector space $V^{(N_{\bvec{k}})} \otimes (\CC^{4})^{\otimes N_{\bvec{k}}}$ where
  \begin{equation}
    V^{(N_{\bvec{k}})} = \gen_{\mathsf{U}(4)}\big\{ \ket{1^{+}}^{\otimes N_{\bvec{k}}} \big\}.
  \end{equation}

  To fix the form of the ground state on the $-$ chiral sector we introduce a particle hole transformation \cite{Zirnbauer2021}, $\mathcal{P}_{-}$, which is a bijective map which maps between $\CC^{4}$ and $\CC^{4} \wedge \CC^{4} \wedge \CC^{4}$.
  The operator $\mathcal{P}_{-}$ is a linear map which acts on a basis for $\CC^{4}$ as follows:
  \begin{equation}
    \begin{split}
      \mathcal{P}_{-} \ket{1^{-}} & = \ket{2^{-} \wedge 3^{-} \wedge 4^{-}} \qquad
                                    \mathcal{P}_{-} \ket{2^{-}} = \ket{1^{-} \wedge 3^{-} \wedge 4^{-}} \\
      \mathcal{P}_{-} \ket{3^{-}} & = \ket{1^{-} \wedge 2^{-} \wedge 4^{-}} \qquad
                                    \mathcal{P}_{-} \ket{4^{-}} = \ket{1^{-} \wedge 2^{-} \wedge 3^{-}}.
    \end{split}
  \end{equation}
  To maintain the isomorphism between the occupation vector space and the many-body space, we can extend the definition of $\mathcal{P}_{-}$ to the many-body space.
  In particular, we can define $\mathcal{P}_{-}$ to act as
  \begin{equation}
    \begin{split}
      & \mathcal{P}_{-} \hat{f}_{m^{-}, \bvec{k}}^{\dagger} \mathcal{P}_{-}^{-1} = (-1)^{m + 1} \hat{f}_{m^{-}, \bvec{k}} \\
      & \mathcal{P}_{-} \hat{f}_{m^{-}, \bvec{k}} \mathcal{P}_{-}^{-1} = (-1)^{m + 1} \hat{f}_{m^{-}, \bvec{k}}^{\dagger} \\
      & \mathcal{P}_{-} \ket{\mathrm{vac}} = \prod_{\bvec{k} \in \mathcal{K}} \hat{f}_{1^{-}, \bvec{k}}^{\dagger} \hat{f}_{2^{-}, \bvec{k}}^{\dagger} \hat{f}_{3^{-}, \bvec{k}}^{\dagger} \hat{f}_{4^{-}, \bvec{k}}^{\dagger} \ket{\mathrm{vac}}.
    \end{split}
  \end{equation}
  where we have associated the $-$ chiral states $(+, -, \uparrow)$, $(+, -, \downarrow)$, etc. with $\{ 1^{-}, 2^{-}, 3^{-}, 4^{-} \}$ per \cref{rem:basis-notation}.
  One can verify these definitions maintain the isomorphism between the two spaces.

  Importantly, the operator $\hat{C}_{-, \bvec{k}, \bvec{k}'}$ anticommutes with the particle hole symmetry operator:
  \begin{equation}
    \mathcal{P}_{-} \hat{C}_{-, \bvec{k}, \bvec{k}'} \mathcal{P}_{-}^{-1} = - \hat{C}_{-, \bvec{k}, \bvec{k}'} \quad \forall \bvec{k}, \bvec{k}' \in \mathcal{K}.
  \end{equation}
  So $\hat{C}_{-, \bvec{k}, \bvec{k}'} \ket{\Phi_{\rm rest}} = 0$ if and only if $\hat{C}_{-, \bvec{k}, \bvec{k}'} \mathcal{P}_{-} \ket{\Phi_{\rm rest}} = 0$.
  Therefore, following the reasoning used for the $+$ chiral sector, a many-body state is a ground state if and only if its occupation vector lies in $V^{(N_{\bvec{k}})} \otimes (\mathcal{P}_{-} V^{(N_{\bvec{k}})}$) which proves the claim.

  \underline{$\lambda_{+} = 2$}:
  Similar to the proof for $\lambda_{+} = 1$, we use the Littlewood-Richardson rules to decompose the occupation vector space for the $+$ chiral sector $(\CC^{4} \wedge \CC^{4})^{\otimes N_{\bvec{k}}}$ inductively.

  For $V^{(1)}$ we have the decomposition
  \begin{equation}
    \begin{split}
      V^{(1)} & = (\CC^{4} \wedge \CC^{4}) \otimes (\CC^{4} \wedge \CC^{4}) \otimes (\CC^{4} \wedge \CC^{4})^{\otimes (N_{\bvec{k}} - 2)} \\[1ex]
              & = ( \mathbb{S}_{(2,2)}(\mathbb{C}^{4}) \oplus \mathbb{S}_{(2,1,1)}(\mathbb{C}^{4}) \oplus \mathbb{S}_{(1,1,1,1)}(\mathbb{C}^{4}) ) \otimes (\CC^{4} \wedge \CC^{4})^{\otimes (N_{\bvec{k}} - 2)}
    \end{split}
  \end{equation}
  with corresponding Young diagrams
  \begin{equation}
    \begin{split}
      V^{(1)} &  = \ydiagram{1,1} \otimes \ydiagram{1,1} \otimes (\CC^{4} \wedge \CC^{4})^{\otimes (N_{\bvec{k}} - 2)} \\[1ex]
              & = \left(~\ydiagram{2,2} \oplus \ydiagram{2,1,1} \oplus \ydiagram{1,1,1,1} ~\right) \otimes (\CC^{4} \wedge \CC^{4})^{\otimes (N_{\bvec{k}} - 2)}.
    \end{split}
  \end{equation}
  Using the Young symmetrizers and the Theorem of Highest Weight we can verify the following are generating vectors for the irreps:
  \begin{equation}
    \begin{split}
      \mathbb{S}_{(2,2)}(\mathbb{C}^{4}) & = \gen_{\mathsf{U}(4)}\big\{ \ket{1^{+} \wedge 2^{+}}^{\otimes 2} \big\} \\
      \mathbb{S}_{(2,1,1)}(\mathbb{C}^{4}) & = \gen_{\mathsf{U}(4)}\Big\{ \ket{1^{+} \wedge 2^{+}} \ket{1^{+} \wedge 3^{+}} + \ket{1^{+} \wedge 3^{+}} \ket{1^{+} \wedge 2^{+}} \Big\} \\
      \mathbb{S}_{(1,1,1,1)}(\mathbb{C}^{4}) &= \gen_{\mathsf{U}(4)} \Big\{
                                               \sum_{\sigma \in S_{4}}^{} (-1)^{\sigma} \ket{\sigma(1)^{+} \wedge \sigma(2)^{+}} \ket{\sigma(3)^{+} \wedge \sigma(4)^{+}} \Big\}.
    \end{split}
  \end{equation}
  Explicitly, up to a scaling constant, the generator for $\mathbb{S}_{(1,1,1,1)}(\mathbb{C}^{4})$ is a sum of the following six terms:
  \begin{equation}
    \label{eq:lambda-2-generator-occupation}
    \begin{split}
      & \ket{1^{+} \wedge 2^{+}} \ket{3^{+} \wedge 4^{+}} 
        + \ket{2^{+} \wedge 3^{+}} \ket{1^{+} \wedge 4^{+}}
        + \ket{3^{+} \wedge 4^{+}} \ket{1^{+} \wedge 2^{+}} \\
      & - \ket{1^{+} \wedge 3^{+}} \ket{2^{+} \wedge 4^{+}}
        - \ket{2^{+} \wedge 4^{+}} \ket{1^{+} \wedge 3^{+}}
        + \ket{1^{+} \wedge 4^{+}} \ket{2^{+} \wedge 3^{+}}. 
    \end{split}
  \end{equation}
  Mapping these equations back to Fock space, we see that the corresponding many-body states are
  \begin{equation}
    \label{eq:lambda-2-generators-1}
    \begin{split}
      & \hat{f}_{1^{+}, \bvec{k}_{1}}^{\dagger} \hat{f}_{2^{+}, \bvec{k}_{1}}^{\dagger} \hat{f}_{1^{+}, \bvec{k}_{2}}^{\dagger} \hat{f}_{2^{+}, \bvec{k}_{2}}^{\dagger} \ket{\Phi_{\rm rest}} \\
      & \left( \hat{f}_{1^{+}, \bvec{k}_{1}}^{\dagger} \hat{f}_{2^{+}, \bvec{k}_{1}}^{\dagger} \hat{f}_{1^{+}, \bvec{k}_{2}}^{\dagger} \hat{f}_{3^{+}, \bvec{k}_{2}}^{\dagger} - \hat{f}_{1^{+}, \bvec{k}_{1}}^{\dagger} \hat{f}_{3^{+}, \bvec{k}_{1}}^{\dagger} \hat{f}_{1^{+}, \bvec{k}_{2}}^{\dagger} \hat{f}_{2^{+}, \bvec{k}_{2}}^{\dagger}  \right) \ket{\Phi_{\rm rest}} \\
      & \left( \sum_{\sigma \in S_{4}}^{} (-1)^{\sigma}  \hat{f}_{\sigma(1)^{+}, \bvec{k}_{1}}^{\dagger} \hat{f}_{\sigma(2)^{+}, \bvec{k}_{1}}^{\dagger} \hat{f}_{\sigma(3)^{+}, \bvec{k}_{2}}^{\dagger} \hat{f}_{\sigma(4)^{+}, \bvec{k}_{2}}^{\dagger} \right) \ket{\Phi_{\rm rest}}.
    \end{split}
  \end{equation}
  For the generator for $\mathbb{S}_{(2,2)}$, one can easily verify that
  \begin{equation}
    \hat{C}_{+, \bvec{k}_{1}, \bvec{k}_{2}} \left( \hat{f}_{1^{+}, \bvec{k}_{1}}^{\dagger} \hat{f}_{2^{+}, \bvec{k}_{1}}^{\dagger} \hat{f}_{1^{+}, \bvec{k}_{2}}^{\dagger} \hat{f}_{2^{+}, \bvec{k}_{2}}^{\dagger} \right) \ket{\Phi_{\rm rest}} = 0.
  \end{equation}
  Hence the occupation vector for a ground state may lie within $\mathbb{S}_{(2,2)}(\mathbb{C}^{4}) \otimes (\CC^{4} \wedge \CC^{4})^{\otimes (N_{\bvec{k}} - 2)}$.
  For $\mathbb{S}_{(2,1,1)}$ we have that
  \begin{equation}
    \begin{split}
      \hat{C}_{+, \bvec{k}_{1}, \bvec{k}_{2}} \left( \hat{f}_{1^{+}, \bvec{k}_{1}}^{\dagger} \hat{f}_{2^{+}, \bvec{k}_{1}}^{\dagger} \hat{f}_{1^{+}, \bvec{k}_{2}}^{\dagger} \hat{f}_{3^{+}, \bvec{k}_{2}}^{\dagger} - \hat{f}_{1^{+}, \bvec{k}_{1}}^{\dagger} \hat{f}_{3^{+}, \bvec{k}_{1}}^{\dagger} \hat{f}_{1^{+}, \bvec{k}_{2}}^{\dagger} \hat{f}_{2^{+}, \bvec{k}_{2}}^{\dagger}  \right) & \ket{\Phi_{\rm rest}} \\
      = -2 \left( \hat{f}_{1^{+}, \bvec{k}_{1}}^{\dagger} \hat{f}_{2^{+}, \bvec{k}_{1}}^{\dagger} \hat{f}_{3^{+}, \bvec{k}_{1}}^{\dagger}  \hat{f}_{1^{+}, \bvec{k}_{2}}^{\dagger}  \right) & \ket{\Phi_{\rm rest}} \neq 0
    \end{split}
  \end{equation}
  and hence the occupation vector for a ground state cannot lie in $\mathbb{S}_{(2,1,1)}(\mathbb{C}^{4}) \otimes (\CC^{4} \wedge \CC^{4})^{\otimes (N_{\bvec{k}} - 2)}$.

  To rule out the occupation vector of a ground state being an element of $\mathbb{S}_{(1,1,1,1)}(\mathbb{C}^{4}) \otimes (\CC^{4} \wedge \CC^{4})^{\otimes (N_{\bvec{k}} - 2)}$ we can explicitly calculate all of 24 terms after applying $\hat{C}_{+, \bvec{k}_{1}, \bvec{k}_{2}}$ to \cref{eq:lambda-2-generators-1} and show that the result does not vanish, however, there is an easier way.
  Looking at \cref{eq:lambda-2-generator-occupation}, we see that there are only three terms which have $\ket{4^{+}}$ at momenta $\bvec{k}_{2}$.
  Since $\hat{C}_{+, \bvec{k}_{1}, \bvec{k}_{2}}$ moves electrons from $\bvec{k}_{2}$ to $\bvec{k}_{1}$ after the applying $\hat{C}_{+, \bvec{k}_{1}, \bvec{k}_{2}}$, these three terms are the only terms which have components where $\ket{4^{+}}$ is occupied at $\bvec{k}_{2}$
  Using this observation, we can easily check that
  \begin{equation}
    \begin{split}
      & \hat{C}_{+, \bvec{k}_{1}, \bvec{k}_{2}} \left( \sum_{\sigma \in S_{4}}^{} (-1)^{\sigma}  \hat{f}_{\sigma(1)^{+}, \bvec{k}_{1}}^{\dagger} \hat{f}_{\sigma(2)^{+}, \bvec{k}_{1}}^{\dagger} \hat{f}_{\sigma(3)^{+}, \bvec{k}_{2}}^{\dagger} \hat{f}_{\sigma(4)^{+}, \bvec{k}_{2}}^{\dagger} \right) \ket{\Phi_{\rm rest}} \\
      & = -3 \left( \hat{f}_{1^{+}, \bvec{k}_{1}}^{\dagger} \hat{f}_{2^{+}, \bvec{k}_{1}}^{\dagger} \hat{f}_{3^{+}, \bvec{k}_{1}}^{\dagger} \hat{f}_{4^{+}, \bvec{k}_{2}}^{\dagger} \right) \ket{\Phi_{\rm rest}} + \ket{\Phi_{\text{no $\ket{4^{+}}$ at $\bvec{k}_{2}$}}} \ket{\Phi_{\rm rest}}.
    \end{split}
  \end{equation}
  By definition $\ket{\Phi_{\text{no $\ket{4^{+}}$ at $\bvec{k}_{2}$}}} \ket{\Phi_{\rm rest}}$ and $\hat{f}_{1^{+}, \bvec{k}_{1}}^{\dagger} \hat{f}_{2^{+}, \bvec{k}_{1}}^{\dagger} \hat{f}_{3^{+}, \bvec{k}_{1}}^{\dagger} \hat{f}_{4^{+}, \bvec{k}_{2}}^{\dagger} \ket{\Phi_{\rm rest}}$ are orthogonal so this implies that no ground state may lie in $\mathbb{S}_{(1,1,1,1)}(\mathbb{C}^{4}) \otimes (\CC^{4} \wedge \CC^{4})^{\otimes (N_{\bvec{k}} - 2)}$.
  These calculations imply that only \(\mathbb{S}_{(2,2)}\) remains a potential part of the ground state manifold.

  Similar to the proof in~\cref{sec:valleyful-case}, we continue adding more momentum points $\bvec{k}_{2}, \bvec{k}_{3}, \mathbf{k}_{4}, \cdots$ and at each stage only the rectangular Young diagram remains a potential part of the ground state manifold.
  By the Littlewood-Richardson rules, at the $n^{\rm th}$ stage of this process, we decompose the vector space in Young diagrams as $\mathbb{S}_{(n +1, n+1)}(\mathbb{C}^{4}) \oplus \mathbb{S}_{(n + 1, n, 1)}(\mathbb{C}^{4}) \oplus \mathbb{S}_{(n,n,1,1)}(\mathbb{C}^{4})$:
  \begin{equation}
    \overbrace{\ydiagram{7,7}}^{n+1} \oplus \overbrace{\ydiagram{7,6,1}}^{n+1} \oplus \overbrace{\ydiagram{6,6,1,1}}^{n}
  \end{equation}
  By the Theorem of Highest Weight, one can verify the following are generating vectors for the irreps:
  \begin{equation}
    \begin{split}
      \mathbb{S}_{(n+1, n+1)} & = \gen_{\mathsf{U}(4)}\big\{ \ket{1^{+} \wedge 2^{+}}^{\otimes (n+1)} \big\} \\
      \mathbb{S}_{(n+1,n,1)} & = \gen_{\mathsf{U}(4)}\Big\{ \Sym\Big[ \ket{1^{+} \wedge 2^{+}}^{\otimes (n-1)} \ket{1^{+} \wedge 3^{+}} \Big] \ket{1^{+} \wedge 2^{+}} - \ket{1^{+} \wedge 3^{+}} \ket{1^{+} \wedge 2^{+}} \Big\} \\
      \mathbb{S}_{(n,n,1,1)} &= \gen_{\mathsf{U}(4)} \Big\{ \sum_{\sigma \in S_{4}}^{} (-1)^{\sigma} \Sym\Big[ (\ket{1^{+} \wedge 2^{+}})^{\otimes (n - 1)}  \ket{\sigma(1)^{+} \wedge \sigma(2)^{+}} \Big] \ket{\sigma(3)^{+} \wedge \sigma(4)^{+}} \Big\}.
    \end{split}
  \end{equation}
  These generating vectors are constructed by appending on $(n - 1)$ copies of $\ket{1^{+} \wedge 2^{+}}$ to the beginning of the occupation vector for $\mathbb{S}_{(2,2)}(\mathbb{C}^{4})$, $\mathbb{S}_{(2,1,1)}(\mathbb{C}^{4})$, $\mathbb{S}_{(1,1,1,1)}(\mathbb{C}^{4})$ and symmetrizing with respect to the first $n$ tensor indices.
  Following a similar calculation for the $\mathbb{S}_{(2,2)}(\mathbb{C}^{4}) \oplus \mathbb{S}_{(2,1,1)}(\mathbb{C}^{4}) \oplus \mathbb{S}_{(1,1,1,1)}(\mathbb{C}^{4})$ case, we can conclude that only $\mathbb{S}_{(n+1,n+1)}(\mathbb{C}^{4})$ vanishes when acted upon by $\hat{C}_{+, \bvec{k}_{n}, \bvec{k}_{n + 1}}$.
  This implies that every ground state with \(\lambda_{+} = 2\) may be written as $\ket{\tilde{\Phi}_{2}}$.
  Applying the Hook Length formula (\cref{thm:hook-formula}) completes the proof.
\end{proof}

\bibliographystyle{plain}
\bibliography{bibliography}

\appendix

\section{Proof of~\cref{lem:linear-indep}}
\label{sec:lemma-linear-indep}

To prove the lemma, we first notice that for all $\bvec{G}' \in \Gamma^{*}$
\begin{equation}
  \label{eq:full-rank-appendix-1}
  \begin{split}
   & \sum_{\bvec{k} \in \mathcal{K}}^{} a_{\bvec{k}}(\bvec{G}) c_{\bvec{k}} = 0 \qquad \forall \bvec{G} \in \Gamma^{*} \\
   & \Longleftrightarrow \sum_{\bvec{G}' \in \Gamma^{*}}^{} e^{-i \bvec{G}' \cdot \bvec{r}} \sum_{\bvec{k} \in \mathcal{K}}^{} a_{\bvec{k}}(\bvec{G}) c_{\bvec{k}} = 0
  \end{split}
\end{equation}
Since
\begin{equation}
  a_{\bvec{k}}(\bvec{G}) = [\Lambda_{\bvec{k}}(\bvec{G})]_{1,1} = \int_{\RR^{2}} e^{i \bvec{G} \cdot \bvec{r}} \braket{u_{1, \bvec{k}}(\bvec{r}), u_{1, \bvec{k}}(\bvec{r})}_{\CC^{4}} \dd{\bvec{r}} = \int_{\RR^{2}} e^{i \bvec{G} \cdot \bvec{r}} \| u_{1, \bvec{k}}(\bvec{r}) \|^{2} \dd{\bvec{r}}.
\end{equation}
Since $u_{\bvec{k}}(\bvec{r})$ is $C^{\infty}$ as a function of $\bvec{r}$~\cite{BeckerHumbertZworski2022a}, by the Fourier inversion theorem, we conclude that~\cref{eq:full-rank-appendix-1} holds only if
\begin{equation}
  \sum_{\bvec{k} \in \mathcal{K}}^{} \| u_{1,\bvec{k}}(\bvec{r}) \|^{2} c_{\bvec{k}} = 0 \qquad \forall \bvec{r} \in \Omega.
\end{equation}
Since we have an explicit formula for $u_{1, \bvec{k}}(\bvec{r})$ in terms of Jacobi theta functions in \cref{eq:jacobi-def}, we can prove the lemma by an explicit calculation.
Since $u_{1, \bvec{k}}(\bvec{r})$ is the only flat band eigenfunction in this calculation, we will suppress the band index and simply write $u_{\bvec{k}}(\bvec{r})$.

\subsection{Notational Setup}
We follow the conventions of \cite{BeckerHumbertZworski2022a} and use the complexified notation for variables in $\RR^2$.
Specifically, we identify the vectors $\bvec{r} = (x, y)$ and $\bvec{k} = (k_{x}, k_{y})$ with the complex numbers $z := x + iy$ and $k = k_{x} + i k_{y}$.
In this notation, bold face variables are used to denote vectors in $\RR^{2}$, and non-bold face variables denote complex scalars.

Defining $\omega = e^{2 \pi i / 3}$ we can write the moir{\'e} lattice, $\Gamma$, and the moir{\'e} reciprocal lattice, $\Gamma^{*}$
\begin{equation}
  \Gamma := \omega \ZZ \oplus \ZZ \qquad \Gamma^{*} := \frac{4 \pi i}{\sqrt{3}} \Gamma
\end{equation}
We also recall the definition of the (shifted) Jacobi theta functions which are defined for any $\tau \in \CC$ with $\Im{(\tau)} > 0$ 
\begin{equation}
  \label{eq:jacobi-def}
  \begin{split}
    \theta(z | \tau)
    & = - \sum_{n \in \ZZ}^{}  e^{i \pi \tau (n + \frac{1}{2})^{2}} e^{2 \pi i (n  + \frac{1}{2}) (z  +\frac{1}{2})} \\
    & = - \sum_{n \in \ZZ}^{} q^{(n + \frac{1}{2})^{2}} e^{2 \pi i (n  + \frac{1}{2}) (z  +\frac{1}{2})} \\
  \end{split}
\end{equation}
where we define $q := e^{i \pi \tau}$ to compress notation.
Note that since $\Im{(\tau)} > 0$, the sum in~\cref{eq:jacobi-def} is absolutely convergent.

The following lemma is an immediate consequence of the Vandermonde determinant formula, and will be useful in the final steps of the proof:
\begin{lemma}
  \label{lem:vandermonde}
  Let $\{ \alpha_{n} \}_{n=1}^{N} \subseteq [0, 1)$ be a collection distinct scalars and let $d \in \CC$ be a constant.
  Then the $N \times N$ matrix $[A]_{mn} = e^{2 \pi i m ( \alpha_{n} - d)}$ where $m, n \in \{ 1, \cdots, N \}$ is full rank.
\end{lemma}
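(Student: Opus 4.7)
The plan is to reduce the statement directly to the classical Vandermonde determinant formula. First I would strip away the constant $d$: each row of $A$ has a common factor $e^{-2\pi i m d}$, so we may write $A = D B$ where $D = \diag(e^{-2\pi i d}, e^{-4\pi i d}, \dots, e^{-2\pi i N d})$ is an invertible diagonal matrix and $B$ is the $N\times N$ matrix with entries $B_{mn} = e^{2\pi i m \alpha_n}$. Since $D$ is invertible, $A$ is full rank if and only if $B$ is.

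Next I would introduce the substitution $z_n := e^{2\pi i \alpha_n}$, so that $B_{mn} = z_n^m$ for $m \in \{1,\dots,N\}$. Factoring one copy of $z_n$ out of each column, $B = V \diag(z_1, \dots, z_N)$ where $V_{mn} = z_n^{m-1}$ is the standard Vandermonde matrix. By the Vandermonde determinant formula,
\begin{equation}
  \det(B) \;=\; \left(\prod_{n=1}^{N} z_n\right) \prod_{1 \le i < j \le N} (z_j - z_i).
\end{equation}
Because each $z_n$ lies on the unit circle, the prefactor $\prod_n z_n$ is nonzero. Because the $\alpha_n$ are distinct elements of $[0,1)$, the map $\alpha \mapsto e^{2\pi i \alpha}$ is injective on this interval, so the $z_n$ are pairwise distinct, and every factor $z_j - z_i$ is nonzero. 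Hence $\det(B) \ne 0$, and therefore $A$ has full rank.

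There isn't really a hard step here; the only thing to be careful about is the index range. The exponents $m$ run over $\{1, \dots, N\}$ rather than $\{0, \dots, N-1\}$, which is why the extra diagonal factor $\diag(z_1, \dots, z_N)$ appears when reducing to the classical Vandermonde form. The distinctness of the $z_n$ uses the restriction $\alpha_n \in [0,1)$ in an essential way (this interval is a fundamental domain for the map $\alpha \mapsto e^{2\pi i \alpha}$), and the fact that all $z_n$ are automatically nonzero means the diagonal factor causes no trouble.
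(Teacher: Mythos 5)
Your proof is correct and follows exactly the route the paper has in mind: the paper simply declares the lemma ``an immediate consequence of the Vandermonde determinant formula'' without writing it out, and your argument (peel off the diagonal factor from $d$, substitute $z_n = e^{2\pi i\alpha_n}$, peel off one more diagonal factor to land on the classical Vandermonde matrix, and use distinctness of the $\alpha_n$ in $[0,1)$ to get distinct $z_n$) is precisely the standard bookkeeping that makes that claim precise.
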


\subsection{Proof of~\cref{lem:linear-indep}}
Since $\mathcal{K} \subseteq \CC / \Gamma^{*}$, we know that each $k \in \mathcal{K}$ may be written as $k = \frac{4 \pi i}{\sqrt{3}} (\omega \alpha + \beta)$ for some $\alpha, \beta \in [0, 1)$.
We begin by collecting the unique $\alpha$ coordinates and $\beta$ coordinates in the sets $\mathcal{K}_{\alpha}$ and $\mathcal{K}_{\beta}$ respectively.
\begin{equation}
  \begin{split}
    \mathcal{K}_{\alpha} & := \{ \alpha : k = \frac{4 \pi i}{\sqrt{3}} (\omega \alpha + \beta) \text{ for some } k \in \mathcal{K} \} \\
    \mathcal{K}_{\beta} & := \{ \beta : k = \frac{4 \pi i}{\sqrt{3}} (\omega \alpha + \beta) \text{ for some } k \in \mathcal{K} \}
  \end{split}
\end{equation}
Note that both sets are finite since $\mathcal{K}$ is finite.

We define the enlarged set of momenta
\begin{equation}
  \label{eq:k-tilde-def}
  \tilde{\mathcal{K}} := \left\{ \frac{4 \pi i}{\sqrt{3}} (\alpha \omega + \beta) : \alpha \in \mathcal{K}_{\alpha}, \beta \in \mathcal{K}_{\beta} \right\}.
\end{equation}
Note that this enlarged set now satisfies the property that 
\begin{equation}
  \sum_{k \in \tilde{\mathcal{K}}}^{} f(k) = \sum_{\alpha \in \mathcal{K}_{\alpha}}^{} \sum_{\beta \in \mathcal{K}_{\beta}}^{} f\left( \frac{4 \pi i}{\sqrt{3}} (\omega \alpha + \beta)\right),
\end{equation}
which will allow us to separate the sum over $k$ in a later stage in our calculations.
Since $\mathcal{K} \subseteq \tilde{\mathcal{K}}$ and we are interested in whether a set of functions are linearly independent, we may assume without loss of generality that $\mathcal{K} = \tilde{\mathcal{K}}$.

The flat band eigenfunction, $u_{\bvec{k}}(\bvec{r}) \equiv u_{k}(z)$, can be written in terms of the Jacobi theta functions as follows
\begin{equation}
  u_{k}(z) = e^{\Im{(z)} k} \frac{\theta(z + z(k) | \omega)}{\theta(z | \omega)} u_{0}(z) \quad \text{where} \quad z(k) := \frac{\sqrt{3}}{4 \pi i} k.
\end{equation}
Now towards a contradiction, suppose the functions $\{ \| u_{k}(z) \|^{2} : k \in \mathcal{K} \}$ were linearly dependent, then there exists a family of constants $\{ c_{k} \in \CC : k \in \mathcal{K} \}$ not all zero so that for all $z \in \Omega$
\begin{equation}
  \begin{split}
    0 = & \sum_{k \in \mathcal{K}}^{} c_{k} \| u_{k}(z) \|^{2}  \\
    = & \sum_{k \in \mathcal{K}}^{} c_{k} e^{2 \Im{(z)} \Re{(k)}} \frac{|\theta(z + z(k) | \omega)|^{2}}{|\theta(z | \omega)|^{2}} \|u_{0}(z)\|^{2} \\
    = & \frac{\|u_{0}(z) \|^{2}}{|\theta(z | \omega)|^{2}} \sum_{k \in \mathcal{K}}^{} c_{k} e^{2 \Im{(z)} \Re{(k)}} |\theta(z + z(k) | \omega)|^{2}.
  \end{split}
\end{equation}
Since $u_{0}(z)$ and $\theta(z)$ vanish at $z = 0$, which is the only node in $\Omega$, we have $\frac{\| u_{0}(z) \|^{2}}{|\theta(z)|^{2}} \neq 0$ for all $z \in \Omega$~\cite{BeckerHumbertZworski2022a}.
Therefore, the functions $\{ \| u_{k}(z) \|^{2} : k \in \mathcal{K} \}$ are linearly dependent only if there exist $\{ c_{k} \}$ not all zero so that
\begin{equation}
  \sum_{k \in \mathcal{K}}^{} c_{k} e^{2 \Im{(z)} \Re{(k)}} |\theta(z + z(k) | \omega)|^{2} = 0 \qquad \forall z \in \CC
\end{equation}
A straightforward calculation (see~\cref{sec:proof-theta-squared}) shows that for all $z \in \CC$ and $\omega$ so that $\Im{(\omega)} > 0$, we can write the norm of the Jacobi theta function as
\begin{equation}
  \label{eq:theta-squared}
  | \theta(z | \omega) |^{2}  
  = - \sum_{m \in \ZZ}^{} \theta\Big( 2 i \Im{(z)} + \omega m - \frac{1}{2} | 2 i \Im{(\omega)} \Big) q^{m^{2}}  e^{2 \pi i m (z + \frac{1}{2})} 
\end{equation}
where $q = e^{i \pi \omega}$.
Therefore,
\begin{equation}
  \begin{split}
    0 &= \sum_{k \in \mathcal{K}}^{} c_{k} e^{2 \Im{(z)} \Re{(k)}} |\theta(z + z(k) | \omega)|^{2} \\
      & = \sum_{k \in \mathcal{K}}^{} c_{k} e^{2 \Im{(z)} \Re{(k)}} \left( \sum_{m \in \ZZ}^{} \theta\Big( 2 i \Im{(z + z(k))} + \omega m - \frac{1}{2} | 2 i \Im{(\omega)} \Big) q^{m^{2}}  e^{2 \pi i m (z + z(k) + \frac{1}{2})} \right) \\
      & = \sum_{m \in \ZZ}^{} \left( \sum_{k \in \mathcal{K}}^{} c_{k} e^{2 \Im{(z)} \Re{(k)}}  \theta\Big( 2 i \Im{(z + z(k))} + \omega m - \frac{1}{2} | 2 i \Im{(\omega)} \Big) e^{2 \pi i m z(k)} \right) q^{m^{2}}  e^{2 \pi i m (z + \frac{1}{2})}
  \end{split}
\end{equation}
Separating $z$ into its real and imaginary parts $z = x + i y$, we notice that the terms inside the sum over $k$ are independent of $x = \Re{(z)}$ and the only instance of $x$ appears in the complex exponential $e^{2 \pi i m x}$.
Therefore, multiplying both sides by $e^{-2 \pi i \ell x}$ and integrating $x$ from $0$ to $1$ lets us conclude for all $\ell \in \ZZ$ it must be that
\begin{equation}
  \label{eq:sum-k-1}
  0 = \left( \sum_{k \in \mathcal{K}}^{} c_{k} e^{2 \Im{(z)} \Re{(k)}}  \theta\Big( 2 i \Im{(z + z(k))} + \omega \ell- \frac{1}{2} | 2 i \Im{(\omega)} \Big) e^{2 \pi i \ell z(k)} \right) q^{\ell^{2}}  e^{2 \pi i \ell(iy + \frac{1}{2})}.
\end{equation}
Note that the terms outside the sum are non-zero and so the above holds if and only if for all $\ell \in \ZZ$
\begin{equation}
  0 = \sum_{k \in \mathcal{K}}^{} c_{k} e^{2 \Im{(z)} \Re{(k)}}  \theta\Big( 2 i \Im{(z + z(k))} + \omega \ell- \frac{1}{2} | 2 i \Im{(\omega)} \Big) e^{2 \pi i \ell z(k)}.
\end{equation}
We now replace the sum over $k \in \mathcal{K}$ to a sum over $\alpha \in \mathcal{K}_{\alpha}$ and $\beta \in \mathcal{K}_{\beta}$ by setting $k = \frac{4\pi i}{\sqrt{3}} (\omega \alpha + \beta)$ and making the substitutions
\begin{equation}
  \begin{split}
    \Re{(k)} & = -2 \pi \alpha, \\
    \Re{( z(k) )} & = \Re{(\omega)} \alpha + \beta, \\
    \Im{( z(k) )} & = \Im{(\omega)} \alpha
  \end{split}
\end{equation}
With these substitutions~\cref{eq:sum-k-1} becomes
\begin{equation}
  \begin{split}
    0 & = \sum_{\alpha \in \mathcal{K}_{\alpha}} \sum_{\beta \in \mathcal{K}_{\beta}}^{}  c_{\alpha,\beta} e^{-4 \pi \alpha \Im{(z)} }  \theta\Big( 2 i \Im{(z)} + 2 i \Im{(\omega)} \alpha + \omega \ell- \frac{1}{2} | 2 i \Im{(\omega)} \Big) e^{2 \pi i \omega \alpha} e^{2 \pi i \ell \beta} \\
      & = \sum_{\alpha \in \mathcal{K}_{\alpha}}^{} e^{-4 \pi \alpha \Im{(z)} } \theta\Big( 2 i \Im{(z)} + 2 i \Im{(\omega)} \alpha + \omega \ell- \frac{1}{2} | 2 i \Im{(\omega)} \Big) e^{2 \pi i \omega \alpha}  \left(  \sum_{\beta \in \mathcal{K}_{\beta}} e^{2 \pi i \ell \beta} c_{\alpha,\beta} \right).
  \end{split}
\end{equation}
Now for any $\ell \in \ZZ$ and $\alpha \in \mathcal{K}_{\alpha}$ define
\begin{equation}
  \hat{c}_{\alpha, \ell} := e^{2 \pi i \omega \alpha}  \sum_{\beta \in \mathcal{K}_{\beta}} e^{2 \pi i \ell \beta} c_{\alpha,\beta}.
\end{equation}
Since the matrix $A_{\ell, \beta} = e^{2 \pi i \ell \beta}$ (where $\ell \in \{ 1, \cdots, \# |\mathcal{K}_{\beta} | \}$) is of full rank, by~\cref{lem:vandermonde}, the mapping $c_{\alpha, \beta} \mapsto \hat{c}_{\alpha, \ell}$ is invertible.

Hence, to prove the proposition it suffices to show that for all $\ell \in \{ 1, \cdots, \# |\mathcal{K}_{\beta}| \}$
\begin{equation}
  \label{eq:sum-k-2}
  0 = \sum_{\alpha \in \mathcal{K}_{\alpha}}^{} e^{-4 \pi \alpha \Im{(z)} } \theta\Big( 2 i \Im{(z)} + 2 i \Im{(\omega)} \alpha + \omega \ell- \frac{1}{2} | 2 i \Im{(\omega)} \Big) \hat{c}_{\alpha,\ell} \quad \iff \quad \hat{c}_{\alpha, \ell} = 0. 
\end{equation}
Fix some $\ell_{*} \in \{ 1, \cdots, \# | \mathcal{K}_{\beta} | \}$ and suppose that $\hat{c}_{\alpha, \ell_{*}}$ is non-zero for at least one $\alpha$.
Consider the function
\begin{equation}
  z \mapsto \sum_{\alpha \in \mathcal{K}_{\alpha}}^{} e^{4 \pi i \alpha z} \theta\Big( 2z  + 2 i \Im{(\omega)} \alpha + \omega \ell_{*}- \frac{1}{2} | 2 i \Im{(\omega)} \Big) \hat{c}_{\alpha,\ell_{*}}.
\end{equation}
Since the set $\mathcal{K}_{\alpha}$ is finite, the above function is holomorphic on $\CC$ and agrees with the function in~\cref{eq:sum-k-2} on the set $\{ z \in \CC : \Re{(z)} = 0 \}$ (i.e. the imaginary axis).
Since the imaginary axis contains an accumulation point, if the left hand side of~\cref{eq:sum-k-2} holds, then by analytic continuation it must be that for all $z \in \CC$
\begin{equation}
  0 =  \sum_{\alpha \in \mathcal{K}_{\alpha}}^{} e^{4 \pi i \alpha z} \theta\Big( 2z  + 2 i \Im{(\omega)} \alpha + \omega \ell_{*}- \frac{1}{2} | 2 i \Im{(\omega)} \Big) \hat{c}_{\alpha,\ell_{*}}.
\end{equation}
However, as the following lemma shows, this implies that $\hat{c}_{\alpha, \ell_{*}} = 0$ for all $\alpha \in \mathcal{K}_{\alpha}$ completing the proof. 
\begin{lemma}
  Let $\mathcal{K}_{\alpha}$ be any finite set of points so that $\mathcal{K}_{\alpha} \subseteq [0, 1)$, let $\tau, d \in \CC$, so that $\Im{(\tau)} > 0$.
  Then the functions
  \begin{equation}
    \left\{ e^{2 \pi i \alpha z }\theta( z + \tau \alpha + d | \tau ) : \alpha \in \mathcal{K}_{\alpha} \right\}
  \end{equation}
  are linearly independent.
\end{lemma}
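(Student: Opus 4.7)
The plan is to exhibit the functions $f_{\alpha}(z) := e^{2\pi i \alpha z}\,\theta(z+\tau\alpha+d\,|\,\tau)$ as distinct eigenfunctions of the shift operator $T\colon f(\cdot)\mapsto f(\cdot+1)$, and then conclude linear independence from a standard Vandermonde/distinct-eigenvalue argument.

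First I would compute the effect of $T$ on $\theta(\cdot\,|\,\tau)$ using the series definition given in~\cref{eq:jacobi-def}. Replacing $z$ with $z+1$ introduces a factor $e^{2\pi i(n+1/2)} = -1$ inside every term of the sum, so one gets the well-known quasi-periodicity
\begin{equation}
\theta(z+1\,|\,\tau) = -\theta(z\,|\,\tau).
\end{equation}
Combined with the factor $e^{2\pi i\alpha}$ from $e^{2\pi i\alpha(z+1)}$, this yields
\begin{equation}
f_{\alpha}(z+1) = -e^{2\pi i\alpha}\,f_{\alpha}(z) =: \lambda_{\alpha}\,f_{\alpha}(z),
\end{equation}
so each $f_{\alpha}$ is an eigenvector of $T$ with eigenvalue $\lambda_{\alpha} = -e^{2\pi i\alpha}$. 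Because $\alpha\mapsto e^{2\pi i\alpha}$ is injective on $[0,1)$, the eigenvalues $\{\lambda_{\alpha}\}_{\alpha\in\mathcal{K}_{\alpha}}$ are pairwise distinct.

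Next, enumerate $\mathcal{K}_{\alpha} = \{\alpha_{1},\dots,\alpha_{N}\}$ and suppose $\sum_{j} c_{\alpha_{j}} f_{\alpha_{j}}(z) \equiv 0$. Applying $T^{k}$ for $k=0,1,\dots,N-1$ to this identity and evaluating at any fixed $z\in\CC$ gives the linear system
\begin{equation}
\sum_{j=1}^{N} \lambda_{\alpha_{j}}^{k}\,\bigl(c_{\alpha_{j}} f_{\alpha_{j}}(z)\bigr) = 0, \qquad k=0,1,\dots,N-1,
\end{equation}
whose coefficient matrix $[\lambda_{\alpha_{j}}^{k}]_{k,j}$ is Vandermonde in the distinct nodes $\lambda_{\alpha_{j}}$ and therefore invertible (this is precisely~\cref{lem:vandermonde} applied to this setting). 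Consequently $c_{\alpha_{j}} f_{\alpha_{j}}(z) = 0$ for every $j$ and every $z$.

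Finally, each $f_{\alpha}$ is entire (a product of a nowhere-vanishing exponential and a shifted theta function) and is not identically zero, since the Jacobi theta function $\theta(\,\cdot\,|\,\tau)$ has only the isolated zero set $\mathbb{Z}+\tau\mathbb{Z}$. Hence $c_{\alpha_{j}}=0$ for all $j$, proving linear independence. I do not anticipate a serious obstacle here: the only delicate point is verifying the quasi-periodicity from the specific sign convention used in~\cref{eq:jacobi-def}, which is a direct computation; everything else is the classical observation that eigenvectors of a linear operator with distinct eigenvalues are linearly independent.
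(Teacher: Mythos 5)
Your proof is correct and follows essentially the same route as the paper's: both exploit the quasi-periodicity $\theta(z+1\,|\,\tau)=-\theta(z\,|\,\tau)$ to obtain a Vandermonde system in the distinct quantities $-e^{2\pi i\alpha}$, and both conclude by noting the theta functions have only isolated zeros. The only cosmetic difference is that you phrase the iterated shift as an eigenvector-of-$T$ argument and conclude $c_{\alpha}f_{\alpha}\equiv 0$ before invoking nonvanishing, whereas the paper evaluates at a chosen point $z_{*}$ where every $\theta(z_{*}+\tau\alpha+d\,|\,\tau)\neq 0$ and assembles the matrix equation $ADc=0$ directly.
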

\begin{proof}
  Standard results on the Jacobi theta function $\theta(z | \tau)$ show that the zeros only occur at the discrete points $z = m \tau + n$ where $m, n \in \ZZ$ \cite{MumfordMusili2007}.
  Therefore, since $\alpha \in [0, 1)$, we may choose $z_{*}$ so that $\theta(z_{*} + \tau \alpha + d)$ does not vanish for all $\alpha$.
  
  Towards a contradiction, suppose that there existed constants $\{ c_{\alpha} \}_{\alpha \in \mathcal{K}_{\alpha}}$ not all zero so that for all $z \in \CC$:
  \begin{equation}
    \sum_{\alpha \in \mathcal{K}_{\alpha}}^{} c_{\alpha} e^{2 \pi i \alpha z}\theta( z + \tau \alpha + d | \tau ) = 0.
  \end{equation}
  But recall that $\theta(z + 1 | \tau) = (-1) \theta(z | \tau)$ so it must also be that for all $z$
  \begin{equation}
    \begin{split}
      0 & = \sum_{\alpha \in \mathcal{K}_{\alpha}}^{} c_{\alpha} e^{2 \pi i \alpha (z + 1)} \theta( z + 1 + \tau \alpha + d | \tau ) \\
        & = \sum_{\alpha \in \mathcal{K}_{\alpha}}^{} c_{\alpha} e^{2 \pi i (\alpha + \frac{1}{2}) } e^{2 \pi i \alpha z} \theta( z + \tau \alpha + d | \tau ). 
    \end{split}
  \end{equation}
  Repeating this shift $m$ times, we see that it must be for all $m \in \ZZ$
  \begin{equation}
    \label{eq:shift-jacobi}
    0 = \sum_{\alpha \in \mathcal{K}_{\alpha}}^{} c_{\alpha} e^{2 \pi i m (\alpha + \frac{1}{2}) } e^{2 \pi i \alpha z} \theta( z + \tau \alpha + d | \tau ).
  \end{equation}
  Plugging in $z = z_{*}$ to the above equalities, we can rewrite collect constraints from \cref{eq:shift-jacobi} for $m \in \{ 1, \cdots, \# |\mathcal{K}_{\alpha}| \}$ into a matrix equation as $A D c = 0$ where $A$ is the Vandermonde matrix $[A]_{m\alpha} = e^{2 \pi i m (\alpha + \frac{1}{2}) }$, $D$ is the diagonal matrix $[D]_{\alpha\beta} = \delta_{\alpha\beta}  e^{2 \pi i \alpha z_{*}} \theta( z_{*} + \tau \alpha + d | \tau )$, and $c$ is a vectorization of $c_{\alpha}$.
  Since $A$ is full rank by~\cref{lem:vandermonde} and the diagonal entries of $D$ are never zero by the choice of $z_{*}$, we conclude that $c = 0$ and hence the lemma is proved.
\end{proof}

\subsection{Proof of~\Cref{eq:theta-squared}}
\label{sec:proof-theta-squared}
We begin by using the definition of the Jacobi theta function~\cref{eq:jacobi-def} recalling that $\overline{e^{z}} = e^{\overline{z}}$
\begin{equation}
  \begin{split}
    |\theta(z | \omega)|
    & = \overline{\theta(z | \omega)} \theta(z | \omega) \\
    & = \left( \sum_{n}^{} \overline{q}^{(n + \frac{1}{2})^{2}} e^{-2 \pi i (n + \frac{1}{2}) (\overline{z} + \frac{1}{2})}  \right) \left( \sum_{m}^{} q^{(m + \frac{1}{2})^{2}} e^{2 \pi i (m + \frac{1}{2}) (z + \frac{1}{2})} \right) \\[1ex]
    & = \sum_{n,m}^{} \overline{q}^{(n + \frac{1}{2})^{2}} q^{(m + \frac{1}{2})^{2}} \exp\left(2 \pi i\left[  (m + \frac{1}{2}) (z + \frac{1}{2}) - (n + \frac{1}{2}) (\overline{z} + \frac{1}{2}) \right] \right).
  \end{split}
\end{equation}
Splitting $z$ and $\overline{z}$ into their real and imaginary parts, we see that
\begin{equation}
  \begin{split}
    (m & + \frac{1}{2}) (z + \frac{1}{2}) - (n  + \frac{1}{2}) (\overline{z} + \frac{1}{2}) \\[1ex]
       & = (m - n) (\Re{(z)} + \frac{1}{2}) + i (m + n + 1) \Im{(z)} \\[1ex]
       & = (m - n) (z + \frac{1}{2}) + i ( 2 n + 1) \Im{(z)}. 
  \end{split}
\end{equation}
So we conclude that
\begin{equation}
  \begin{split}
    | \theta(z | \omega) |^{2} 
    & = \sum_{n, m \in \ZZ}^{} \overline{q}^{(n + \frac{1}{2})^{2}} q^{(m + \frac{1}{2})^{2}} \exp\left(2 \pi i [ (m - n) (z + \frac{1}{2}) + i (2 n + 1) \Im{(z)} ] \right) \\
    & = \sum_{n, m \in \ZZ}^{} \overline{q}^{(n + \frac{1}{2})^{2}} q^{(m + n + \frac{1}{2})^{2}} \exp\left(2 \pi i \left[ m (z + \frac{1}{2}) + 2 i (n + \frac{1}{2}) \Im{(z)} \right] \right) \\
  \end{split}
\end{equation}
where we have performed a change of variables $m \mapsto m + n$.
Now we can combine the two $q$ terms to get
\begin{equation}
  \begin{split}
    \overline{q}^{(n + \frac{1}{2})^{2}} q^{(m + n + \frac{1}{2})^{2}}
    & = \overline{q}^{(n + \frac{1}{2})^{2}} q^{(n + \frac{1}{2})^{2} + 2 (n + \frac{1}{2}) m + m^{2}} \\
    & = |q|^{2 (n + \frac{1}{2})^{2}}  q^{2 (n + \frac{1}{2}) m + m^{2}}.
  \end{split}
\end{equation}
Hence
\begin{equation}
  \label{eq:jacobi-theta-simplify-1}
  | \theta(z | \omega) |^{2} 
  = \sum_{n, m \in \ZZ}^{} |q|^{2 (n + \frac{1}{2})^{2}}  q^{2 (n + \frac{1}{2}) m + m^{2}} \exp\left( 2 \pi i \left[ m ( z + \frac{1}{2} ) + 2 i ( n + \frac{1}{2}) \Im{(z)} \right]  \right).
\end{equation}
Collecting all the terms depending on $n$ together gives
\begin{equation}
  | \theta(z | \omega) |^{2}  
  = \sum_{m \in \ZZ}^{} \left( \sum_{n \in \ZZ}^{}  |q|^{2 (n + \frac{1}{2})^{2}}  q^{2 (n + \frac{1}{2}) m} e^{2 \pi i [ 2 i (n + \frac{1}{2}) \Im{(z)} ] } \right) q^{m^{2}} e^{2 \pi i [ m (z + \frac{1}{2}) ]}.
\end{equation}
Recalling that $q := e^{i \pi \omega}$, we can rewrite the sum over $n$ in terms of a Jacobi theta function
\begin{equation}
  \begin{split}
    \sum_{n \in \ZZ}^{}  |q|^{2 (n + \frac{1}{2})^{2}}  q^{2 (n + \frac{1}{2}) m} e^{2 \pi i [ 2i (n + \frac{1}{2}) \Im{(z)}] }
    & = \sum_{n \in \ZZ}^{} |q|^{2 (n + \frac{1}{2})^{2}} e^{2 \pi i (n + \frac{1}{2}) [ 2 i \Im{(z)} + \omega m]} \\[1ex]
    & = \theta\Big( 2 i \Im{(z)} + \omega m - \frac{1}{2} | 2 i \Im{(\omega)} \Big) 
  \end{split}
\end{equation}
where we have used that $|q| = |e^{i \pi \omega}| = e^{i \pi (i \Im{(\omega)})}$. Hence
\begin{equation}
  \label{eq:theta-squared-1}
  | \theta(z | \omega) |^{2}  
  = \sum_{m \in \ZZ}^{} \theta\Big( 2 i \Im{(z)} + \omega m - \frac{1}{2} | 2 i \Im{(\omega)} \Big) q^{m^{2}}  e^{2 \pi i m (z + \frac{1}{2})}.
\end{equation}

\section{Calculation of Double Commutator}
\label{sec:proof-gs-prop-2a}
We begin this section by recalling the definitions of \(\hat{\mathcal{N}}_{\bvec{k}}\) for \(\bvec{k} \in \mathcal{K}\) and \(\widehat{\rho}(\bvec{q}')\) for \(\bvec{q}' \in (\mathcal{K} + \Gamma^{*}) \setminus \Gamma^{*}\):
\begin{equation}
  \begin{split}
    \hat{\mathcal{N}}_{\bvec{k}} & := \hat{n}_{+, \bvec{k}} + \hat{n}_{-, (-\bvec{k})} \\
    \widehat{\rho}(\bvec{q}') & := \sum_{\bvec{k}}^{} a_{\bvec{k}}(\bvec{q}') \hat{C}_{+, \bvec{k}, (\bvec{k} + \bvec{q}')} + \overline{a_{\bvec{k}}(\bvec{q}')} \hat{C}_{-, \bvec{k}, (\bvec{k} + \bvec{q}')}
  \end{split}
\end{equation}
Using the CAR, one can easily check that
\begin{equation}
  \Big[ \hat{n}_{\pm,\bvec{k}},~ \hat{C}_{\pm, \bvec{k}', (\bvec{k}' + \bvec{q}')} \Big] =
  \begin{cases}
    \hat{C}_{\pm, \bvec{k}, (\bvec{k} + \bvec{q}')} & \bvec{k} = \bvec{k}' \\
    - \hat{C}_{\pm, (\bvec{k} - \bvec{q}'), \bvec{k}} & \bvec{k} = \bvec{k}' + \bvec{q}' \\
    0 & \text{otherwise}
  \end{cases}
\end{equation}
and that \([\hat{n}_{\pm, \bvec{k}}, \hat{C}_{\mp, \bvec{k}', (\bvec{k}' + \bvec{q}')}] = 0 \).
Using this calculation, the commutator between \(\hat{\mathcal{N}}_{\bvec{k}}\) and \(\widehat{\rho}(\bvec{q}')\) can be calculated to be
\begin{equation}
  \begin{split}
    \Big[ \hat{\mathcal{N}}_{\bvec{k}},~ \widehat{\rho}(\bvec{q}') \Big] & = a_{\bvec{k}}(\bvec{q}') \hat{C}_{+, \bvec{k}, (\bvec{k} + \bvec{q}')} - a_{\bvec{k} - \bvec{q}'}(\bvec{q}') \hat{C}_{+, (\bvec{k} - \bvec{q}'), \bvec{k}} \\[.5ex]
                                                                   & \hspace{1em} +  \overline{a_{-\bvec{k}}(\bvec{q}')} \hat{C}_{-, (-\bvec{k}), (-\bvec{k} + \bvec{q}')} - \overline{a_{-\bvec{k} - \bvec{q}'}(\bvec{q}')} \hat{C}_{-, (-\bvec{k} - \bvec{q}'), (-\bvec{k})}.
  \end{split}
\end{equation}
Now notice that \(N_{\bvec{k} + \bvec{q}'}\) only involves momenta \(\bvec{k} + \bvec{q}'\) and \(- (\bvec{k} + \bvec{q}')\).
Since these momenta only appear in the first and last term above, commuting with \(\hat{\mathcal{N}}_{\bvec{k} + \bvec{q}'}\) reduces the number of terms from \(4\) to \(2\):
\begin{equation}
  \Bigg[  \hat{\mathcal{N}}_{\bvec{k} + \bvec{q}'},~ \Big[ \hat{\mathcal{N}}_{\bvec{k}},~ \widehat{\rho}(\bvec{q}') \Big]  \Bigg] 
  = a_{\bvec{k}}(\bvec{q}') \hat{C}_{+, \bvec{k}, (\bvec{k} + \bvec{q}')} - \overline{a_{-\bvec{k} - \bvec{q}'}(\bvec{q}')} \hat{C}_{-, (-\bvec{k} - \bvec{q}'), (-\bvec{k})}.
\end{equation}

\section{Additional background materials on representation theory}
\label{sec:tools-from-repr-appendix}

The following is based on \cite[Chapter 15]{FultonHarris2004}, and another careful presentation with thoroughly worked examples may be found in \cite[Chapter 6]{Hall2015}. For a quantum spin systems perspective on these constructions see \cite[Chapter 3]{Ragone2024}.

Let $\mathfrak{g}$ be a semisimple Lie algebra and let $V$ be a finite dimensional complex vector space. A representation $(\pi,V)$ is a Lie algebra homomorphism $\pi:\mathfrak{g}\to \mathfrak{gl}(V)$, where $\mathfrak{gl}(V)$ denotes the Lie algebra of linear operators acting on $V$ equipped with the usual matrix commutator $[X,Y] = XY-YX$.
Throughout the following one should have in mind $\mathfrak{g} = \mathfrak{su}(d)$, the $d\times d$ skew-adjoint complex matrices. 
Of particular importance is the adjoint representation, the Lie algebra homomorphism $\text{ad}:\mathfrak{g}\to \mathfrak{gl}(\mathfrak{g})$ given for each $Y\in \mathfrak{g}$ by 
\[
    \text{ad}_Y(X) = [Y,X], \qquad \text{for all }X\in \mathfrak{g}.
\]

Since we only care about complex representations, we may freely pass between the real Lie algebra $\mathfrak{g}$ and its complexification $\mathfrak{g}_\CC$. A standard fact \cite[Proposition 3.39]{Hall2015} ensures that every real Lie algebra representation of $\mathfrak{g}$ on $V$ extends uniquely to a complex Lie algebra representation of $\mathfrak{g}_\CC$ on $V$. The complexification of $\mathfrak{g}=\mathfrak{su}(d)$ is $\mathfrak{g}= \mathfrak{sl}(d, \CC)$, the $d\times d$ traceless complex matrices.

\begin{definition}
Let $\mathfrak{g}$ be a complex semisimple Lie algebra. A \emph{Cartan subalgebra} $\mathfrak{h}$ is a maximal abelian subalgebra in $\mathfrak{g}$ such that for every $H\in \mathfrak{h}$, the adjoint representative $\text{ad}_H:\mathfrak{g}\to \mathfrak{g}$ is diagonalizable.
\end{definition}

To find a Cartan subalgebra of $\mathfrak{sl}(d,\CC)$, start with the standard basis of $\CC^d$, denoted $\{e_j\}_{j=1}^d$ with bra-ket notation $\ket{j}:= e_j$. Define $H_j := \ket{j}\bra{j}$ to be the rank-1 projectors onto these basis vectors. Then a Cartan subalgebra is given by the following subspace of diagonal traceless matrices:
\begin{equation} \label{eqn:cartan subalgebra}
    \mathfrak{h} := \left\{ a_1 H_1 + \dots + a_d H_d: \sum_{j=1}^d a_j=0 \right\} \subseteq \mathfrak{sl}(d,\CC).
\end{equation} We will need to work on the dual space $\mathfrak{h}^*$. There are several natural choices for a basis of $\mathfrak{h}^*$. We will simply take the dual basis: define for $i=1,\dots, d$ linear functionals $L_i:\mathfrak{h}\to \CC$ by $L_i(H_j) = \delta_{ij}$ and note that 
\begin{equation} \label{eqn:weights of cartan}
    \mathfrak{h}^* = \text{span}(L_1,\dots, L_d) / \{ L_1+\dots+L_d=0 \}.
\end{equation} We will often write $L_i$ to denote the equivalence class in $\mathfrak{h}^*$ containing $L_i$. Observe that $L_i(H)$ extracts the $i^{th}$ diagonal entry of $H\in \mathfrak{h}$. 

\begin{definition}
    A linear functional $L\in\mathfrak{h}^*$ is called a \emph{weight} of $(\pi,V)$ if there exists a nonzero vector $v\in V$ such that 
    \[
    \pi(H)v = L(H) v \qquad \text{for all } H\in \mathfrak{h}.
    \] The \emph{weight space} corresponding to $L$ is the set $\{v\in V: \pi(H) v = L(H) v \text{ for all } H\in \mathfrak{h}\}$, and the \emph{multiplicity} of $L$ is the dimension of the corresponding weight space. 
\end{definition} Weights are a tool to track the eigenvalues of the simultaneously diagonalized representatives of $\mathfrak{h}$. Indeed, in the setting of~\cref{thm:highest-wt-vector}, we are working on the tensor representation $(\pi,(\CC^d)^{\otimes N})$ and have distinguished a basis for $\mathfrak{h}$. Then, we have that for any fixed $\ket{i_1}\ket{i_2}\dots\ket{i_N} \in (\CC^d)^{\otimes N}$,
\[
    \pi(H_j) \ket{i_1}\ket{i_2}\dots\ket{i_N} = (L_{i_1}(H_j) + L_{i_2}(H_j) + \dots + L_{i_N}(H_j)) \ket{i_1}\ket{i_2}\dots\ket{i_N},
\] so $\ket{i_1}\ket{i_2}\dots\ket{i_N}$ is a vector of weight $L_{i_1}+ L_{i_2} + \dots + L_{i_N}$. In particular, if we define 
\[
    \mu_j:= (L_{i_1}+ L_{i_2} + \dots + L_{i_N})(H_j),
\] then the tuple of eigenvalues $(\mu_j)_{j=1}^d$ is the same data as the weight $L_{i_1}+ L_{i_2} + \dots + L_{i_N}$. In general one uses weights because, unlike eigenvalues, they do not depend on the choice of basis, but in our context we may simply fix a basis at the outset.

The following definition generalizes the notion of raising and lowering operators, allowing us to ``hop'' between simultaneous eigenvectors of the image of the Cartan subalgebra $\pi(\mathfrak{h})$ in the representation. 

\begin{definition}
    A nonzero linear functional $\alpha\in \mathfrak{h}^*$ is a \emph{root} if there exists a nonzero $X\in\mathfrak{g}$ with 
    \[
        [H,X] = \alpha(H) X \qquad \text{for all } H\in \mathfrak{h}. 
    \] In other words, $\alpha$ is a nonzero weight of the adjoint representation $\text{ad}:\mathfrak{g}\to \mathfrak{gl}(\mathfrak{g})$. We call the weight space corresponding to a root $\alpha$ the \emph{root space} of $\alpha$. 
\end{definition}

The finite set of roots, which thought of as vectors in $\mathfrak{h}^*$ with a suitable inner product, is called a root system. The structure of a semisimple Lie algebra is characterized by its root system, which then dictates the structure of representations, thanks to the following  computation: if $(\pi,V)$ is a representation, $\alpha$ is a root with $X_\alpha$ in the associated root space, and $v\in V$ a weight vector of weight $\beta$, then
\begin{align*}
    \pi(H)\pi(X_\alpha)v &= \pi(X_\alpha)\pi(H) v + [\pi(H),\pi(X_\alpha)] v \\
    &= \pi(X_\alpha)\beta(H) v + \alpha(H) \pi(X_{\alpha}) v\\
    &= (\beta(H)+\alpha(H))\pi(X_\alpha) v.
\end{align*} In other words, given a simultaneous eigenvector $v$ of the image of the Cartan subalgebra $\pi(\mathfrak{h})$, then $\pi(X_\alpha)v$ is also a simultaneous eigenvector of $\pi(\mathfrak{h})$. In this sense we may think of roots (or more precisely root vectors) as generalizing the spin ``raising and lowering'' operators from the $\mathfrak{sl}(2,\CC)$ case. 

We may obtain a compressed description of a root system by making a few choices. Roots always come in such positive and negative pairs, so by choosing a hyperplane in $\mathfrak{h}^*$ we may distinguish a set of positive roots. This is akin to paying attention only to the raising operator for $\mathfrak{sl}(2,\CC)$. Further, we may restrict to a set of simple roots, which is a minimal set of roots which generate the rest of the root system by integer linear combinations. The resultant set of positive simple roots $\Delta$ stores a complete description of $\mathfrak{g}$ and will allow us to state the theorem of highest weight. 
For $\mathfrak{sl}(d,\CC)$, such a set is given by
\begin{equation}
    \Delta:= \{L_1 - L_2, L_2-L_3, \dots , L_{d-1}-L_{d}\}.
\end{equation} Given a root $\alpha$, it is not too hard to find an explicit matrix $X_{\alpha}\in \mathfrak{g}$ in the associated root space. In our specific case, the root $L_i-L_j$, $1\leq i<j\leq d$, has the matrix $E_{ij} := \ket{i}\bra{j}$ in its root space, since
\begin{equation}\begin{split}
    [a_1 H_1+ \dots + a_d H_d, E_{ij}] &= (a_i-a_j)E_{ij}\\
    &= (L_i-L_j)(a_1 H_1+ \dots + a_d H_d) E_{ij} . 
\end{split}\end{equation} We now arrive at a key definition, which generalizes the notion of ``highest spin'' of an irreducible representation of $\mathfrak{sl}(2,\CC)$.

\begin{definition}
    Let $\Delta$ be a set of positive simple roots for a semisimple Lie algebra $\mathfrak{g}$, and let $(\pi,V)$ be a representation of $\mathfrak{g}$. A \emph{highest weight vector} $v\in V$ is a vector annihilated by all of the positive simple root vectors $X_\alpha$, i.e. 
    \[
        \pi(X_\alpha) v =0 \qquad \text{for all } \alpha\in \Delta. 
    \] The \emph{highest weight} of a representation is the weight of its highest weight vector. 
\end{definition}

We finally arrive at a simplified statement of the theorem of highest weight.

\begin{theorem}(Theorem of highest weight)
\begin{enumerate}
    \item Every irreducible representation of $\mathfrak{g}$ has a highest weight whose weight space is one dimensional.
    \item Two irreducible representations of $\mathfrak{g}$ are isomorphic if and only if they have the same highest weight.
    \item If $v$ is a highest weight vector of an irreducible representation $(\pi,V)$, then $V$ is cyclically generated by $v$, i.e.
\[
    V = \{ \pi(X) v : X\in \mathfrak{g}\}.
\]
\end{enumerate}
\end{theorem}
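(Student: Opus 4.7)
The plan is to prove the three parts in sequence, using the weight-space decomposition of $V$ together with the Poincar\'e--Birkhoff--Witt (PBW) theorem to analyze cyclic modules generated by highest weight vectors. The central mechanism is that positive root vectors raise weights by their root while negative root vectors lower them, so maximality under a suitable ordering produces a vector annihilated by all positive simple root vectors, and PBW then determines the structure of what this vector generates.

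For Part 1, I first invoke the standard fact that on any finite-dimensional representation of a semisimple $\mathfrak{g}$, the Cartan $\mathfrak{h}$ acts diagonalizably, giving $V = \bigoplus_{\mu} V_{\mu}$ with only finitely many nonzero weight spaces. Choose a linear functional $\ell \colon \mathfrak{h}^{*}_{\mathbb{R}} \to \mathbb{R}$ that is strictly positive on every simple root in $\Delta$; pick a weight $\lambda$ maximizing $\ell$ and any nonzero $v \in V_{\lambda}$. For each $\alpha \in \Delta$, $\pi(X_{\alpha}) v \in V_{\lambda + \alpha}$, but $\ell(\lambda + \alpha) > \ell(\lambda)$ forces $V_{\lambda + \alpha} = 0$. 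Hence $v$ is a highest weight vector. The cyclic submodule $W := U(\mathfrak{g}) \cdot v$ is a nonzero $\mathfrak{g}$-invariant subspace, so $W = V$ by irreducibility (which establishes Part 3). To conclude Part 1, I show $\dim V_{\lambda} = 1$ using PBW: any element of $W$ is a sum of terms $\pi(Y_{\beta_{1}})^{k_{1}} \cdots \pi(Y_{\beta_{m}})^{k_{m}} \pi(H) \pi(X_{\gamma_{1}})^{\ell_{1}} \cdots \pi(X_{\gamma_{r}})^{\ell_{r}} v$ with $Y_{\beta_{i}}$ negative and $X_{\gamma_{j}}$ positive root vectors. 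Since positive root vectors annihilate $v$ and Cartan elements act by scalars, everything reduces to products of negative root vectors applied to $v$, of weight $\lambda - \sum_{i} k_{i} \beta_{i}$ with $k_{i} \geq 0$. This equals $\lambda$ only when every $k_{i} = 0$, forcing $V_{\lambda} = W_{\lambda} = \mathbb{C} v$.

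For Part 2, I use the diagonal trick. Given two irreducibles $(\pi, V)$ and $(\pi', V')$ sharing highest weight $\lambda$ with highest weight vectors $v, v'$, the pair $(v, v') \in V \oplus V'$ is itself a highest weight vector of weight $\lambda$ under the diagonal action. Set $W := U(\mathfrak{g}) \cdot (v, v')$. The PBW argument above shows $\dim W_{\lambda} = 1$, yet $(V \oplus V')_{\lambda} = V_{\lambda} \oplus V'_{\lambda}$ has dimension $2$, so $W$ is a proper submodule of $V \oplus V'$. The coordinate projections $p \colon W \to V$ and $p' \colon W \to V'$ are $\mathfrak{g}$-equivariant and nonzero, hence surjective by irreducibility of $V$ and $V'$. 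If $\ker p$ were nonzero, it would be a nonzero submodule contained in $0 \oplus V'$, so by irreducibility of $V'$ we would have $\ker p = 0 \oplus V'$; subtracting $(0, v')$ from $(v, v')$ then yields $(v, 0) \in W$, and a symmetric argument forces $V \oplus 0 \subseteq W$, so $W = V \oplus V'$, contradicting properness. Therefore $p$ is an isomorphism, and similarly $p'$ is, giving $V \cong W \cong V'$.

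The main obstacle is not a step of the argument per se but the two black boxes on which it rests: diagonalizability of $\mathfrak{h}$ on every finite-dimensional representation, which requires the structure theory of semisimple Lie algebras (abstract Jordan decomposition together with Weyl's complete reducibility), and the PBW theorem, which is what permits the reduction of $U(\mathfrak{g}) \cdot v$ to products of negative root vectors acting on $v$. Granting these, the remainder is bookkeeping about weight shifts by roots and a Schur-type rigidity argument for cyclic modules.
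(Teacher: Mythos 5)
Your proposal is the standard textbook proof, and on this point there is nothing in the paper to compare it against: the theorem appears in the appendix as quoted background, with the proof deferred to Fulton--Harris and Hall, so any self-contained argument necessarily goes beyond what the paper does. Your route --- weight-space decomposition of $V$, a linear functional strictly positive on the simple roots to extract a weight $\lambda$ of maximal value and hence a vector killed by all positive (simple) root vectors, PBW to show that the cyclic module $U(\mathfrak{g})\cdot v$ has one-dimensional $\lambda$-weight space and exhausts $V$ by irreducibility, and the diagonal-submodule trick in $V\oplus V'$ for uniqueness up to isomorphism --- is exactly the classical argument, and the two facts you black-box (diagonalizability of the Cartan on finite-dimensional modules and the PBW theorem) are legitimately standard inputs rather than gaps.

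Two small caveats. First, part (2) is an ``if and only if,'' and you argue only that equal highest weights force an isomorphism. The converse requires the highest weight of an irreducible representation to be well defined, i.e.\ unique, since otherwise ``the same highest weight'' is not a well-posed condition; this follows quickly from the machinery you already set up (every weight of $V = U(\mathfrak{g})\cdot v$ has the form $\lambda - \sum_i k_i \beta_i$ with $k_i \geq 0$, so if $u$ were a highest weight vector of weight $\mu \neq \lambda$, cyclicity from $u$ would give $\lambda = \mu - \sum_j m_j \beta_j$ as well, and applying your functional $\ell$ to $\sum_i k_i\beta_i + \sum_j m_j\beta_j = 0$ forces all coefficients to vanish), but it should be stated. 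Second, the display in part (3), $V = \{\pi(X)v : X \in \mathfrak{g}\}$, is only correct when read as the $U(\mathfrak{g})$-cyclic (equivalently, smallest invariant) subspace generated by $v$, which is how you silently and correctly interpret it; as a literal set of single applications it is false in general. With these minor additions the proposal is correct.
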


As a special case, we obtain the theorem which appeared in the main text. We mentioned already that the (finite-dimensional) complex representations of $\mathfrak{su}(d)$ and its complexification $\mathfrak{sl}(d,\CC)$ are in one-to-one correspondence. We then leverage that since $\mathfrak{su}(d)$ is the Lie algebra of the simply connected Lie group $\mathsf{SU}(d)$, their representations too are in one-to-one correspondence~\cite[Theorem 5.6]{Hall2015}. Of particular note is that a representation of $\mathsf{SU}(d)$ is irreducible if and only if the representation it induces on $\mathfrak{sl}(d,\CC)$ is irreducible.
\begin{theorem} (Theorem of highest weight for $\mathsf{SU}(d)$)
    Every irreducible representation $(\pi, V)$ of $\mathfrak{sl}(d,\CC)$ has a unique 1 dimensional subspace spanned by $v\in V$ such that 
    \[
        v\in \bigcap_{i=1}^{d-1} \ker(\pi(E_{i,i+1})) , 
    \] where $E_{i,i+1} = \ket{i}\bra{i+1} \in \mathfrak{sl}(d,\CC)$. 
    
    Suppose $v$ has weight $\lambda\in \mathfrak{h}^*$. If $(\tilde{\pi},\tilde{V})$ is another irreducible representation with highest weight $\nu\in \mathfrak{h}^*$, then $V\cong \tilde{V}$ as representations of $\mathfrak{g}$. 
\end{theorem}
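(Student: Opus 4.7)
The plan is to prove the theorem in three steps: existence of a highest weight vector, one-dimensionality of the highest weight space, and rigidity of irreducibles up to their highest weight. First, I would introduce the standard partial order on $\mathfrak{h}^*$ where $\mu \preceq \lambda$ if and only if $\lambda - \mu$ is a non-negative integer linear combination of the positive simple roots in $\Delta$. Since $\mathfrak{h}$ is a commutative subalgebra consisting of elements whose adjoint actions are diagonalizable (in a semisimple Lie algebra, these simultaneously diagonalize in any finite-dimensional representation, a fact I would cite from \cite{FultonHarris2004}), the representation decomposes into a finite direct sum of weight spaces $V = \bigoplus_\mu V_\mu$, and this partial order makes sense on the finite weight set of $(\pi, V)$.

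For the existence claim, I would choose $\lambda$ maximal with respect to $\preceq$ among the weights of $V$ and pick any nonzero $v \in V_\lambda$. Using the root-hopping calculation from \cite[Chapter 15]{FultonHarris2004} already reproduced in the excerpt, $\pi(E_{i,i+1}) v$ has weight $\lambda + (L_i - L_{i+1})$. Maximality forces $\pi(E_{i,i+1}) v = 0$ for every $i \in \{1, \dots, d-1\}$, so $v$ is a highest weight vector. For the one-dimensionality of $V_\lambda$, I would invoke the Poincar\'e--Birkhoff--Witt theorem to write the universal enveloping algebra $U(\mathfrak{sl}(d,\CC))$ as a direct sum of ordered monomials $F \cdot H \cdot E$, where $F$ is a product of lowering operators $E_{j,i}$ ($i < j$), $H$ is a product of Cartan elements, and $E$ is a product of raising operators $E_{i,j}$ ($i < j$). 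Since $V$ is irreducible, $v$ is cyclic, so $V = U(\mathfrak{sl}(d,\CC)) \cdot v$. Every monomial $E \cdot v$ vanishes (by the highest-weight property), and $H \cdot v$ is a scalar multiple of $v$, so every vector in $V$ is a linear combination of vectors of the form $F \cdot v$ of weight $\lambda - \beta$ where $\beta$ is a non-negative combination of positive roots. Such a vector lies in $V_\lambda$ only if $\beta = 0$, i.e.\ $F$ is a constant, giving $V_\lambda = \CC v$ and forcing $\lambda$ to be the unique maximal weight.

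For the final claim, given two irreducible representations $(\pi, V)$ and $(\tilde\pi, \tilde V)$ with the same highest weight $\lambda$ and respective highest weight vectors $v, \tilde v$, I would consider the diagonal representation on $V \oplus \tilde V$ and form the cyclic subrepresentation $W := U(\mathfrak{sl}(d,\CC)) \cdot (v, \tilde v)$. The two coordinate projections $\pi_V : W \to V$ and $\pi_{\tilde V} : W \to \tilde V$ are $\mathfrak{sl}(d,\CC)$-equivariant maps with nonzero images, hence surjective by the irreducibility of $V$ and $\tilde V$. By the same PBW argument applied to $W$ with cyclic vector $(v, \tilde v)$, the highest weight space of $W$ is one-dimensional, from which one deduces (via a standard maximal-weight argument) that $W$ is itself irreducible; in particular both projections are isomorphisms, giving $V \cong W \cong \tilde V$.

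The main obstacle is the PBW-based counting argument used for one-dimensionality and irreducibility of $W$: one needs the ordered-monomial decomposition of $U(\mathfrak{sl}(d,\CC))$ and careful weight bookkeeping to rule out additional highest weight vectors in both $V$ and the diagonal construction. A secondary subtlety is justifying that $\pi(\mathfrak{h})$ is simultaneously diagonalizable on any finite-dimensional $V$, which relies on Weyl's theorem on complete reducibility or on the Jordan decomposition in semisimple Lie algebras; I would cite these results rather than reproduce them.
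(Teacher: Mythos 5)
The paper never proves this statement: it sits in the background appendix and is imported from the standard literature (\cite[Chapter 15]{FultonHarris2004}, \cite[Chapter 6]{Hall2015}), so there is no in-paper argument to compare against. Your proposal is the standard textbook proof (weight-space decomposition, a maximal weight, PBW, and the diagonal cyclic-module trick), and it is essentially sound. One step needs tightening, because the theorem asserts that the \emph{joint kernel} $\bigcap_{i=1}^{d-1}\ker\pi(E_{i,i+1})$ is one-dimensional, whereas your PBW argument as written only shows that the maximal weight space $V_\lambda$ is one-dimensional and consists of highest weight vectors; a priori there could be highest weight vectors of other, non-maximal weights. To close this, note that the joint kernel is $\mathfrak{h}$-stable (since $[H,E_{i,i+1}]=(L_i-L_{i+1})(H)\,E_{i,i+1}$), hence spanned by weight vectors; if $w$ is a nonzero weight vector of weight $\mu$ in the joint kernel, then irreducibility and the same PBW decomposition give $V=U(\mathfrak{n}^-)\cdot w$, so every weight of $V$, in particular $\lambda$, satisfies $\lambda\preceq\mu$; combined with $\mu\preceq\lambda$ this forces $\mu=\lambda$ and $w\in V_\lambda=\CC v$. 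With that line added, existence and uniqueness are complete, and your uniqueness-up-to-isomorphism argument is correct: $W=U(\mathfrak{sl}(d,\CC))\cdot(v,\tilde v)\subseteq V\oplus\tilde V$ has one-dimensional top weight space, hence is indecomposable (any direct summand containing the $\lambda$-weight space contains the cyclic vector and thus all of $W$), so by Weyl's complete reducibility it is irreducible, and Schur's lemma applied to the two surjective coordinate projections gives $V\cong W\cong\tilde V$. Finally, note that the second sentence of the theorem as printed is garbled (it should require $\nu=\lambda$); your reading — equal highest weights imply isomorphism — is the intended one.
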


\subsection{The 2-site Tensor Case for $\mathsf{SU}(2)$}
\label{subsec:2-site tensor decomp sl2}
As an example, we decompose the two-site tensor representation $\CC^2\otimes \CC^2$ into irreps of $\mathsf{SU}(2)$.

First, consider the defining representation $(\Pi,\CC^2)$ of $\mathsf{SU}(2)$, which is given by the map $\Pi(U)=U$ for all $U\in \mathsf{SU}(2)$. This descends to the defining representation of the Lie algebra $\mathfrak{su}(2)$, which uniquely extends to the defining representation $(\pi,\CC^2)$ of $\fraksl(2,\CC)$, which is given by $\pi(X)=X$ for all $X\in\fraksl(2,\CC)$. This has a highest weight vector $\ket{1}$:
\[
    \pi(E_{1,2}) \ket{1} = E_{1,2} \ket{1} = 0.
\] In fact, the highest weight space is 1-dimensional since $E_{1,2}\ket{2}\neq 0$. Thus the defining representation is an irrep of highest weight $L_1$ since 
\[
    \pi(H_1 - H_2)\ket{1} = (H_1-H_2)\ket{1} = \ket{1} = L_1(H_1-H_2)\ket{1}.
\] Let us denote this irrep by $\mathbb{S}_{(1)}(\CC^2)$, or diagrammatically by the box
\[
    \mathbb{S}_{(1)}(\CC^2) = \ydiagram{1}.
\]

We now wish to understand the 2-site tensor representation. At the group level, this is the representation $\Pi:\mathsf{SU}(2)\to GL(\CC^2\otimes \CC^2)$ given by $\Pi(U)=U\otimes U$. This passes to the Lie algebra representation $\pi:\fraksl(2,\CC)\to \mathfrak{gl}(\CC^2\otimes \CC^2)$ given by $\pi(X) = X\otimes \idty + \idty \otimes X$. To decompose this into irreps, we look for highest weight vectors by computing the kernel of $\pi(E_{1,2}) = E_{1,2}\otimes \idty + \idty\otimes E_{1,2}$. We see that this kernel is two dimensional and spanned by the vectors $\ket{1} \ket{1}$ and $\ket{1} \ket{2} - \ket{2} \ket{1}$. We may quickly calculate the weights of these two highest weight vectors:
\begin{equation}\begin{split}
    \pi(H_1-H_2)\ket{1} \ket{1} &= 2L_1(H_1-H_2)\ket{1} \ket{1} \\ 
    \pi(H_1-H_2) \Big( \ket{1} \ket{2} - \ket{2} \ket{1} ) &= (L_1+L_2)(H_1-H_2) \Big( \ket{1} \ket{2} - \ket{2} \ket{1} \Big) = 0.
\end{split}\end{equation} So, $\ket{1} \ket{1}$ is a vector of highest weight $2L_1$, while $\ket{1} \ket{2} - \ket{2} \ket{1}$ is a vector of highest weight $0$, where we mention that $L_1+L_2=0$ by virtue of the quotient defining $\mathfrak{h}^*$ in \cref{eqn:weights of cartan}. Thus, they both cyclically generate irreps, which we will notationally call
\begin{equation}
    \mathbb{S}_{(2)}(\CC^2) := \gen_{\fraksl(2,\CC)} \{ \ket{1} \ket{1} \}, \qquad \mathbb{S}_{(1^2)}(\CC^2) := \gen_{\fraksl(2,\CC)} \{ \ket{1} \ket{2} - \ket{2} \ket{1} \}.
\end{equation}
For the sake of being explicit, we may write bases for these spaces by e.g. repeatedly applying the ``lowering'' operator $\pi(E_{2,1})$:
\begin{equation}
    \mathbb{S}_{(2)}(\CC^2) = \text{span}\{\ket{1} \ket{1}, \ket{1} \ket{2}+\ket{2} \ket{1}, \ket{2} \ket{2}\}, \qquad  \mathbb{S}_{(1^2)}(\CC^2) = \text{span}\{\ket{1} \ket{2} - \ket{2} \ket{1}\}.
\end{equation} Indeed, $\mathbb{S}_{(2)}(\CC^2)$ consists of vectors invariant under the permutation which swaps two sites, while $\mathbb{S}_{(1^2)}(\CC^2)$ consists of vectors which are antisymmetric with respect to this permutation.

In particular note that we have completely decomposed $\CC^2\otimes \CC^2$ into a direct sum of irreps:\footnote{To connect to the language of spin in physics, the irrep $\mathbb{S}_{(1^2)}(\CC^2)$ is often called a singlet, while the irrep $\mathbb{S}_{(2)}(\CC^2)$ is often called a triplet.}
\[
    \CC^2 \otimes \CC^2 \cong \mathbb{S}_{(2)}(\CC^2) \oplus \mathbb{S}_{(1^2)}(\CC^2) . 
\] We will conclude with the suggestive diagrammatic version of the above isomorphism:
\begin{equation} \label{eqn:sl2 2 tensor ydiagram}
    \ydiagram{1}\otimes \ydiagram{1} \cong \ydiagram{2} \oplus \ydiagram{1,1}.
\end{equation}

\subsection{The 2-site Tensor Case for $\mathsf{SU}(4)$} 
\label{subsec:2-site tensor decomp sl4}

The treatment for $\mathsf{SU}(4)$ is similar to that of $\mathsf{SU}(2)$, but now the Cartan subalgebra $\mathfrak{h}$ of $\mathfrak{sl}(4,\CC)$ defined by~\cref{eqn:cartan subalgebra} is 3-dimensional.

First, the defining representation $(\pi,\CC^4)$ given by $\pi(X)=X$ has the highest weight vector $\ket{1}$:
\[
    \ket{1}\in \ker \pi(E_{1,2})\cap \ker \pi(E_{2,3}) \cap \ker \pi(E_{3,4}). 
\] Evidently this is the only vector (up to rescaling) of highest weight. Thus the defining representation is an irrep of highest weight $L_1$ since
\[
    \pi(H_i - H_{i+1}) \ket{1} = L_1(H_{i}-H_{i+1}) \ket{1}, \qquad i=1,2,3.
\] Let us denote this irrep by $\mathbb{S}_{(1)}(\CC^4)$, or diagrammatically by the box
\[
     \mathbb{S}_{(1)}(\CC^4) = \ydiagram{1}.
\]

On to the 2-site tensor representation $\pi:\fraksl(4,\CC) \to \mathfrak{gl}(\CC^4\otimes \CC^4)$, given by $\pi(X) = X\otimes \idty +\idty \otimes X$. We once again compute that the space of highest weight vectors is two-dimensional:
\[
    \text{span}\{\ket{1} \ket{1}, \ket{1} \ket{2} - \ket{2} \ket{1}\} = \ker \pi(E_{1,2})\cap \ker \pi(E_{2,3}) \cap \ker \pi(E_{3,4}) . 
\] The weights of these vectors are \begin{equation}\begin{split}
  \pi(H_1-H_2)\ket{1} \ket{1} &= 2L_1(H_1-H_2)\ket{1} \ket{1} \\ 
    \pi(H_1-H_2)\Big( \ket{1} \ket{2} - \ket{2} \ket{1} \Big) &= (L_1+L_2)(H_1-H_2) \Big( \ket{1} \ket{2} - \ket{2} \ket{1} \Big) = 0.
\end{split}\end{equation} So, $\ket{1} \ket{1}$ is a vector of highest weight $2L_1$, while $\ket{1} \ket{2} - \ket{1} \ket{2}$ is a vector of highest weight $L_1+L_2$. Hence, they both cyclically generate irreps, which we will notationally call
\begin{equation}
    \mathbb{S}_{(2)}(\CC^4) := \gen_{\fraksl(4,\CC)} \{ \ket{1} \ket{1} \}, \qquad \mathbb{S}_{(1^2)}(\CC^4) := \gen_{\fraksl(4,\CC)} \{\ket{1} \ket{2} - \ket{2} \ket{1}  \},
\end{equation} again yielding the decomposition $\CC^4\otimes \CC^4 \cong \mathbb{S}_{(2)}(\CC^4) \oplus \mathbb{S}_{(1^1)}(\CC^4)$ with an identical diagrammatic representation to~\cref{{eqn:sl2 2 tensor ydiagram}}. In this case, by counting symmetric and antisymmetric vectors. we obtain that $\dim(\mathbb{S}_{(2)}(\CC^4)) = 4(4+1)/2 = 10$ and $\dim(\mathbb{S}_{(1^2)}(\CC^4)) = 4(4-1)/2 = 6$.

\end{document}